\documentclass{IEEEtran}
\usepackage{cite}
\usepackage{amsmath,amssymb,amsfonts}
\usepackage{graphicx}
\usepackage{textcomp}
\usepackage{xifthen}
\usepackage{arydshln}

\usepackage{algorithm}
\usepackage{moreverb}
\usepackage{algorithm}
\usepackage{algpseudocode}
\usepackage{scalefnt}
\usepackage{caption}
\usepackage{subcaption}
\usepackage{epsfig} 

\usepackage{amsthm}
\usepackage{mathrsfs}
\usepackage{cite}
\usepackage[usenames]{color}
\usepackage{multirow}
\usepackage{pbox}
\usepackage{accents}
\usepackage{ulem}
\usepackage[hidelinks]{hyperref}


\newcommand{\R}{\mathbb{R}}

\newcommand{\Forall}{\,\forall\,}

\newcommand{\m}[1]{\mathcal{#1}}
\newcommand{\MR}[1]{
	\ifthenelse{\isempty{#1}}{\mathcal{S}_{\infty}}{\mathcal{S}_{#1,\infty}}
}
\newcommand{\Reach}[3]{\mathrm{Reach}^{#1}_{#2}(#3)}

\def\BibTeX{{\rm B\kern-.05em{\sc i\kern-.025em b}\kern-.08em
		T\kern-.1667em\lower.7ex\hbox{E}\kern-.125emX}}

\theoremstyle{definition}
\newtheorem{remark}{\textbf{Remark}}
\newtheorem{definition}{\textbf{Definition}}
\newtheorem{assumption}{\textbf{Assumption}}
\newtheorem{proposition}{\textbf{Proposition}}
\newtheorem{lemma}{\textbf{Lemma}}
\newtheorem{problem}{\textbf{Problem}}
\newtheorem*{PinwheelProblem}{\textbf{Pinwheel Problem} (PP)}
\newtheorem*{WindowsSchedulingProblem}{\textbf{Windows Scheduling Problem} (WSP)}
\newtheorem{example}{Example}

\theoremstyle{theorem}
\newtheorem{theorem}{\textbf{Theorem}}

\begin{document}
	\title{Robust Control Invariance and Communication Scheduling in Lossy Wireless Networked Control Systems}
	\author{Masoud Bahraini, Mario Zanon, Paolo Falcone, and Alessandro Colombo
		\thanks{This work was partially supported by the Wallenberg Artificial Intelligence, Autonomous Systems and Software Program (WASP) funded by Knut and Alice Wallenberg Foundation. }
		\thanks{Masoud Bahraini is with Chalmers University of Technology, G\"oteborg, Sweden (e-mail: masoudb@chalmers.se). }
		\thanks{Mario Zanon is with IMT School for Advanced Studies Lucca, Italy (e-mail: mario.zanon@imtlucca.it).}
		\thanks{Paolo Falcone is with Chalmers University of Technology and DIEF, Universit\`a di Modena e Reggio Emilia, Italy (e-mail: paolo.falcone@unimore.it).}
		\thanks{Alessandro Colombo is with Politecnico di Milano, Italy (e-mail: alessandro.colombo@polimi.it).}}
	
	\maketitle
	\begin{abstract}
		In Networked Control Systems (NCS) impairments of the communication channel can be disruptive to stability and performance. In this paper, we consider the problem of scheduling the access to limited communication resources for a number of decoupled closed-loop systems subject to state and input constraint. The control objective is to preserve the invariance property of local state and input sets, such that constraint satisfaction can be guaranteed. 
			Offline and online, state feedback scheduling policies are proposed and illustrated through numerical examples, also in case the network is subject to packet losses.
	\end{abstract}
	
	\begin{IEEEkeywords}
		Networked control systems, Robust invariance, Windows Scheduling Problem, Pinwheel Problem, Packet loss, Limited bandwidth, Online scheduling
	\end{IEEEkeywords}
	\section{Introduction}
	\label{section:introduction}
	The problem of keeping the state of a dynamical system in a compact subset of its state space over an infinite time horizon in the presence of uncertainties was introduced more than four decades ago \cite{bertsekas1971minimax,Bertsekas72a}. Since then, \textit{reachability} analysis has been exploited in different applications, e.g., it is used in \textit{model checking} and \textit{safety verification} to ensure the system does not enter an unsafe region, some specific situations are avoided, and some properties for an acceptable design are met \cite{prajna2004safety,bouajjani1997reachability,gillula2010design}. Reachability analysis has several applications in \textit{model predictive control} such as \textit{terminal set} design, \textit{persistence feasibility} \cite{rawlings2008unreachable}, and robust invariant set computation for initial conditions \cite{gupta2019full}.
	
	Recently, progress in wireless communication technologies has provided new opportunities but also new challenges to control theorists. On the one hand, the use of communication in the control loop has several benefits, such as reduced system wiring, ease of maintenance and diagnosis, and ease of implementation \cite{zhang2001stability}. On the other hand, wireless links are subject to noise, time varying delay, packet loss, jitter, limited bandwidth, and quantization errors, which are not negligible in the stability and performance analysis. Feedback control systems that are closed through a network are called \textit{networked control systems} (NCS). See \cite{zhang2016survey} for recent advances in the field.
	
	An interesting subset of NCS problems, which has received little attention so far, regards a scenario where a central decision maker is in charge of keeping the states of a set of independent systems within given admissible sets, receiving measurement data and deciding which local controller(s) should receive state measurement(s) update through the common communication channel(s), see Fig.~\ref{fig:network}.
	\begin{figure}[t]
		\centering
		\includegraphics[width=1\columnwidth]{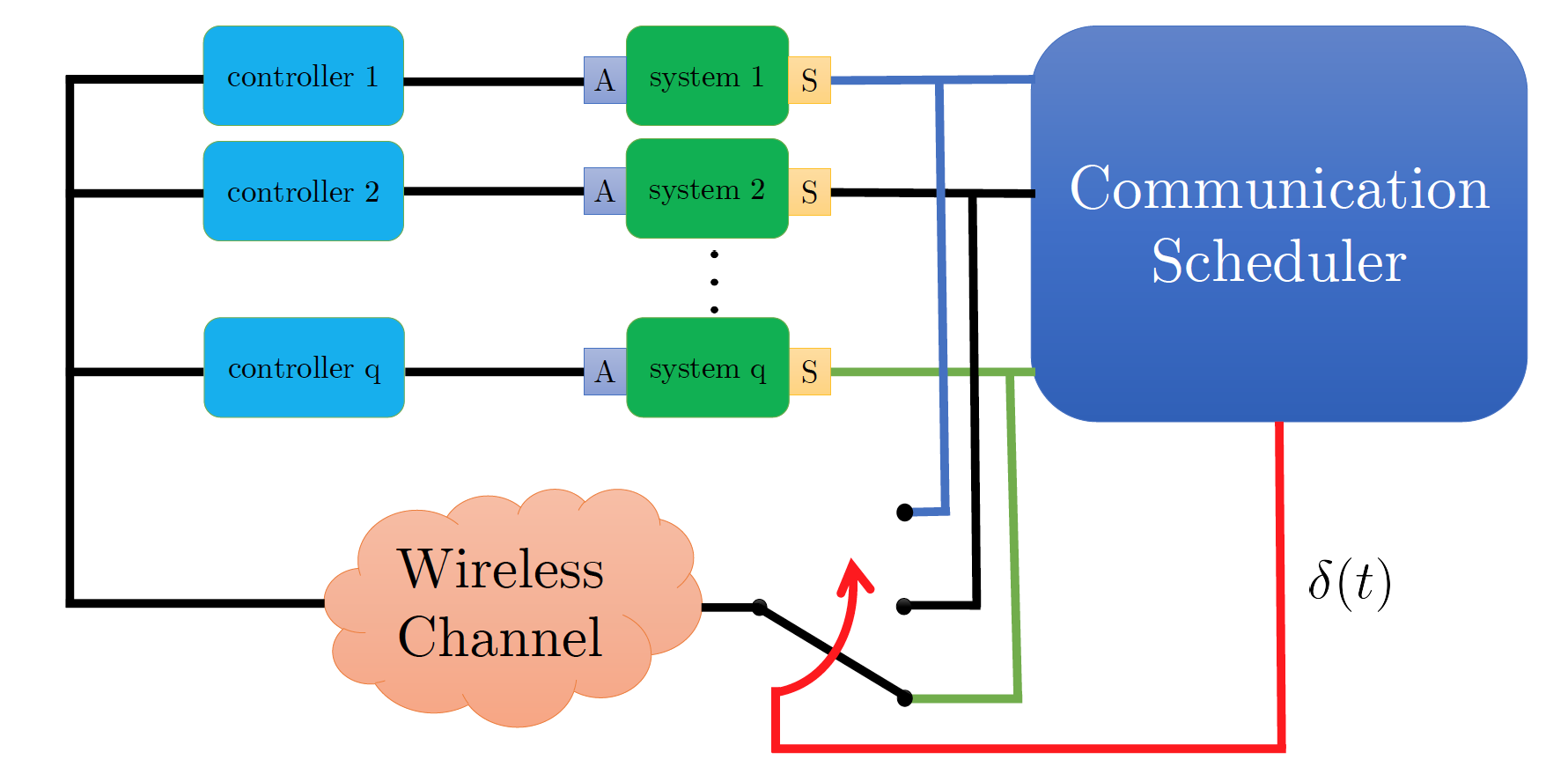}
		\caption{A set of agents sharing a common channel to exchange state measurements data based on decisions of the Central Scheduler.}
		\label{fig:network}
	\end{figure}
	This is a model, for instance, of remotely sensed and actuated robotic systems based on CAN communication or, as we will discuss in our application example, of remote multi-agent control setups for intelligent transportation systems field testing.  This scenario shares some similarities with event-driven control~\cite{Heemels12a}.  However, in our case, the core problem is to guarantee invariance of the admissible sets despite the communication constraints, rather than to ensure stability while minimizing communication costs, which does not explicitly guarantee satisfaction of hard constraints.  As we will see, this shift in focus brings about a corresponding shift in the set of available tools.
	
	We establish a connection between the control design and classical scheduling problems in order to use available results from the scheduling literature. These scheduling problems are the \textit{pinwheel problem} (PP) and the \textit{windows scheduling problem} (WSP)~\cite{Chan92a,Holte89a,bar2003windows}.  Both problems have been extensively studied and, though they are NP-hard, several heuristic algorithms have been proposed, which are able to solve a significant amount of problem instances~\cite{bar2002minimizing,bar2003windows,bar2007windows}.
	
	In this paper we target a reachability and safety verification problem, in discrete time, for the described NCS.  With respect to a standard reachability problem, the limitations of the communication channel imply that only a subset of the controllers can receive state measurements and/or transmit the control signals at any given time.  Therefore, a suitable \textit{scheduler} must be designed concurrently with the control law to guarantee performance. We formalize a general model for the control problem class, and propose a heuristic that solves the schedule design problem exploiting its analogy with  PP. Furthermore, we show that in some cases, our problem is equivalent to either  WSP or  PP. This gives us a powerful set of tools to co-design scheduling and control algorithms, and to provide guarantees on persistent schedulability. Building on these results, we propose online scheduling techniques to improve performance and cope with lossy communication channels.
	
	The rest of the paper is organized as follows. In Section~\ref{section:problem-statement}, the problem is formulated. Relevant mathematical background is stated in Section~\ref{section:math-background}. Our main contributions are divided into an offline and an online strategy, presented in Section~\ref{section:results_offline} and Section~\ref{section:results_online}, respectively. Examples and numerical simulations are provided in Section~\ref{section:numerics}.
	
	\section{Problem statement}
	\label{section:problem-statement}
	In this section, we define the control problem for uncertain constrained multi-agent NCS and provide the background knowledge needed to support Section~\ref{section:results_offline}. We first formulate the problem in general terms; afterwards, we provide some examples with different network topologies.
	\subsection{Problem Formulation}
	For each agent, $x$, $\hat{x}$ and $u$ denote the state of the plant, the state of the observer and the state of the controller, respectively. Note that in NCS the controller typically has memory to cope with packet losses. The discrete-time dynamics of the agent is described by
	\begin{equation}
	\label{eq:general_system}
	z^+=
	\begin{cases}
	F\big(z,v\big), &\textrm{if } \delta = 1,\\
	\hat{F}\big(z,v\big), &\textrm{if } \delta = 0,\\
	\end{cases}
	\end{equation}
	with
	\begin{equation}
	F:=\begin{pmatrix}
	f\big(z,v\big)\\
	g\big(z\big)\\
	c\big(z\big)
	\end{pmatrix}, ~
	\hat{F}:=\begin{pmatrix}
	\hat{f}\big(z,v\big)\\
	\hat{g}\big(z\big)\\
	\hat{c}\big(z \big)
	\end{pmatrix}, ~
	z:=\begin{pmatrix}
	x\\
	\hat{x}\\
	u
	\end{pmatrix},
	\end{equation}
	where $z \in\m{X}\times \m{X}\times \m{U}\subseteq\R^{n}\times\R^{n}\times\R^{r}$ and 
	$v \in \m{V} \subset \R^{p}$ denotes an external disturbance.   The two dynamics correspond to the \textit{connected mode}, i.e., $\delta=1$ and the \textit{disconnected mode}, i.e., $\delta=0$. The latter models the case in which the agent cannot communicate though the network and evolves in open loop.
	
	We consider $q$ agents of the form \eqref{eq:general_system}, possibly with different functions $F_i$, $\hat{F}_i$, $i\in\{1,\ldots,q\}$, and with state spaces of different sizes. We use the set $\boldsymbol{\mathcal{C}}=\{C_j\},~j \in \{1,\ldots,l\}$ of \textit{connection patterns} to represent the sets of agents that can be connected simultaneously.
	
	A connection pattern is an ordered tuple 
	\begin{equation}
	\label{eq:binary_vectors}
	C_j:=(c_{1,j},\ldots,c_{m_j,j}), 
	\end{equation}  
	with~$c_{k,j} \in \{1,\ldots,q\}$, $m_j \leq q$, and
	\begin{equation}
	i\in C_j \Leftrightarrow \delta_i=1 \textrm{ in connection pattern } j.
	\end{equation}
	For example, $C_1=(1,5)$ means that agents $1$ and $5$ are connected when $C_1$ is chosen.
	
	We can now formulate the control task as follows.
	\begin{problem}
		\label{prob:the_problem}
		Design a communication allocation which guarantees that the agent state $z$ remains inside an \textit{admissible} set $\m{A}\subset \m{X}\times\m{X}\times\m{U}$ at all time. 
	\end{problem}

	\subsection{Examples of models belonging to framework \eqref{eq:general_system}}
	The modeling structure introduced in the previous subsection is a general framework for modeling a number of different NCS with limited capacity in the communication links between controller, plant, and sensors.
	
	As a special case, the two dynamics in~\eqref{eq:general_system} could describe the evolution of a NCS, together with a predictor and a dynamic feedback controller. The connection signal~$\delta$, set by a central coordinator in a \emph{star} communication topology (e.g., cellular network), can describe (see Figure~\ref{fig:system_1})
		\begin{itemize}
			\item sensor-controller (SC) networks: $\delta_u=1, \ \delta_s =\delta$;
			\item controller-actuator (CA) networks: $\delta_u=\delta,\ \delta_s =1$;
			\item sensor-controller-actuator (SCA) networks: $\delta_u=\delta_s =\delta$.
	\end{itemize}
	For the three cases and in a linear setting, the functions~$F,~\hat{F}$ can be written as in the following examples.
	
	\begin{example}[Static state-feedback, SC network]
		\label{Exp:case i}
		\begin{equation}
		\label{eq:sc_network}
		F=\begin{pmatrix}
		Ax+BKx+Ev\\
		Ax+BKx\\
		\emptyset
		\end{pmatrix},~\hat{F}=\begin{pmatrix}
		Ax+BK\hat{x}+Ev\\
		A\hat{x}+BK\hat{x}\\
		\emptyset
		\end{pmatrix}.
		\end{equation}
	\end{example}
	
	\begin{example}[Static state-feedback, CA network]
		\begin{multline}
		F=\begin{pmatrix}
		\begin{pmatrix}
		A & 0\\
		0 & 0
		\end{pmatrix}x+
		\begin{pmatrix}
		B\\1
		\end{pmatrix}Kx+
		\begin{pmatrix}
		E\\0
		\end{pmatrix}v\\
		\emptyset\\
		\emptyset
		\end{pmatrix},\\
		\hat{F}=\begin{pmatrix}
		\begin{pmatrix}
		A & B\\
		0 & 1
		\end{pmatrix}x+
		\begin{pmatrix}
		E\\0
		\end{pmatrix}v\\
		\emptyset\\
		\emptyset
		\end{pmatrix},
		\end{multline}
		where $x$ is $n+1$ dimensional if $A$ is $n\times n$, and $K_{n+1}=0$. Element $n+1$ is a memory variable, used to implement a hold function on the last computed input, when no new state information is available.
	\end{example}
	
	\begin{example}[Dynamic state-feedback, SCA network]
		\begin{multline}
		F=\begin{pmatrix}
		\begin{pmatrix}
		A & 0\\
		0 & 0
		\end{pmatrix}x+
		\begin{pmatrix}
		B\\1
		\end{pmatrix}C_{c} u+
		\begin{pmatrix}
		E\\0
		\end{pmatrix}v\\
		Ax+BC_{c}u\\
		A_{c}u+B_{c}x
		\end{pmatrix},\\
		\hat{F}=\begin{pmatrix}
		\begin{pmatrix}
		A & B\\
		0 & 1
		\end{pmatrix}x+
		\begin{pmatrix}
		E\\0
		\end{pmatrix}v\\
		A\hat{x}+BC_{c}u\\
		A_{c}u+B_{c}\hat{x}
		\end{pmatrix},
		\end{multline}
		where $x$ is $n+1$ dimensional if $A$ is $n\times n$, and $K_{n+1}=0$. Element $n+1$ is a memory variable, used to implement a hold function on the last computed input, when no new state information is available.
	\end{example}
	
	\begin{figure}[t]
		\centering
		\includegraphics[width=0.5\columnwidth]{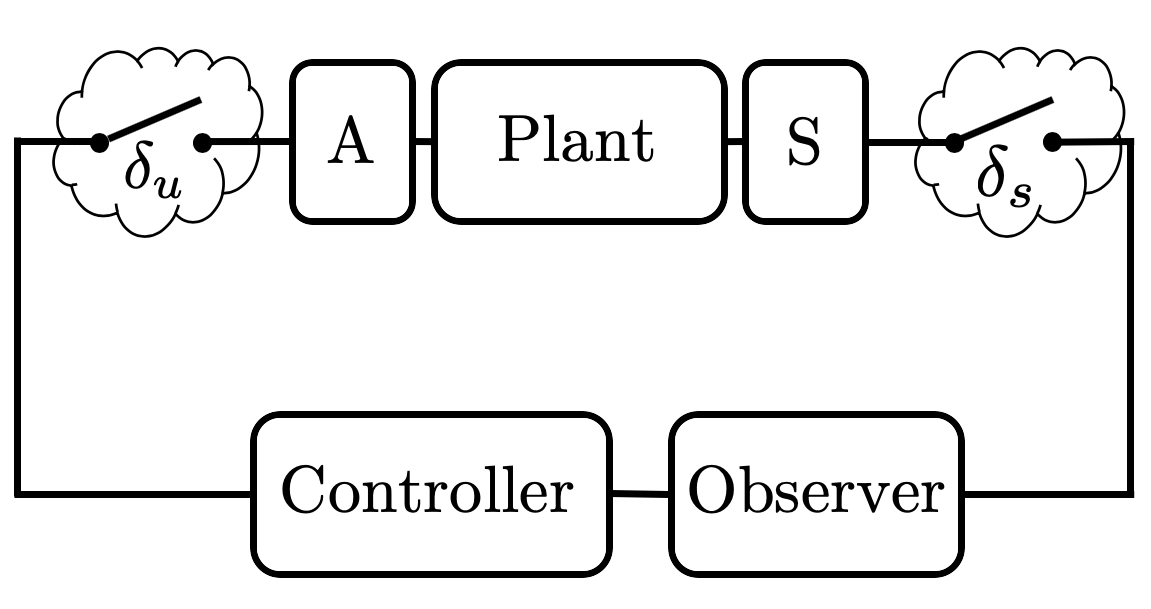}
		\caption{An example of a system described by the model~\eqref{eq:general_system}.}
		\label{fig:system_1}
	\end{figure}
	In all of the above cases, the decision variable $\delta$ is selected by a central scheduler. This might have access to the exact state, or might be subject to the same limitations on state information as the controller.  As we discuss in Section \ref{Sec_online_subsec_online}, the available information influences the scheduler design.
	
	\section{Mathematical background}
	\label{section:math-background}

	In order to translate control problem~\ref{prob:the_problem} into a scheduling problem, we will define the concept of \emph{safe time interval} by relying on robust invariance and reachability analysis.
	
	Set $\m{S}\subset\m{X}\times\m{X}\times\m{U}$ is \textit{robust invariant} for \eqref{eq:general_system} in connected mode, i.e., $\delta=1$ if
	\begin{align}
	F(z,v) \in \mathcal{S} ,&& \Forall z \in \m{S},~ v\in\mathcal{V}.
	\end{align}
	Any robust invariant set $\mathcal{S}$ contains all forward-time trajectories of the agent \eqref{eq:general_system} in the connected mode, provided $z(0) \in\m{S}$, regardless of the disturbance $v$. 
	
	Let $\{\m{S}\}$ be the set of all robust invariant sets of \eqref{eq:general_system} that are contained in the admissible set $\m{A}$. We call $\MR{} \in \{\mathcal{S}\}$ the \textit{maximal robust invariant} set: 
	\begin{align}
	\mathcal{S} \subseteq \mathcal{S}_{\infty},&& \forall \mathcal{S} \in \{\mathcal{S}\}.
	\end{align}
	
	We define the \emph{$1$-step reachable set}  as the set of  states $z$ that can be reached in \emph{one step} from a set of initial states~$\mathcal{O}$ with dynamics $H \in \{F,\hat{F}\}$:
	\begin{align}\label{eq:reach}
	\Reach{H}{1}{\mathcal{O}}:=\{H(z,v): ~ z \in \m{O}, ~ v \in \m{V}\}.
	\end{align}
	The $t$-step reachable set, $t=2,\ldots$ is defined recursively as
	\begin{align}\label{eq:reach_t_steps}
	\Reach{H}{t}{\mathcal{O}}:=\Reach{H}{1}{\Reach{H}{t-1}{\mathcal{O}}}.
	\end{align}
	Numerical tools for the calculation of~$\mathcal{S}_\infty$ and $\Reach{H}{t}{\mathcal{O}}$ can be found in~\cite{borrelli2017predictive}, for linear $H$.

	\begin{definition}[Safe time interval, from \cite{ECC}]
		\label{def:safe_time}
		We define the \emph{safe time interval} for agent $i$ as
		\begin{multline}\label{eq:safe_time}
		\alpha_i:=1+\max_{\tau} \Big{\{} \tau : \Reach{\hat{F}_i}{\tau}{\Reach{F_i}{1}{\MR{i}}} \subseteq \MR{i} \Big{\}}. 
		\end{multline}
	\end{definition}
	Essentially, $\alpha_i$ counts the amount of time steps during which agent $i$ can be disconnected while maintaining its state in $\MR{i}$, provided that its initial state is in $\MR{i}$.
	Note that, by definition of $\MR{i}$, agent $i$ remains in $\MR{i}$ for all future times when connected. 
	
	The following example illustrates the effect of measurement noise on the reachable set and on the safe time interval, in a system with static feedback structured as a SC network.
	\begin{example}\label{ex:alpha_sets}
		Consider an agent described by
		\begin{equation}
		x(t+1)=Ax(t)+Bu(t)+Ev(t)
		\end{equation}
		where
		\begin{equation}
		A=\begin{bmatrix}
		1 & 0.5\\-0.5 & 1
		\end{bmatrix}, \quad B=E=\begin{bmatrix}
		0\\1
		\end{bmatrix},
		\end{equation}
		with admissible sets
		\begin{equation}
		\mathcal{X}=\Big{\{} x \in \mathbb{R}^2: \begin{bmatrix}
		-2\\-2
		\end{bmatrix} \leq x \leq \begin{bmatrix}
		2\\2
		\end{bmatrix} \Big{\}}
		\end{equation}
		and
		\begin{equation}
		\mathcal{U}= \{u  \in \mathbb{R} : |u| \leq 5  \} ,~ \mathcal{V}=\{ v  \in \mathbb{R} : |v| \leq 0.45 \} ,
		\end{equation}
		and $u(t)=-K\hat{x}(t)$. The state is estimated according to
		\begin{equation}
		\hat{x}(t)=\begin{cases}
		x(t),& \text{if }C(t)=1\\
		A\hat{x}(t-1)+Bu(t-1), & \text{if } C(t)=0
		\end{cases}
		\end{equation}
		and the control gain is
		\begin{equation}
		K=\begin{bmatrix}
		0.2263 & 1.2988
		\end{bmatrix}.
		\end{equation}
		Consider a SA network, i.e., $\delta_u=1$ and $\delta_s(t)=C(t),~\forall t>0$. Furthermore, assume the agent is connected at $t=0$, i.e., $C(0)=1$, and disconnected afterwards, i.e., $C(t)=0, \forall t>0$. This implies $\hat{x}(0)=x(0)$ and
		\begin{equation}
		\hat{F}=\begin{bmatrix}
		Ax(t)-BK\hat{x}(t)+Ev(t)\\
		A\hat{x}(t)-BK\hat{x}(t)\\
		\emptyset
		\end{bmatrix}.
		\end{equation}
		The sets of states which can be reached in $t$ steps are displayed in Figure~\ref{fig:reachable_sets} where notation $\mathcal{X}_t$ indicates the reachable set for $x(t)$. 
		One can observe that in this example $\mathcal{X}_4 \not \subseteq \mathcal{S}_{\infty}$, which implies $\alpha=3$.
		\begin{figure}
			\centering
			\includegraphics[width=0.7\columnwidth]{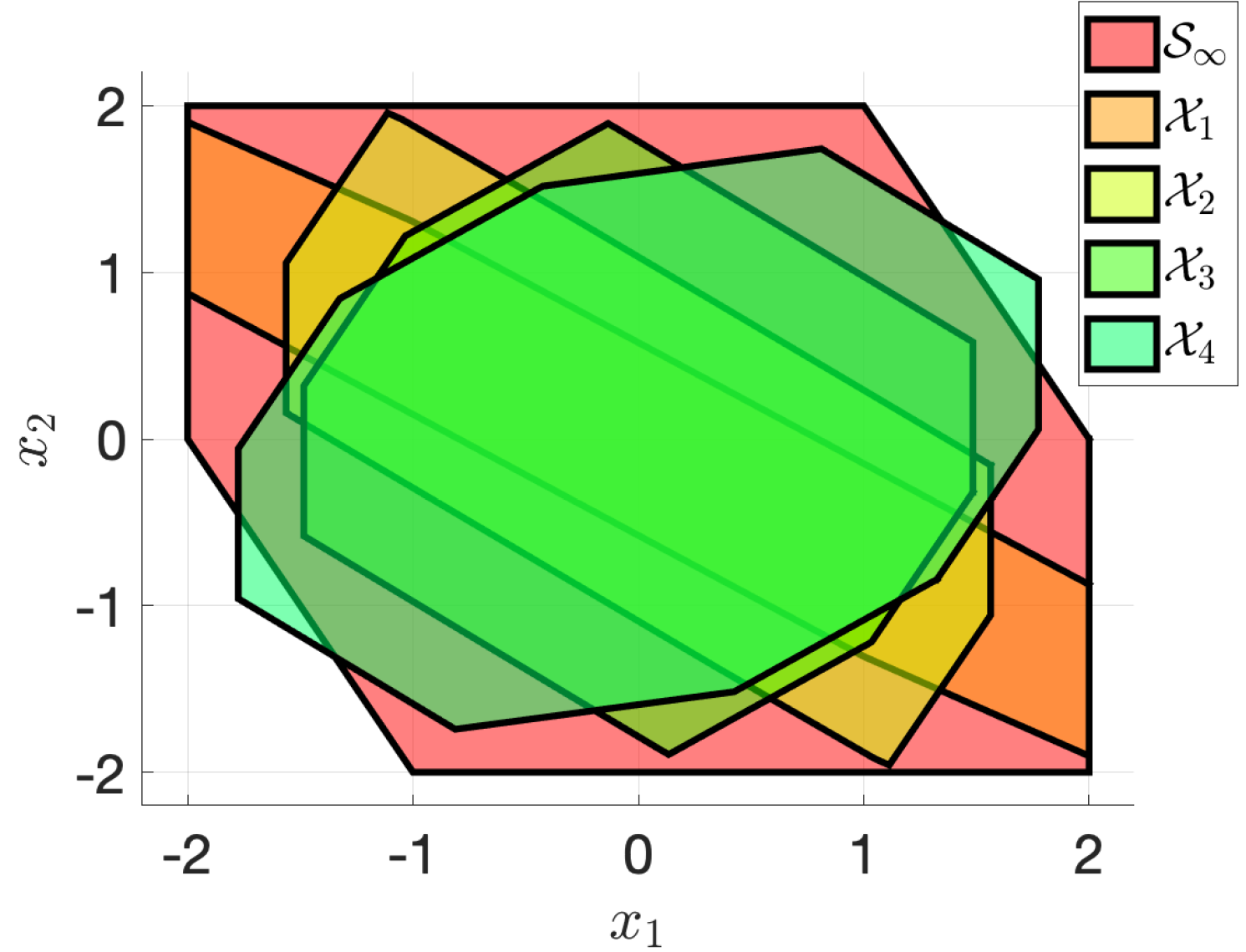}
			\caption
			{
				Maximum robust invariant set and $\mathcal{X}_t=\Reach{\hat{F}}{t}{\mathcal{S}_{\infty}}$ in Example~\ref{ex:alpha_sets} projected along $x$.
			}
			\label{fig:reachable_sets}
		\end{figure}
	\end{example}

	The task of keeping the state of each agent in its admissible set can now be formulated as follows.
	\begin{problem}[P\ref{pr:schedulability}]
		\label{pr:schedulability}
		Given the set of $q$ agents, each described by \eqref{eq:general_system}, an admissible set $\m{A}:= \m{A}_1\times\ldots\times\m{A}_q$, and the set $\boldsymbol{\mathcal{C}}$ of connection patterns \eqref{eq:binary_vectors}, determine if there exists an infinite sequence over the elements of $\boldsymbol{\mathcal{C}}$ such that, 
		\begin{multline}
		z_i(t) \in \MR{i},~\forall z_i(0) \in \MR{i}, \ v_i(t)\in\m{V}_i,\\
		i\in\{1,\ldots,q\},\ t>0.
		\end{multline}
	\end{problem}
	In other words, we seek an infinite sequence of connection patterns
	\begin{equation}
	\label{eq:feasibleSchedule}
	\mathbf{C}:= \big{(}C(0),~C(1),\ldots\big{)},
	\end{equation}
	with~$C(t)\in \boldsymbol{\mathcal{C}}$ that keeps $(x,\hat{x},u)$ of all $q$ agents within $\MR{}$ despite the fact that, due to the structure of set $\boldsymbol{\mathcal{C}}$, at each time step some agents are disconnected. Note that the set $\boldsymbol{\mathcal{C}}$ is assumed to be fixed and given \textit{a priori}, e.g., based on the network structure.

	A \textit{schedule} solving P\ref{pr:schedulability} is any sequence of $C_j$ such that every agent $i$ is connected at least once every $\alpha_i$ steps. Instance $I:=\{\boldsymbol{\mathcal{C}},\{\alpha_i\} \}$ is accepted, denoted
	\begin{equation}
	\label{eq:instanceP1}
	I \in \text{P\ref{pr:schedulability}},
	\end{equation}
	if and only if a schedule $\mathbf{C}$ exists that satisfies P\ref{pr:schedulability}.  
	
	In order to find a feasible schedule for P\ref{pr:schedulability}, we will translate this problem into a PP or WSP. In this section, we formally introduce these two problems and discuss their fundamental properties.
	
	\subsection{The Pinwheel Problem}
	\begin{PinwheelProblem}[From \cite{Han96a}]
		Given a set of integers $\{\alpha_i\}$ with $\alpha_i\geq 1$, determine the existence of an infinite sequence of the symbols $1,\ldots,q$ such that there is at least one symbol $i$ within any subsequence of $\alpha_i$ consecutive symbols. 
	\end{PinwheelProblem}
	A schedule solving PP can be defined by using the notation in Section~\ref{section:problem-statement}, as
	\begin{equation}
	\mathbf{C}:=c(1),c(2),\ldots~,
	\end{equation}
	with $c(t)\in \{1,\ldots,q\}$.
	\begin{definition}[Feasible schedule]
		\label{def:Schedule}
		A schedule $\mathbf{C}$ that solves a schedulability problem is called a \textit{feasible schedule} for that problem.
	\end{definition}
	
	Instance $I:=\{\alpha_i\}$ is accepted by PP, denoted
	\begin{equation}
	\label{eq:instancePP}
	I\in \text{PP},
	\end{equation}
	if and only if a feasible schedule $\mathbf{C}$ exists for the problem.
	
	Conditions for schedulability, i.e., existence of a feasible solution of PP, have been formulated in terms of the $\textit{density}$ of a problem instance $I$, defined as
	\begin{equation}
	\rho(I):=\sum_i \frac{1}{\alpha_i}.
	\end{equation}
	\begin{theorem}[Schedulability conditions]
		\label{thm:schedulability_threshold}
		Given an instance $I:=\{\alpha_i\}$ of PP, 
		\begin{enumerate}
			\item if $\rho(I)> 1$ then $I \notin \text{PP}$,
			\item if $\rho(I)\leq 0.75$ then $I \in \text{PP}$,
			\item if $\rho(I)\leq \frac{5}{6}$ and there exists $i:\alpha_i=2$ then $I \in \text{PP}$,
			\item if $\rho(I)\leq \frac{5}{6}$ and $I$ has only three symbols then $I \in \text{PP}$,
			\item if $\rho(I)\leq 1$ and $I$ has only two symbols $\alpha_1$ and $\alpha_2$ then $I \in \text{PP}$.
		\end{enumerate}
	\end{theorem}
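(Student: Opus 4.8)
\emph{Proof plan.} The statement packages one necessary condition, part~(1), and four sufficiency results, parts~(2)--(5), and I would handle them by separate arguments, in increasing order of difficulty. Part~(1) is a pure counting argument. I would fix an arbitrary schedule and an arbitrary horizon $T\in\N$, partition the first $T$ slots into $\lfloor T/\alpha_i\rfloor$ disjoint blocks of $\alpha_i$ consecutive slots (plus a shorter remainder), and note that the window constraint forces symbol~$i$ to occur in each such block, hence at least $\lfloor T/\alpha_i\rfloor$ times among the first $T$ slots. Since distinct symbols occupy distinct slots, summing over $i$ gives $T\ge\sum_i\lfloor T/\alpha_i\rfloor\ge T\rho(I)-q$, i.e.\ $\rho(I)\le 1+q/T$; letting $T\to\infty$ yields $\rho(I)\le 1$, which is the contrapositive of part~(1).

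For part~(5) I would exhibit an explicit periodic schedule. Assume w.l.o.g.\ $\alpha_1\le\alpha_2$; since $\rho(I)\le 1$ forces $1/\alpha_1\le 1-1/\alpha_2<1$, we get $\alpha_1\ge 2$. Now place symbol~$2$ in every slot whose index is a multiple of $\alpha_2$ and symbol~$1$ in every other slot. Symbol~$2$ then recurs exactly every $\alpha_2$ steps, so it meets every window of length $\alpha_2$; and since symbol~$2$ never occupies two consecutive slots, every window of length $2\le\alpha_1$ contains a symbol~$1$. Hence $I\in\mathrm{PP}$.

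Parts~(2)--(4) are the hard core, and here I would build on the \emph{dyadic} special case: if every $\alpha_i$ is a power of two and $\rho(I)\le 1$, a feasible periodic schedule exists, obtained by assigning symbol~$i$ to a residue class modulo $\alpha_i$ -- such a packing of residue classes is possible precisely because $\sum_i 1/\alpha_i\le 1$ is the Kraft packing condition on a binary tree, and each symbol then recurs with period exactly $\alpha_i$. Rounding each $\alpha_i$ down to the nearest power of two multiplies $1/\alpha_i$ by at most $2$ and never relaxes a constraint, which already gives a weaker $\rho(I)\le 1/2$ sufficiency result. To push the threshold up to $3/4$ in part~(2), and to $5/6$ under the extra structure of parts~(3) and~(4), I would follow the refined, partly non-uniform reductions of~\cite{Holte89a,Chan92a}: for part~(3), schedule the symbol with $\alpha_i=2$ on alternate slots and reduce the remaining symbols to a smaller instance with (roughly) halved deadlines on the residual slots; for part~(4), after sorting the $\alpha_i$, split into finitely many cases on the smallest deadline and combine round-robin on three symbols with the two-symbol construction above. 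I expect the main obstacle to be controlling the density of these residual instances after the floor operations so that the applicable threshold is still satisfied in every case; that bookkeeping, rather than any single clever idea, is where the real work lies, whereas parts~(1) and~(5) are by comparison immediate.
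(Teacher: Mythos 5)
Your proposal is correct where it is complete, and it actually does more than the paper, whose entire proof of this theorem is by citation: condition (1) is dismissed as ``a simple consequence of the definition of density,'' conditions (2)--(3) are attributed to Fishburn and Lagarias, and conditions (4)--(5) to the two- and three-symbol papers. Your counting argument for part (1) is a correct and fully rigorous version of what the paper only asserts (each window of length $\alpha_i$ inside the first $T$ slots must contain symbol $i$, slots hold one symbol each, sum and let $T\to\infty$), and your explicit two-symbol construction for part (5) --- symbol $2$ on the multiples of $\alpha_2$, symbol $1$ elsewhere, using $\alpha_1\ge 2$ and the fact that multiples of $\alpha_2$ are never adjacent --- is a valid self-contained proof of a statement the paper simply outsources. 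For parts (2)--(4) you, like the paper, ultimately defer to the literature; your sketch (dyadic rounding gives the easy $1/2$ threshold via the Kraft condition, and the $3/4$ and $5/6$ thresholds require the refined case analyses of the cited works) correctly identifies where the genuine difficulty lies, and you are honest that you are not reproducing that bookkeeping. The only caveat worth recording is that your citations for the refinements are slightly misaligned with the paper's (the $3/4$ and the ``$5/6$ with some $\alpha_i=2$'' results are due to Fishburn--Lagarias, not to the earlier Holte et al.\ and Chan--Chin papers, which only reach lower thresholds), but since those parts are explicitly deferred rather than proved, this does not constitute a gap relative to what the paper itself establishes.
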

	\begin{proof}
		Condition $1$ is a simple consequence of the definition of density: if $\rho(I)>1$ there are not enough time slots to schedule all symbols $\{\alpha_i \}$.  Conditions $2$ and $3$ are proved in \cite{Fishburn02a}.  Conditions $4$ and $5$ are proved in \cite{Chen04a}.
	\end{proof}
	It has been conjectured that any instance of PP with $\rho(I) \leq \frac{5}{6}$ is schedulable; however, the correctness of this conjecture has not been proved yet \cite{Chan92a}. Determining whether a general instance of PP with $\frac{5}{6}<\rho(I) \leq 1$ is schedulable, is not possible just based on the density $\rho(I)$ (e.g., $\rho(\{2,2\})=1$ is schedulable and $\rho(\{2,3,12\})=\frac{11}{12}$ is not schedulable). Furthermore, determining the schedulability of dense instances, i.e., when $ \rho(I) = 1$, is NP-hard in general \cite{Holte89a}.
	
	Since a schedule for PP is an infinite sequence of symbols, the scheduling search space is also infinite dimensional. Fortunately, the following theorem alleviates this issue. 
	\begin{theorem}[Theorem 2.1 in \cite{Holte89a}]
		All instances of PP that admit a schedule admit a \textit{cyclic schedule}, i.e., a schedule whose symbols repeat periodically.
	\end{theorem}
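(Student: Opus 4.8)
The plan is to encode the relevant ``memory'' of a schedule by a finite state and then use the pigeonhole principle. Given a feasible schedule $\mathbf{C}=c(1),c(2),\ldots$, for each time $t$ and each symbol $i$ let $d_i(t)$ denote the length of the maximal block of symbols different from $i$ that ends at position $t$ (so $d_i(t)=0$ iff $c(t)=i$). Feasibility of $\mathbf{C}$ is precisely the statement that no block of $\alpha_i$ consecutive positions omits symbol $i$, i.e.\ $d_i(t)\le\alpha_i-1$ for all $t$ and all $i$. Hence the state vector $d(t):=(d_1(t),\ldots,d_q(t))$ always lies in the finite set $D:=\prod_{i=1}^{q}\{0,1,\ldots,\alpha_i-1\}$. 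Moreover $d(t)$ is obtained deterministically from $d(t-1)$ and the newly placed symbol $c(t)$: coordinate $c(t)$ is reset to $0$ and every other coordinate is incremented by one.

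First I would apply pigeonhole: since $\{d(t)\}_{t\ge1}$ is an infinite sequence taking values in the finite set $D$, there exist $t_1<t_2$ with $d(t_1)=d(t_2)$. Let $w:=\big(c(t_1+1),\ldots,c(t_2)\big)$, let $n:=t_2-t_1\ge1$, and define the candidate cyclic schedule $\mathbf{C}^{\star}$ by repeating $w$ forever, i.e.\ $c^{\star}(kn+j):=c(t_1+j)$ for all integers $k\ge0$ and $1\le j\le n$. It remains to prove that $\mathbf{C}^{\star}$ is feasible; since it is periodic with period $n$, this proves the theorem.

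To prove feasibility of $\mathbf{C}^{\star}$, for a position $p\ge1$ write $r(p)$ for the unique element of $\{1,\ldots,n\}$ with $r(p)\equiv p\pmod n$, so that $c^{\star}(p)=c(t_1+r(p))$. I would then show, by induction on $p$, that the length of the maximal block of symbols $\ne i$ in $\mathbf{C}^{\star}$ ending at $p$ is at most $d_i\big(t_1+r(p)\big)$. The inductive step uses the common update rule for $d$ in $\mathbf{C}$ and in $\mathbf{C}^{\star}$; the only nonroutine case is a ``seam'' $p\equiv1\pmod n$, where the predecessor state one needs is $d(t_2)$, which equals $d(t_1)$ and is exactly what the previous copy of $w$ leaves behind. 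Since $d_i(t_1+r(p))\le\alpha_i-1$ for every $p$ and every $i$ by feasibility of $\mathbf{C}$, it follows that no block of $\alpha_i$ consecutive symbols of $\mathbf{C}^{\star}$ omits $i$; hence $\mathbf{C}^{\star}$ is feasible.

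I expect the only delicate point to be the induction above: one must be sure that tiling copies of $w$ does not create, across a seam, a run of non-$i$ symbols longer than some run already present in $\mathbf{C}$ between positions $t_1$ and $t_2$. This is exactly where the equality $d(t_1)=d(t_2)$ of full state vectors is needed (recurrence of, say, only the most recent symbol would not suffice), and it is what motivates choosing the finite-state encoding as the bookkeeping device. Everything else --- including the boundary positions $p\le n$, where the relevant run is only shorter than the corresponding internal one --- is routine.
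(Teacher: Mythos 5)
Your argument is correct, and it is essentially the same finite-state pigeonhole construction the paper itself uses to prove its generalization of this result (Theorem~\ref{thm:Cyclic_schedulability}): your $d_i(t)$ is exactly the paper's $d_i(t)=t-\tau_i^{\mathbf{C}}(t)$, the recurrence of the full vector $D(t_1)=D(t_2)$ is the pigeonhole step, and repeating the segment between $t_1$ and $t_2$ is the same cyclic construction. The paper states the PP version only by citation to \cite{Holte89a}, but your seam induction, if anything, spells out more carefully the inequality $D_{\mathrm{c}}(\tau)\leq D(\tau+t_1-1)$ that the paper asserts.
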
  
	
	\subsection{The Windows Scheduling Problem}
	WSP is a more general version of PP, where multiple symbols can be scheduled at the same time. We call \textit{channels} the multiple strings of symbols that constitute a Windows Schedule. 
	\begin{WindowsSchedulingProblem}[From \cite{bar2003windows}]
		Given the set of integers $\{\alpha_i\}$ with $\alpha_i\geq 1$, determine the existence of an infinite sequence of ordered tuples with $m_{\mathrm{c}} \geq 1$ elements of the set $\{1,\ldots,q\}$ such that there is at least one tuple that contains the symbol $i$ within any subsequence of $\alpha_i$ consecutive tuples. 
		
		An instance $\{m_{\mathrm{c}},\{\alpha_i\}\}$ of WSP  is accepted, and denoted as
		\begin{equation}
		\label{eq:instanceWSP}
		I=\{m_{\mathrm{c}},\{\alpha_i\}\} \in \text{WSP},
		\end{equation}
		if and only if a feasible schedule 
		\begin{equation}
		\mathbf{C}=C(1),C(2),\ldots
		\end{equation}
		with
		\begin{equation}
		C(t)=\left( c_1(t),\ldots,c_{m_{\mathrm{c}}}(t) \right)
		\end{equation}
		exists for the problem.
		
	\end{WindowsSchedulingProblem}
	WSP is equivalent to PP when $m_{\mathrm{c}}=1$.  Similarly to PP, if a schedule for the WSP exists, then a cyclic schedule exists.
	Furthermore, the following schedulability conditions are known.
	\begin{theorem}
		\label{thm:schedulability_thresholdmc}
		Given an instance $I=\{m_{\mathrm{c}},\{\alpha_i\}\}$ of WSP, 
		\begin{enumerate}
			\item if $\rho(I)> m_{\mathrm{c}}$ then $I \notin \text{WSP}$ ,
			\item if $\rho(I)\leq 0.5m_{\mathrm{c}}$ then $I \in \text{WSP}$.
		\end{enumerate}
	\end{theorem}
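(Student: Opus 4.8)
My plan treats the two claims separately: claim~1 by a counting (averaging) bound, and claim~2 by rounding the periods down to powers of two, followed by a greedy ``perfect packing'' of the rounded periods into the $m_{\mathrm{c}}$ channels.

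\emph{Claim 1.} I would argue by contraposition. Suppose a feasible schedule $\mathbf{C}=C(1),C(2),\ldots$ exists and let $T$ be a common multiple of the periods $\alpha_1,\ldots,\alpha_q$. For each symbol $i$, partition the first $T$ tuples into $T/\alpha_i$ consecutive blocks of length $\alpha_i$; each block, being a subsequence of $\alpha_i$ consecutive tuples, contains $i$ at least once, so the number $n_i$ of tuples among $C(1),\ldots,C(T)$ that contain $i$ satisfies $n_i\ge T/\alpha_i$. Summing over time instead, $\sum_i n_i=\sum_{t=1}^{T}\bigl|\{\,i:i\in C(t)\,\}\bigr|\le m_{\mathrm{c}}T$, since each $C(t)$ has only $m_{\mathrm{c}}$ coordinates. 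Hence $\rho(I)\,T=\sum_i T/\alpha_i\le\sum_i n_i\le m_{\mathrm{c}}T$, i.e. $\rho(I)\le m_{\mathrm{c}}$; therefore $\rho(I)>m_{\mathrm{c}}$ rules out any feasible schedule.

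\emph{Claim 2.} The idea is to round every period down to a power of two and then schedule the rounded periods with no wasted slots. Put $\beta_i:=2^{\lfloor\log_2\alpha_i\rfloor}$, so that $\beta_i\le\alpha_i<2\beta_i$ and hence $1/\beta_i<2/\alpha_i$; summing, $\sum_i 1/\beta_i<2\rho(I)\le m_{\mathrm{c}}$. Because $\beta_i\le\alpha_i$, any schedule in which symbol $i$ occurs at least once in every $\beta_i$ consecutive tuples is also feasible for the original instance, so it suffices to build such a schedule. I would do this greedily: relabel so that $\beta_1\le\cdots\le\beta_q$, process the symbols in this order, and assign to each processed symbol $i$ a channel index $k_i\in\{1,\ldots,m_{\mathrm{c}}\}$ and a phase $\phi_i\in\{0,\ldots,\beta_i-1\}$, maintaining the invariant that on every channel $k$ the arithmetic progressions $\{\,\phi_j+\beta_j\mathbb{Z}:k_j=k\,\}$ are pairwise disjoint (so the load $\rho_k:=\sum_{j:k_j=k}1/\beta_j$ of channel $k$ stays at most $1$). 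When symbol $i$ is to be placed, every already-placed period $\beta_j$ divides $\beta_i$, so each $\rho_k$ is an integer multiple of $1/\beta_i$ and the progressions on channel $k$ cover exactly $\beta_i\rho_k$ residues modulo $\beta_i$. The total load so far is below $m_{\mathrm{c}}$ and no $\rho_k$ exceeds $1$, so not all loads equal $1$; being multiples of $1/\beta_i$, at least one channel $k$ then has $\rho_k\le 1-1/\beta_i$, hence leaves at least one residue modulo $\beta_i$ uncovered. I set $k_i:=k$ and take $\phi_i$ to be such an uncovered residue, which preserves the invariant. Once all symbols are placed, I define the schedule by letting the $k$-th coordinate of $C(t)$ be the symbol $i$ (unique by disjointness, if it exists) with $k_i=k$ and $t\equiv\phi_i\pmod{\beta_i}$, and an arbitrary fixed symbol otherwise. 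Then symbol $i$ appears in $C(t)$ at every $t\equiv\phi_i\pmod{\beta_i}$, hence at least once in any $\beta_i$ consecutive tuples, and therefore $I\in\text{WSP}$.

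Claim~1 is routine. The substance is in claim~2, specifically in showing that power-of-two periods of total density below $m_{\mathrm{c}}$ can be packed into $m_{\mathrm{c}}$ channels without waste. The two points I would be most careful about --- and which make the greedy scheme correct --- are that rounding down to a power of two costs only a factor of two in density (so the rounded total still lies strictly below $m_{\mathrm{c}}$), and that processing the periods in nondecreasing order forces every channel load to be an integer multiple of the period being inserted; this is exactly what lets me count uncovered residues precisely and conclude that a total load below $m_{\mathrm{c}}$ cannot leave all channels saturated.
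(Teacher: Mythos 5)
Your proof is correct. For comparison: the paper does not actually argue either claim in detail --- it dismisses condition~1 as ``a direct consequence of the definition of schedule density'' and delegates condition~2 entirely to Lemmas~4 and~5 of the cited windows-scheduling reference. Your claim~1 is exactly the averaging argument the paper has in mind (count occurrences $n_i\ge T/\alpha_i$ over a common multiple $T$ of the periods, bound $\sum_i n_i\le m_{\mathrm{c}}T$ by the tuple width), and your claim~2 reconstructs, correctly and self-containedly, the mechanism behind the cited lemmas: round each window down to a power of two at the cost of a factor of two in density, then exploit the fact that when periods are inserted in nondecreasing order every channel load is an integer multiple of $1/\beta_i$, so a total load strictly below $m_{\mathrm{c}}$ guarantees an unsaturated channel with a free residue class modulo $\beta_i$. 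The key checks all go through: $1/\beta_i<2/\alpha_i$ gives $\sum_i 1/\beta_i<m_{\mathrm{c}}$ strictly, divisibility of earlier periods into later ones makes the covered-residue count exact, and disjointness of the arithmetic progressions on each channel makes the resulting tuple assignment well defined. Two small bonuses of your construction worth noting: the schedule you build is cyclic with period $\max_i\beta_i$, and it is in fact a \emph{perfect} schedule in the paper's terminology (no symbol migrates between channels), which is slightly stronger than what the theorem asserts.
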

	\begin{proof}
		Condition 1) is a direct consequence of the definition of schedule density; condition 2) is proved in Lemma 4 and 5 in \cite{bar2003windows}.
	\end{proof}
	
	The results on WSP used next rely on special schedules of a particular form, defined as follows.
	
	\begin{definition}[Migration and perfect schedule, from \cite{bar2003windows,bar2007windows}]
		A \textit{migrating} symbol is a symbol that is assigned to different channels at different time instants of a schedule.  A schedule with no migrating symbols is called a \textit{perfect schedule}.   
	\end{definition}
	An instance $I:=\{m_{\mathrm{c}},\{\alpha_i\}\}$ of WSP is accepted with a perfect schedule if and only if a feasible schedule $\mathbf{C}$ exists for the problem such that
	\begin{equation}
	c_i(t_1)=c_k(t_2) \implies i=k,
	\label{eq:perfect_schedule_condition}
	\end{equation}
	for any $ i,k \in \{1,\ldots,m_{\mathrm{c}}\}$ and $t_1,t_2 \in \mathbb{N}$; we denote this as:
	\begin{equation}
	\label{eq:instanceWSP_Pf}
	I \in \text{WSP-perfect}.
	\end{equation}
	Equation~\eqref{eq:perfect_schedule_condition} ensures that agents do not appear on different channels of the schedule.
	\section{Main Results: offline scheduling}
	\label{section:results_offline}
	In this section, we provide theoretical results and algorithms to solve P\ref{pr:schedulability}. In Subsection~\ref{subsec_V_A}, P\ref{pr:schedulability} is considered in the most general form and we prove that the problem is decidable, i.e., there is an algorithm that determines whether an instance is accepted by the problem \cite{margenstern2000frontier}. In Subsection~\ref{subsec_V_B}, we provide a heuristic to find a feasible schedule. In the last subsection, we consider a fixed number of communication channels. In this case, we show that the scheduling problem is equivalent to the WSP. We propose a technique to solve the scheduling problem in this case and illustrate the merits of the proposed heuristic with respect to the existing ones. We also refute a standing conjecture regarding perfect schedules in WSP~\cite{bar2007windows}.
	
	\subsection{Solution of P\ref{pr:schedulability}}
	\label{subsec_V_A}
	In this subsection we show that P\ref{pr:schedulability} is decidable by showing that if there exists a feasible schedule for the problem, then there also exists a cyclic schedule with bounded period. Finally, we provide an optimization problem to find a feasible cyclic schedule.
	
	Consider sequence $\mathbf{C}$ as the schedule for P\ref{pr:schedulability}, and define sequence 
	\begin{equation}
	\mathbf{D}:=D(1),D(2),\ldots
	\label{eq:def_D}
	\end{equation}
	with the vector $D(t)$ defined as
	\begin{equation}
	\label{Ctilde}
	D(t):=\left( d_1(t),d_2(t),\ldots,d_q(t)\right),
	\end{equation}
	where $d_i(t):=t-\tau_i^{\mathbf{C}}(t)$, and the \textit{latest connection time} $\tau_i^{\mathbf{C}}(t)$ is defined as:
	\begin{equation}
	\tau_i^{\mathbf{C}}(t):=\max \{t^\prime\le t:~ i \in C(t^\prime)\},
	\label{eq:TauLastMeasurement}
	\end{equation}
	where $t^\prime:=0$ when the above set is empty.
	\begin{lemma}
		\label{schedule_Cbar}
		The schedule $\mathbf{C}$ is feasible for P\ref{pr:schedulability} if and only if $0 \leq d_i(t) \leq \alpha_i -1$, $\forall i \in \{1,\ldots,q\},~ \forall t>0$.
	\end{lemma}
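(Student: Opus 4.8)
The plan is to interpret $d_i(t)=t-\tau_i^{\mathbf{C}}(t)$ as the number of consecutive steps agent $i$ has spent disconnected since its last connection (with the initial time playing the role of a connection, since $z_i(0)\in\MR{i}$), and to tie this count to a reachable-set computation governed by Definition~\ref{def:safe_time}. Since $\tau_i^{\mathbf{C}}(t)\le t$ makes $d_i(t)\ge0$ automatic, the content lies entirely in the bound $d_i(t)\le\alpha_i-1$. The key fact I would establish, by induction on $t$, is: for any $z_i(0)\in\MR{i}$ and any disturbances $v_i(\cdot)\in\m{V}_i$, writing $s:=\tau_i^{\mathbf{C}}(t)$ and $k:=d_i(t)=t-s$, the set of states $z_i(t)$ reachable under $\mathbf{C}$ is $\MR{i}$ when $k=0$ and $\Reach{\hat F_i}{k}{\Reach{F_i}{1}{\MR{i}}}$ when $k\ge1$ --- the point being that the connection at $s$ refreshes the agent's estimate, so the connected update at step $s$ depends only on the measured state and produces exactly $\Reach{F_i}{1}{\MR{i}}$, after which $k$ disconnected steps are applied; the boundary case $s=0$ is absorbed by $z_i(0)\in\MR{i}$ together with the synchronised initialisation $\hat x_i(0)=x_i(0)$, which lets the first step act as a connected one.

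Granting this, both directions are short. For sufficiency, assume $0\le d_i(t)\le\alpha_i-1$ for all $i$ and all $t>0$: when $d_i(t)=0$, robust invariance of $\MR{i}$ in connected mode gives $z_i(t)\in\Reach{F_i}{1}{\MR{i}}\subseteq\MR{i}$; when $1\le d_i(t)\le\alpha_i-1$, Definition~\ref{def:safe_time} gives $z_i(t)\in\Reach{\hat F_i}{d_i(t)}{\Reach{F_i}{1}{\MR{i}}}\subseteq\MR{i}$; hence $\mathbf{C}$ satisfies P\ref{pr:schedulability}. For necessity, suppose $d_i(t^\ast)\ge\alpha_i$ for some $i$ and some $t^\ast>0$, chosen minimal; then step $t^\ast$ must be a disconnected step (otherwise $d_i(t^\ast)=0$), so $d_i(t^\ast)=d_i(t^\ast-1)+1=\alpha_i$, and by the key fact the set of reachable $z_i(t^\ast)$ equals $\Reach{\hat F_i}{\alpha_i}{\Reach{F_i}{1}{\MR{i}}}$, which by the maximisation in \eqref{eq:safe_time} (its maximiser is $\alpha_i-1$) is not contained in $\MR{i}$; so some admissible $z_i(0)$ and disturbance sequence push $z_i$ out of $\MR{i}$, and $\mathbf{C}$ violates P\ref{pr:schedulability}.

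The main obstacle is the exactness claim inside the key fact: that the set of $z_i$ reachable immediately after a connection is all of $\Reach{F_i}{1}{\MR{i}}$, and not a strictly smaller subset left over from earlier connected steps. This is exactly what the necessity direction needs, and making it rigorous requires the structural feature that a connection resets the agent from the fresh measurement, so the connected map genuinely forgets the pre-connection reachable set, together with a careful handling of the $s=0$ boundary so that it aligns with the $\Reach{F_i}{1}{\,\cdot\,}$ prefix in \eqref{eq:safe_time} (equivalently, reading P\ref{pr:schedulability}'s invariance requirement as robust to a restart with $z_i$ anywhere in $\MR{i}$ at the last connection). Sufficiency, by contrast, uses only monotonicity of $\Reach{\cdot}{\cdot}{\cdot}$ and the definition of $\alpha_i$.
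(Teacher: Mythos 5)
Your proof attacks a harder statement than the one the paper actually proves. Immediately after stating P\ref{pr:schedulability}, the paper \emph{defines} a schedule solving P\ref{pr:schedulability} as ``any sequence of $C_j$ such that every agent $i$ is connected at least once every $\alpha_i$ steps''; with that reading, Lemma~\ref{schedule_Cbar} merely translates this counting condition into the bound $0\le d_i(t)\le\alpha_i-1$ via the definitions of $\tau_i^{\mathbf{C}}$ and $d_i$, and the paper's two-line proof is exactly that translation. You instead read ``feasible'' at the level of the invariance requirement $z_i(t)\in\MR{i}$ and try to prove that the bound on $d_i$ is equivalent to it. Your sufficiency half is sound, and it is in fact the justification (via monotonicity of $\Reach{\cdot}{\cdot}{\cdot}$ and Definition~\ref{def:safe_time}) for why the counting condition guarantees invariance at all---something the paper leaves implicit in the discussion following Definition~\ref{def:safe_time}.

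The necessity half, however, has the gap you yourself flag, and it is not repairable in general: the set of states reachable at the last connection time $s>0$ is typically a \emph{strict} subset of $\MR{i}$, so the set reachable after the connected step is contained in, but need not equal, $\Reach{F_i}{1}{\MR{i}}$. Consequently $\Reach{\hat{F}_i}{\alpha_i}{\Reach{F_i}{1}{\MR{i}}}\not\subseteq\MR{i}$ does not imply that an actual trajectory can leave $\MR{i}$ after $\alpha_i$ disconnected steps: a connection refreshes $\hat{x}_i$ to the measured $x_i$, but it does not reset $x_i$ to an arbitrary point of $\MR{i}$, so the exactness you need is simply false for generic instances. The equivalence therefore holds only under the paper's definitional convention; if you want the lemma with ``feasible'' meaning the invariance property itself, you should either adopt that convention explicitly or weaken the claim to: the bound on $d_i$ is sufficient for invariance, and necessary only in the worst-case sense that the schedule must be robust to a restart with $z_i$ anywhere in $\MR{i}$ at each connection---which is the reading the rest of the paper implicitly uses.
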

	\begin{proof}
		If $\mathbf{C}$ is a feasible schedule, then $0 \leq d_i(t) \leq \alpha_i -1$ for $\forall i \in \{1,\ldots,q\},~ \forall t>0$ by construction. This implies that agent $i$ is connected at least once every $\left( 1+\max_t d_i(t) \right) \leq \alpha_i$ time instants. Therefore, $\mathbf{C}$ is a feasible schedule.
	\end{proof}
	
	\begin{theorem}
		\label{thm:Cyclic_schedulability}
		Consider the set of integers $\{ \alpha_i \}$  defined in \eqref{eq:safe_time}.  If P\ref{pr:schedulability} has a feasible schedule $\mathbf{C}$, then it has a cyclic schedule whose period is no greater than $m=\prod_{i=1}^q \alpha_i$.
	\end{theorem}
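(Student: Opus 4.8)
The plan is to use Lemma~\ref{schedule_Cbar} to recast feasibility of a schedule as a reachability condition on the auxiliary vector $D(t)$, and then apply a pigeonhole argument to the configurations this vector visits. By Lemma~\ref{schedule_Cbar}, $\mathbf{C}$ is feasible for P\ref{pr:schedulability} if and only if $D(t)$ stays in the finite box $\mathcal{D}:=\prod_{i=1}^{q}\{0,1,\ldots,\alpha_i-1\}$ for every $t>0$, and $|\mathcal{D}|=\prod_{i=1}^{q}\alpha_i=m$. Hence, given a feasible $\mathbf{C}$, the $m+1$ vectors $D(1),\ldots,D(m+1)$ all lie in $\mathcal{D}$, so there exist times $1\le t_1<t_2\le m+1$ with $D(t_1)=D(t_2)$. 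I would then take $\tilde{\mathbf{C}}$ to be the schedule that repeats the block $\bigl(C(t_1+1),\ldots,C(t_2)\bigr)$ forever: its period is $P:=t_2-t_1\le m$, and it uses only connection patterns occurring in $\mathbf{C}$, hence patterns of $\boldsymbol{\mathcal{C}}$. Being periodic in its symbols, $\tilde{\mathbf{C}}$ is a cyclic schedule by definition; it remains to prove that it is feasible.

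The second ingredient is the dynamics of the counters. Unwinding the definitions of $d_i(t)$ and $\tau_i^{\mathbf{C}}(t)$ gives the update rule $d_i(t+1)=0$ if $i\in C(t+1)$ and $d_i(t+1)=d_i(t)+1$ otherwise, so that $D(t+1)=\Phi_{C(t+1)}(D(t))$ for a map $\Phi_{C}:\Z_{\ge 0}^{q}\to\Z_{\ge 0}^{q}$ determined by $C$; the same recursion governs $\tilde{\mathbf{C}}$, with the convention $\tilde{D}(0)=\mathbf{0}$ expressing that no agent has yet been connected. The key structural property is that each $\Phi_{C}$ is monotone for the componentwise order --- a reset sends two ordered inputs to the common value $0$, an increment preserves the order --- and hence so is the composition $\Phi_{B}:=\Phi_{C(t_2)}\circ\cdots\circ\Phi_{C(t_1+1)}$ associated with one period, which by the choice of $t_1,t_2$ fixes $D(t_1)$: indeed $\Phi_{B}(D(t_1))=D(t_2)=D(t_1)$.

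Feasibility of $\tilde{\mathbf{C}}$ then follows by an induction over periods. First I would show $\tilde{D}(kP)\le D(t_1)$ for all $k\ge 0$: the base case is $\tilde{D}(0)=\mathbf{0}\le D(t_1)$, and the inductive step is $\tilde{D}((k+1)P)=\Phi_{B}(\tilde{D}(kP))\le\Phi_{B}(D(t_1))=D(t_1)$. Writing an arbitrary $t>0$ as $t=kP+r$ with $1\le r\le P$ and applying the partial composition $\Phi_{C(t_1+r)}\circ\cdots\circ\Phi_{C(t_1+1)}$ to both sides of $\tilde{D}(kP)\le D(t_1)$ then yields $\tilde{D}(t)\le D(t_1+r)$. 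Since $t_1+r\in\{t_1+1,\ldots,t_2\}$ and $\mathbf{C}$ is feasible, $D(t_1+r)\in\mathcal{D}$; as $\mathcal{D}$ is downward closed and $\tilde{D}(t)\ge\mathbf{0}$, this gives $\tilde{D}(t)\in\mathcal{D}$. By Lemma~\ref{schedule_Cbar}, $\tilde{\mathbf{C}}$ is feasible, and it is cyclic with period $P\le m$, which proves the claim.

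The step I expect to require the most care is exactly this reduction via monotonicity. One cannot simply splice the block $\bigl(C(t_1+1),\ldots,C(t_2)\bigr)$ into a fresh schedule and pretend that its counters retrace $D(t_1+1),\ldots,D(t_2)$, because the cyclic schedule starts from $\mathbf{0}$ rather than from $D(t_1)$; monotonicity of $\Phi_{C}$ --- equivalently, the fact that connecting an agent erases its entire backlog --- is what makes starting from a ``fresher'' configuration harmless, and it is the pivot of the whole argument. (The same monotonicity shows that $\tilde{D}(kP)$ is nondecreasing in $k$ and converges to a fixed point of $\Phi_{B}$ below $D(t_1)$, but only membership in $\mathcal{D}$ is needed here.)
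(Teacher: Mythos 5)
Your proposal is correct and follows essentially the same route as the paper's proof: a pigeonhole argument on the finitely many values of the counter vector $D(t)$ to find $t_1<t_2$ with $D(t_1)=D(t_2)$ and $t_2-t_1\le m$, followed by repeating the intervening block and showing the cyclic schedule's counters are dominated by the original's, so that Lemma~\ref{schedule_Cbar} applies. Your explicit formalization of that domination via the monotone update maps $\Phi_C$ (and the induction over periods starting from $\tilde{D}(0)=\mathbf{0}\le D(t_1)$) is a cleaner and more rigorous rendering of the inequality $D_{\mathrm{c}}(\tau)\le D(\tau+t_1-1)$ that the paper asserts more tersely, but the underlying idea is identical.
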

	\begin{proof}
		We define $\mathbf{D}$ as in \eqref{Ctilde}, so that  $ 0 \leq d_i(t) \leq \alpha_i -1 $ holds by Lemma~\ref{schedule_Cbar}. Hence, each $d_i(t)$ can have no more than $\alpha_i$ different values. This implies $D(t)$ can have at most $m~:=~\prod_{i=1}^q \alpha_i$ different values. Hence,
		\begin{equation}
		\exists ~t_1,t_2:~D(t_1)=D(t_2),~m \leq t_1 < t_2 < 2m.
		\end{equation}
		Now, consider the sequence 
		\begin{equation}
		\mathbf{C}_{\mathrm{r}}:=\big{(}C(t_1),C(t_1+1),\ldots,C(t_2-1)\big{)}
		\end{equation}
		as the cyclic part of the cyclic schedule  $\mathbf{C}_{\mathrm{c}}$ for P\ref{pr:schedulability}, defined as
		\begin{equation}
		\mathbf{C}_c:=\big{(} \mathbf{C}_{\mathrm{r}}, \mathbf{C}_{\mathrm{r}},\ldots\big{)}.
		\end{equation}
		Define $\mathbf{D}_{\mathrm{c}}$ as in \eqref{eq:def_D} for the new schedule $\mathbf{C}_{\mathrm{c}}$. One can conclude that
		\begin{equation}
		D_{\mathrm{c}}(\tau) \leq  D(\tau+t_1-1), \qquad \forall \tau \in \{1,\ldots,t_2-t_1\},
		\end{equation}
		since for any $ i \in \{1,\ldots,q\}$ we have
		\begin{equation}
		{d_{\mathrm{c}}}_i(\tau)=\tau-\tau_i^{\mathbf{C}_{\mathrm{c}}}=(\tau+t_1-1)-(\tau_i^{\mathbf{C}_{\mathrm{c}}}+t_1-1)\leq d_i(\tau+t_1-1).
		\end{equation}
		Furthermore, $D(t_1)=D(t_2)$ implies $i \in \mathbf{C}_{\mathrm{r}}$ for $\forall i \in \{1,\ldots,q\}$. As a result, ${d_{\mathrm{c}}}_i(t_2-t_1)=d_i(t_2-1)$. This implies
		\begin{equation}
		{d_{\mathrm{c}}}_i(k(t_2-t_1)+\tau)=d_i(t_1-1+\tau), \qquad k \in \mathbb{N}.
		\end{equation}
		Since $d_i(t)\leq \alpha_i-1$ holds for any $t>0$, then ${d_{\mathrm{c}}}_i(t) \leq \alpha_i-1$ also holds for any $t>0$. Inequality ${d_{\mathrm{c}}}_i(t) \leq \alpha_i-1$ implies that $\mathbf{C}_c$ is a feasible schedule by Lemma~\ref{schedule_Cbar}.
	\end{proof}
	
	Theorem~\ref{thm:Cyclic_schedulability} implies that a feasible schedule can always be searched for within the finite set of cyclic schedules of a length no greater than $m$. An important consequence of this theorem is the following.
	\begin{theorem}
		P\ref{pr:schedulability} is decidable.
	\end{theorem}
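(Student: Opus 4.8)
The plan is to read off decidability directly from Theorem~\ref{thm:Cyclic_schedulability}, which collapses the a priori infinite search space of schedules to a finite one. Recall that an instance $I=\{\boldsymbol{\mathcal{C}},\{\alpha_i\}\}$ is accepted, i.e. $I\in\text{P\ref{pr:schedulability}}$, if and only if some feasible schedule exists; by Theorem~\ref{thm:Cyclic_schedulability} this holds if and only if there exists a feasible \emph{cyclic} schedule whose repeating block $\mathbf{C}_{\mathrm{r}}$ has length at most $m=\prod_{i=1}^q\alpha_i$. Since every entry of such a block is one of the $l$ elements of $\boldsymbol{\mathcal{C}}$, the number of candidate blocks is at most $\sum_{k=1}^{m} l^{k}$, which is finite. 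So it only remains to exhibit a finite-time procedure that, given a candidate block, decides whether the cyclic schedule it generates is feasible; looping that procedure over all candidates yields the desired algorithm, and hence decidability.

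First I would fix an enumeration of all finite tuples $(C(0),\dots,C(k-1))$ with entries in $\boldsymbol{\mathcal{C}}$ and $1\le k\le m$, and for each one form the cyclic schedule $\mathbf{C}_{\mathrm{c}}=(\mathbf{C}_{\mathrm{r}},\mathbf{C}_{\mathrm{r}},\dots)$ with $\mathbf{C}_{\mathrm{r}}$ equal to that tuple. Feasibility of $\mathbf{C}_{\mathrm{c}}$ is then tested through Lemma~\ref{schedule_Cbar}: $\mathbf{C}_{\mathrm{c}}$ is feasible if and only if $0\le d_i(t)\le\alpha_i-1$ for all $i\in\{1,\dots,q\}$ and all $t>0$, where $d_i$ is built from $\mathbf{C}_{\mathrm{c}}$ as in~\eqref{Ctilde}. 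Because $\mathbf{C}_{\mathrm{c}}$ has period $k$, the sequence $d_i(\cdot)$ is eventually periodic, so this infinite-horizon condition reduces to a finite check: each symbol $i$ must appear at least once in $\mathbf{C}_{\mathrm{r}}$ (otherwise $d_i(t)\to\infty$), and the largest gap between two consecutive occurrences of $i$ within one period — including the wrap-around gap from the last occurrence in a block to the first occurrence in the next, as well as the initial transient from the convention $\tau_i^{\mathbf{C}}(t)=0$ — must be at most $\alpha_i$. All of these quantities are obtained by inspecting at most $2k\le 2m$ consecutive entries, so the per-candidate test terminates.

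Putting the pieces together, the algorithm enumerates the finitely many candidate blocks, runs the finite feasibility test on each, and outputs ``$I\in\text{P\ref{pr:schedulability}}$'' as soon as one passes, or ``$I\notin\text{P\ref{pr:schedulability}}$'' once all have failed. Soundness in one direction is immediate, since a passing candidate is an explicit witnessing schedule; the converse is exactly Theorem~\ref{thm:Cyclic_schedulability}: if no candidate passes then $I$ admits no cyclic schedule of period $\le m$, hence by the contrapositive of the theorem no feasible schedule whatsoever. I do not expect a real obstacle here — the argument is essentially a corollary of Theorem~\ref{thm:Cyclic_schedulability} — and the only point demanding care is the bookkeeping in the feasibility test: correctly handling the wrap-around of the periodic block and the ``not-yet-connected'' convention for $\tau_i^{\mathbf{C}}$, so that the finite check faithfully certifies the infinite-horizon condition of Lemma~\ref{schedule_Cbar}. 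As a by-product, the construction also supplies a concrete (albeit worst-case exponential in $m$) decision procedure, which motivates the heuristics developed in the following subsections.
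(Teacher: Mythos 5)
Your proposal is correct and follows the same route as the paper: the paper's proof simply invokes Theorem~\ref{thm:Cyclic_schedulability} to conclude that the search space is finite and schedules can be finitely enumerated. You additionally spell out the (routine but genuinely needed) detail that feasibility of each candidate cyclic schedule is itself checkable in finite time via Lemma~\ref{schedule_Cbar} and periodicity of $d_i(\cdot)$, which the paper leaves implicit.
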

	\begin{proof}
		Since the search space is a finite set, schedules can be finitely enumerated.  
	\end{proof}
	Theorem~\ref{thm:Cyclic_schedulability} allows us to solve P\ref{pr:schedulability} by solving the following optimization problem, which searches for a feasible periodic schedule among all schedules of period $T_r$.
	\begin{subequations}
		\label{eq:qnary_formulation_exact_solution}
		\begin{align}
		\min_{C(1),\ldots,C(T_{\mathrm{r}}),T_{\mathrm{r}}} & \ \ T_{\mathrm{r}}\\
		\mathrm{s.t.}& \quad C(1),\ldots,C(T_{\mathrm{r}}) \in \boldsymbol{\mathcal{C}}, \label{eq:qnary_formulation_exact_solution_b}\\
		&\quad  T_{\mathrm{r}} \leq \prod_{i=1}^q \alpha_i,~T_{\mathrm{r}} \in \mathbb{N}, \label{eq:qnary_formulation_exact_solution_c}\\		
		& \quad \sum_{k=t}^{t+\alpha_j-1} \eta_j ( k ) \geq 1, \label{eq:qnary_formulation_exact_solution_d}\\
		& \quad \forall j \in \{1,\ldots,q\},~ \forall t \in \{1,\ldots,T_{\mathrm{r}}\}, \nonumber\\
		&\quad \eta_j(k) =\begin{cases}
		1 &  \text{if } j\in C(k \bmod T_{\mathrm{r}}),\\
		0 & \textrm{ otherwise}.
		\end{cases} \label{eq:qnary_formulation_exact_solution_e}
		\end{align}
	\end{subequations}
	Note that we define $C(0):=C(T_{\mathrm{r}})$ in \eqref{eq:qnary_formulation_exact_solution_e}. Equation \eqref{eq:qnary_formulation_exact_solution_b} enforces the schedule elements to be chosen from the set of connection patterns $\boldsymbol{\mathcal{C}}$; \eqref{eq:qnary_formulation_exact_solution_c} limits the search space by giving an upper bound for the length of the periodic part, i.e., $T_{\mathrm{r}}$; and \eqref{eq:qnary_formulation_exact_solution_d} ensures that label $i$ appears at least once in each  $\alpha_i$ successive elements of the schedule sequence.
	
	Note that the main challenge in Problem~\eqref{eq:qnary_formulation_exact_solution} is finding a feasible solution; minimization of $T_{\mathrm{r}}$ is a secondary goal since any solution of \eqref{eq:qnary_formulation_exact_solution} provides a feasible schedule for P\ref{pr:schedulability}. Unfortunately, \eqref{eq:qnary_formulation_exact_solution_e} is combinatorial in the number of agents and connection patterns.
	In order to tackle this issue, we next propose a strategy to simplify the computation of a feasible schedule.
	
	\subsection{A heuristic solution to P\ref{pr:schedulability}}
	\label{subsec_V_B}
	In this subsection, we propose a heuristic to solve P\ref{pr:schedulability} based on the assumption that the satisfaction of the constraints for a agent $i$ is a duty assigned to a single connection pattern $C_j$.
	
	To give some intuition on the assignment of connection patterns, we propose the following example.
	\begin{example} 
		\label{ex:toy_general_case}
		\begin{figure}[h]
			\centering
			\includegraphics[width=0.4\columnwidth]{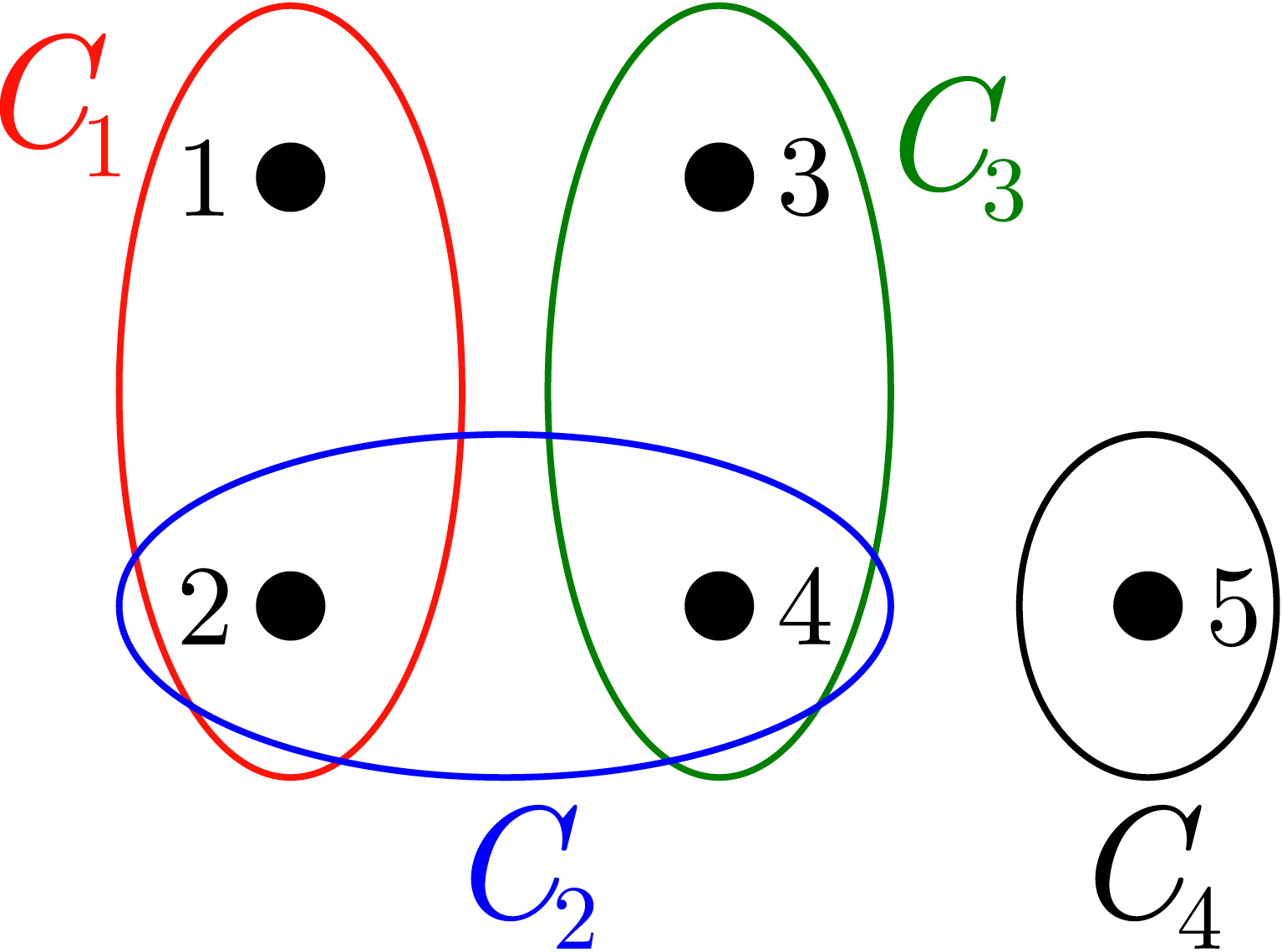}
			\caption{A model with $5$ agents and $4$ connection patterns.}
			\label{fig:connection_pattern_example}
		\end{figure}
		Consider the network displayed in Figure~\ref{fig:connection_pattern_example}, with five agents which can be connected according to $4$ connection patterns: $C_1:=(1,2)$, $C_2:=(2,4)$, $C_3:=(3,4)$, $C_4:=(5)$.
		Assume that the safe time intervals of the $5$ agents are $\alpha_1=10$, $\alpha_2=2$, $\alpha_3=10$, $\alpha_4=2$, $\alpha_5=100$. This means that agents $2$ and $4$ must be connected at least once every $2$ steps, while the other agents have less demanding requirements.   
		
		As a first try, let us attempt a schedule using only connection patterns $C_1$, $C_3$, $C_4$. 
		In this case, one can see that the sequence $(C_1,C_3,C_1,C_3,\ldots)$ is the only possible schedule satisfying the requirements of agents $2$ and $4$. There is however no space to connect agent $5$ within this schedule. As an alternative solution we therefore propose to utilize the patterns $C_1$, $C_2$, $C_3$, $C_4$ and design $(C_2,C_1,C_2,C_3,C_2,C_4)$ as the cyclic part of a schedule. One can verify that this schedule is feasible.
		
		With the first choice of connection patterns, the duty of satisfying the constraints for agents $2$ and $4$ is assigned to the patterns~$C_1$ and $C_3$, respectively, which, therefore, must be scheduled every 2 steps. On the other hand, with the second choice, this duty is assigned to~$C_2$, while agents~$1$ and $3$ are assigned to~$C_1$ and $C_3$, respectively. As a consequence, $C_2$ must be scheduled every $2$ steps, but~$C_1$ and $C_3$ can be scheduled once every $10$ steps. This allows one to make space for $C_4$. Borrowing the terminology of PP, with the first choice~$C_1$ and $C_3$ are symbols of density $0.5$ and $C_4$ has density $0.01$. Hence, the three symbols are not schedulable. With the second choice, instead, $C_2$ has density $0.5$, $C_1$ and $C_3$ have density $0.1$, and $C_4$ has density $0.01$. Hence, the total density is $0.71$ and the four symbols are schedulable. 
	\end{example}
	
	Example \ref{ex:toy_general_case} shows how we assign duties to the connection patterns, and also how 
	schedulability is affected. In the following, we formulate a problem that selects the connection patterns in order to minimize the total density.
	
	Let us now represent the assignment of agent $i$ to the connection pattern $C_j$ with a binary variable $\eta_{i,j}$ and---with a slight abuse of notation---the density of symbol $C_j$ with $\hat{\rho}_j$. The proposed strategy is to decide the set of $\eta_{i,j}$ such that $\sum_j \hat{\rho}_j$ is minimized. This is performed by solving
	\begin{subequations}%
		\label{eq:min_connection}
		\begin{align}
		& \min_{\hat{\rho}_j,~\eta_{i,j}} \sum_{j=1}^l \hat{\rho}_j  \label{eq:min_connection_cost}\\
		\mathrm{s.t.} \quad &\hat{\rho}_j \geq \frac{1}{\alpha_i} \eta_{i,j}, \ \forall j \in \{1,\ldots,l\}, \ \forall i \in C_j, \label{eq:min_connection_beta}\\
		& \sum_{j:i \in C_j}\eta_{i,j} \geq 1, \ \forall i \in \{1,\ldots,q\}, \label{eq:min_connection_eta}\\
		& \eta_{i,j} \in \{0,1\}, \ \forall i \in \{1,\ldots,q\}, \ \forall j \in \{1,\ldots,l\}.
		\end{align}
	\end{subequations}
	Constraint~\eqref{eq:min_connection_eta} guarantees that every agent~$i$ is connected by at least one connection pattern. Variables~$\hat{\rho}_j$ bound the density of the resulting scheduling problem, where~$1/\hat{\rho}_j$ is the maximum number of steps between two occurrences of connection pattern~$C_j$ in~$\mathbf{C}$ that is sufficient to enforce $(x_i,\hat{x}_i,u_i) \in \MR{i}$. If $\hat{\rho}_j=0$, then connection pattern $j$ is not used.  Without loss of generality, assume that the solution to \eqref{eq:min_connection} returns $l$ distinct connection patterns with $\hat{\rho}>0$, i.e., $\hat{\rho}_1,\ldots,\hat{\rho}_l>0$ and define
	\begin{equation}
	\hat{\alpha}_i:=\frac{1}{\hat{\rho}_i},  \qquad \forall i \in \{1,\ldots,l\}.
	\label{eq:new_alpha_GWSP}
	\end{equation}
	
	\begin{theorem}
		\label{thm:schedulability_thresholdmcS}
		$\{\hat{\alpha}_i\} \in \text{PP} \implies \{\boldsymbol{\mathcal{C}},\{\alpha_i\} \} \in \text{P\ref{pr:schedulability}}$.
	\end{theorem}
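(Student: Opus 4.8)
The plan is to turn a feasible Pinwheel schedule for $\{\hat{\alpha}_i\}$ directly into a feasible schedule for P\ref{pr:schedulability} by reading each Pinwheel symbol as the index of a connection pattern. Concretely, suppose $\mathbf{c}=c(1),c(2),\ldots$ with $c(t)\in\{1,\ldots,l\}$ is a feasible Pinwheel schedule for the instance $\{\hat{\alpha}_i\}$, which exists by hypothesis; since every schedulable Pinwheel instance admits a cyclic schedule, we may even take $\mathbf{c}$ cyclic. Define the candidate schedule for P\ref{pr:schedulability} by $C(t):=C_{c(t)}$, i.e.\ at step $t$ we activate the connection pattern whose index is the $t$-th Pinwheel symbol. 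By the characterization stated right after P\ref{pr:schedulability} (equivalently, by Lemma~\ref{schedule_Cbar}), it then suffices to show that under this schedule every agent $i\in\{1,\ldots,q\}$ is connected at least once in every window of $\alpha_i$ consecutive steps.

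First I would record a structural fact about the optimizer of \eqref{eq:min_connection}: since each $\hat{\rho}_j$ enters the program only through its own constraints \eqref{eq:min_connection_beta} and the objective is the sum $\sum_j\hat{\rho}_j$, at optimality $\hat{\rho}_j=\max\{\,1/\alpha_i : i\in C_j,\ \eta_{i,j}=1\,\}$ (and $\hat{\rho}_j=0$ when no agent is assigned to $C_j$, in which case $C_j$ is discarded). Hence for every retained pattern $\hat{\alpha}_j=1/\hat{\rho}_j=\min\{\,\alpha_i : i\in C_j,\ \eta_{i,j}=1\,\}$ is a positive integer, so the Pinwheel instance $\{\hat{\alpha}_i\}$ is well posed, and in particular $\hat{\alpha}_j\le\alpha_i$ for every agent $i$ with $\eta_{i,j}=1$.

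Next, fix an agent $i$. By constraint \eqref{eq:min_connection_eta} there is at least one index $j$ with $i\in C_j$ and $\eta_{i,j}=1$; fix such a $j$. By definition of a feasible Pinwheel schedule, the symbol $j$ occurs at least once within every $\hat{\alpha}_j$ consecutive entries of $\mathbf{c}$, hence the connection pattern $C_j$ is activated at least once within every $\hat{\alpha}_j$ consecutive steps of $\mathbf{C}$. Since $i\in C_j$, agent $i$ is connected whenever $C_j$ is activated, so it is connected at least once every $\hat{\alpha}_j$ steps and therefore, using $\hat{\alpha}_j\le\alpha_i$, at least once every $\alpha_i$ steps. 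As $i$ was arbitrary, this gives $0\le d_i(t)\le\alpha_i-1$ for all $i$ and all $t>0$, so $\mathbf{C}$ is feasible for P\ref{pr:schedulability} by Lemma~\ref{schedule_Cbar}; if $\mathbf{c}$ was chosen cyclic, so is $\mathbf{C}$.

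I do not expect a serious obstacle: the content of the theorem is essentially the bookkeeping in the second paragraph — verifying that the densities returned by \eqref{eq:min_connection} are reciprocals of integers and that they dominate the reciprocals of the $\alpha_i$ of every assigned agent — together with the observation that a window of $\hat{\alpha}_j$ symbols in the Pinwheel schedule maps exactly to a window of $\hat{\alpha}_j$ steps in which $C_j$, and hence every agent it carries, is served. The only point requiring care is keeping the two index sets distinct (agents $1,\ldots,q$ versus retained connection patterns $1,\ldots,l$) and noting that the harmless offset between the Pinwheel indexing $c(1),c(2),\ldots$ and the schedule indexing $C(0),C(1),\ldots$ does not affect the window counting.
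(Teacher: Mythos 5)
Your proposal is correct and follows essentially the same route as the paper's proof: map each Pinwheel symbol $j$ to the connection pattern $C_j$, then use the fact that the optimizer of \eqref{eq:min_connection} assigns every agent $i$ to some pattern $j$ with $\hat{\alpha}_j \le \alpha_i$. You merely spell out more explicitly than the paper does that at optimality $\hat{\rho}_j$ equals the largest $1/\alpha_i$ among assigned agents, so that $\hat{\alpha}_j$ is a positive integer dominated by every such $\alpha_i$ --- a detail the paper states without derivation.
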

	\begin{proof}
		Consider any schedule $\mathbf{C}_{\mathrm{P}}$ which is feasible for the instance $\{ \hat{\alpha}_i \}$ of PP. 
		Define the schedule $\mathbf{C}$ by $C(t):=C_j$ given ${c_{\mathrm{P}}}(t)=j$ for any $j$. By the statement of PP, ${c_{\mathrm{P}}}(t)=j$ once at least in every \(\hat{\alpha}_j\) successive time instants. By \eqref{eq:min_connection}, for all $i$ there exists \(C_j\) such that $i \in C_j$ and \(\alpha_i \geq \hat{\alpha}_i\). Hence, $\mathbf{C}$ is a feasible schedule for P\ref{pr:schedulability}.
	\end{proof}
	
	Using Theorem~\ref{thm:schedulability_thresholdmcS}, we propose the following algorithm to find a feasible schedule for P\ref{pr:schedulability}.
	
	\begin{algorithm}
		\begin{algorithmic}[1]
			\footnotesize
			\State Define $\hat{\alpha}_i$ as in \eqref{eq:new_alpha_GWSP} by solving the optimization problem~\eqref{eq:min_connection}
			\If {$\{\hat{\alpha}_i\} \in \text{PP}$}
			\State find a schedule $\mathbf{C}_{\mathrm{P}}$ for instance $\{\hat{\alpha}_i\}$ of PP using \eqref{eq:qnary_formulation_exact_solution} or any other suitable scheduling technique
			\State define $C(t) := C_j$ given $C_\mathrm{P}(t)= j$
			\State \Return {$\mathbf{C}:= \big{(}C(1),~C(2),\ldots\big{)}$}
			\Else
			\State \Return{no schedule was found}
			\EndIf
		\end{algorithmic}
		\caption{A heuristic scheduling for P\ref{pr:schedulability} (\textbf{input}: $\{\{\alpha_i\}, \boldsymbol{\mathcal{C}}\}$, \textbf{output}: $\boldsymbol{C}$)}
		\label{alg:General_P1}
	\end{algorithm}
	
	As shown by the following example, the converse of Theorem~\ref{thm:schedulability_thresholdmcS} does not hold in general. I.e., if Algorithm~\ref{alg:General_P1} does not find a schedule a feasible schedule may still exist for P\ref{pr:schedulability}.
	\begin{example}[Converse of Theorem~\ref{thm:schedulability_thresholdmcS}] \label{ex:conv1}
		Consider five agents with \(\alpha_1=\alpha_3=3,\ \alpha_2=\alpha_4=\alpha_5=5\) and $\boldsymbol{\mathcal{C}}:=\{C_1,C_2,C_3,C_4\}$ where
		\begin{equation}
		C_1=(1,2),~C_2=3,~C_3=4,~C_4=(1,5).
		\end{equation}
		Using \eqref{eq:min_connection}, one obtains
		\begin{equation}
		\hat{\alpha}_1=\hat{\alpha}_3=5, \quad \hat{\alpha}_2=\hat{\alpha}_4=3.
		\end{equation}
		There is no feasible schedule for this problem considering the assigned density function $\hat{\rho}(\{3,3,5,5\})=\frac{16}{15}$, see Theorem~\ref{thm:schedulability_threshold}. However, one can verify that
		the following schedule is feasible
		\begin{equation}
		\mathbf{C}_c:=\big{(}\mathbf{C}_{\mathrm{r}},\mathbf{C}_{\mathrm{r}},\ldots \big{)},
		\end{equation}
		where
		\begin{equation}
		\mathbf{C}_{\mathrm{r}}:=\big{(}C_1, C_2, C_4, C_2, C_3 \big{)}.
		\end{equation}
	\end{example}
	\subsection{Solution of P\ref{pr:schedulability} in the $m_{\mathrm{c}}$-channel case}
	\label{subsection:WSP}
	In the previous subsection, $\boldsymbol{\mathcal{C}}$ was an arbitrary set of connection patterns. Assume now that the set~$\boldsymbol{\mathcal{C}}$ is 
	\begin{equation}
	\label{eq:mc_channels}
	\boldsymbol{\mathcal{C}}:=\{C:~C \subseteq \{1,\ldots,q\},~ |C|=m_{\mathrm{c}}\},
	\end{equation} 
	i.e., the set of all subsets of $ \{1,\ldots,q\}$ with cardinality $m_{\mathrm{c}}$. This is a special case of P\ref{pr:schedulability} where any combination of $m_{\mathrm{c}}$ agents can be connected at the same time. One application of such case is for instance when the connection patterns model a multi-channel star communication topology between a set of agents and a central controller. This class of problems is easily mapped to the class of WSP:
	\begin{theorem}
		\label{thm:equivalencemc}
		When $\boldsymbol{\mathcal{C}}$ is as in \eqref{eq:mc_channels}, then 
		\begin{equation}
		\{\boldsymbol{\mathcal{C}},\{\alpha_i\} \} \in \text{P\ref{pr:schedulability}} \iff \{m_c,\{\alpha_i\} \} \in \text{WSP}.
		\end{equation}
	\end{theorem}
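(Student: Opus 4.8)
The plan is to reduce both sides of the claimed equivalence to the same combinatorial condition — namely, that there is an infinite schedule in which every agent $i$ is visited at least once in every window of $\alpha_i$ consecutive steps — and then to observe that, when $\boldsymbol{\mathcal{C}}$ is the family of all $m_{\mathrm{c}}$-element subsets of $\{1,\ldots,q\}$, a schedule for P\ref{pr:schedulability} and a schedule for WSP with $m_{\mathrm{c}}$ channels are the same object up to two cosmetic adjustments: turning an (unordered) connection pattern into an ordered $m_{\mathrm{c}}$-tuple, and turning a WSP tuple that may contain repeated symbols into a genuine $m_{\mathrm{c}}$-element subset. By Lemma~\ref{schedule_Cbar}, feasibility of a schedule $\mathbf{C}$ for P\ref{pr:schedulability} is exactly the statement $0\le d_i(t)\le \alpha_i-1$ for all $i$ and all $t>0$, i.e. agent $i$ appears in $C(t')$ for some $t'$ in every length-$\alpha_i$ window; this is precisely the acceptance condition read off the tuples of a WSP schedule. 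Throughout I would assume $q\ge m_{\mathrm{c}}$, since otherwise $\boldsymbol{\mathcal{C}}=\emptyset$ and both problems are trivially infeasible.

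\emph{($\Rightarrow$)} Suppose $\mathbf{C}$ is feasible for P\ref{pr:schedulability}. Each $C(t)\in\boldsymbol{\mathcal{C}}$ is a subset of $\{1,\ldots,q\}$ of cardinality exactly $m_{\mathrm{c}}$; let $C'(t)$ be any ordering of $C(t)$ into an $m_{\mathrm{c}}$-tuple (shifting the time index if needed so the new schedule starts at $t=1$). Fix $i$ and any window of $\alpha_i$ consecutive indices. By Lemma~\ref{schedule_Cbar}, $i\in C(t')$ for some $t'$ in that window, hence the tuple $C'(t')$ contains the symbol $i$. Since this holds for every $i$ and every window, $\mathbf{C}'$ is a feasible schedule for WSP, so $\{m_{\mathrm{c}},\{\alpha_i\}\}\in\text{WSP}$.

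\emph{($\Leftarrow$)} Suppose $\mathbf{C}=C(1),C(2),\ldots$ with $C(t)=(c_1(t),\ldots,c_{m_{\mathrm{c}}}(t))$ is feasible for WSP. Put $S(t):=\{c_1(t),\ldots,c_{m_{\mathrm{c}}}(t)\}$, so $|S(t)|\le m_{\mathrm{c}}$, and let $\tilde C(t)$ be any set obtained from $S(t)$ by adjoining $m_{\mathrm{c}}-|S(t)|$ further elements of $\{1,\ldots,q\}\setminus S(t)$ (possible because $q\ge m_{\mathrm{c}}$); then $\tilde C(t)\in\boldsymbol{\mathcal{C}}$. For any $i$ and any length-$\alpha_i$ window, feasibility of the WSP schedule supplies a tuple $C(t')$ in the window containing $i$, whence $i\in S(t')\subseteq\tilde C(t')$; enlarging connection patterns can only decrease the $d_i(t)$, so $0\le d_i(t)\le\alpha_i-1$ holds for all $i$ and all $t>0$. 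By Lemma~\ref{schedule_Cbar}, $\tilde{\mathbf{C}}$ is feasible for P\ref{pr:schedulability}, i.e. $\{\boldsymbol{\mathcal{C}},\{\alpha_i\}\}\in\text{P\ref{pr:schedulability}}$.

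The argument is almost entirely definitional once Lemma~\ref{schedule_Cbar} is in hand; the only place requiring a moment of care — and the step I would flag as the potential obstacle — is the reverse direction, where a WSP tuple need not itself be a valid connection pattern (it may repeat a symbol, or, after removing repeats, have fewer than $m_{\mathrm{c}}$ distinct entries), and one must verify that padding it to an exact $m_{\mathrm{c}}$-element subset never destroys the window property. This is immediate because adding connections only shortens the gaps between successive visits of each agent, so the same padding observation also shows the equivalence is insensitive to whether WSP tuples are permitted to repeat symbols.
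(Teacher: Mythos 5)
Your proof is correct and takes essentially the same approach as the paper's: a direct definitional identification of schedules for the two problems via the window condition of Lemma~\ref{schedule_Cbar}. You are in fact more careful than the paper (which dispatches the whole equivalence in two sentences), particularly in the reverse direction where you pad WSP tuples with possibly repeated symbols up to genuine $m_{\mathrm{c}}$-element subsets and note that enlarging connection patterns preserves feasibility.
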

	\begin{proof}
		By definition, any schedule solving P\ref{pr:schedulability} must satisfy $i \in C(t)\Rightarrow i \in C(t+\tau)$ with $\tau\leq \alpha_i$ for all $i,t$. Provided that $|C(t)|=m_\mathrm{c}$ for all $t$, this also defines a schedule solving WSP.
	\end{proof}
	We exploit this result to solve P\ref{pr:schedulability} indirectly by solving WSP. To that end, we propose a heuristic which replaces WSP with a PP relying on modified safe time intervals defined as
	\begin{equation}
	\label{new_alpha}
	\tilde{\alpha}_i:=m_{\mathrm{c}} \alpha_i,  \qquad  \forall i \in  \{1,\ldots,q\}.
	\end{equation}
	\begin{theorem}
		\label{malpha}
		$\{\tilde{\alpha}_i\} \in \text{PP} \implies \{m_{\mathrm{c}},\{\alpha_i\}\} \in \text{WSP}$.
	\end{theorem}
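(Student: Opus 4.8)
The plan is to convert a feasible pinwheel schedule for $\{\tilde\alpha_i\}$ into a feasible $m_{\mathrm{c}}$-channel windows schedule by slicing the pinwheel sequence into consecutive blocks of length $m_{\mathrm{c}}$. By hypothesis $\{\tilde\alpha_i\}\in\text{PP}$, so there is a feasible schedule $\mathbf{C}_{\mathrm{P}}=c_{\mathrm{P}}(1),c_{\mathrm{P}}(2),\ldots$ over the symbols $\{1,\ldots,q\}$. Define the candidate windows schedule $\mathbf{C}=C(1),C(2),\ldots$ with $m_{\mathrm{c}}$ channels by
\begin{equation}
C(t):=\big(c_{\mathrm{P}}(m_{\mathrm{c}}(t-1)+1),\,c_{\mathrm{P}}(m_{\mathrm{c}}(t-1)+2),\,\ldots,\,c_{\mathrm{P}}(m_{\mathrm{c}}t)\big),
\end{equation}
equivalently, channel $k$ of $\mathbf{C}$ carries the subsequence $c_{\mathrm{P}}(k),c_{\mathrm{P}}(k+m_{\mathrm{c}}),c_{\mathrm{P}}(k+2m_{\mathrm{c}}),\ldots$. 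Each $C(t)$ is an ordered tuple of $m_{\mathrm{c}}$ elements of $\{1,\ldots,q\}$, hence an admissible windows-schedule tuple.

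The key step is the block-to-position counting. First I would check that, for any window start $s\ge 1$, the $\alpha_i$ consecutive tuples $C(s),C(s+1),\ldots,C(s+\alpha_i-1)$ list exactly the contiguous block of pinwheel symbols in positions $m_{\mathrm{c}}(s-1)+1$ through $m_{\mathrm{c}}(s+\alpha_i-1)$, and that this block has length $m_{\mathrm{c}}\alpha_i=\tilde\alpha_i$. Since $\mathbf{C}_{\mathrm{P}}$ is feasible for $\{\tilde\alpha_i\}$, every contiguous block of $\tilde\alpha_i$ pinwheel positions contains the symbol $i$; applying this to the block just described shows that $i$ occurs in at least one of the tuples $C(s),\ldots,C(s+\alpha_i-1)$. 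As $s$ and $i$ were arbitrary, $\mathbf{C}$ satisfies the windows-scheduling requirement, so $\{m_{\mathrm{c}},\{\alpha_i\}\}\in\text{WSP}$.

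I do not expect a real obstacle; the argument is essentially reindexing, and the only care needed is the off-by-one bookkeeping ensuring that $\alpha_i$ consecutive blocks span precisely $m_{\mathrm{c}}\alpha_i$ pinwheel positions, neither more nor fewer. One minor point worth a sentence: a tuple $C(t)$ may repeat a symbol when $\mathbf{C}_{\mathrm{P}}$ repeats a symbol inside one block; this is harmless, because the windows-scheduling requirement only asks that a tuple \emph{contain} the relevant symbol, and if distinct entries are desired to match the connection-pattern form of Section~\ref{subsection:WSP}, any duplicated entry can be overwritten by an arbitrary symbol, which only adds coverage and never destroys feasibility.
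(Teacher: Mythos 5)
Your proposal is correct and follows essentially the same route as the paper's proof: slice the feasible pinwheel schedule for $\{\tilde{\alpha}_i\}=\{m_{\mathrm{c}}\alpha_i\}$ into consecutive blocks of length $m_{\mathrm{c}}$, observe that $\alpha_i$ consecutive tuples span exactly $\tilde{\alpha}_i$ pinwheel positions, and conclude feasibility for WSP. Your extra bookkeeping on window boundaries and the remark on repeated symbols within a tuple only make explicit what the paper leaves implicit.
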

	\begin{proof}
		Given a feasible schedule $\mathbf{C}_{\mathrm{P}}$ for PP, ${{c_{\mathrm{P}}}(t)=i}$ at least once every \(\tilde{\alpha}_i=m_{\mathrm{c}}\alpha_i\) successive time instants. Define schedule $\mathbf{C}$ as
		\begin{equation}
		C(t):=\left( c_{\mathrm{P}}(m_{\mathrm{c}}(t-1)+1),\ldots, c_{\mathrm{P}}(m_{\mathrm{c}}t) \right).
		\end{equation}
		In this schedule, $i \in C(t)$ at least once every $\alpha_i$ successive time instants. This implies that $\mathbf{C}$ is a feasible schedule for WSP.
	\end{proof}
	
	Theorem~\ref{malpha} can be used to find a feasible schedule for WSP using a feasible schedule for PP. The converse of this theorem does not hold: if this method does not find a feasible schedule, 
	a feasible schedule for WSP may still exist. Nevertheless, Lemma~\ref{mc_lemma_1}  provides a sufficient condition to determine when a feasible schedule for WSP does not exist. Without loss of generality, assume \(\alpha_1 \leq \alpha_2 \leq \ldots \leq \alpha_q \) and define
	\begin{equation}
	\zeta_i:=\begin{cases}
	m_{\mathrm{c}} \alpha_i & i \leq m_{\mathrm{c}}\\
	m_{\mathrm{c}} \alpha_i+(m_{\mathrm{c}}-1) & i > m_{\mathrm{c}}
	\end{cases}.
	\label{converse_alpha}
	\end{equation}
	\begin{lemma}
		\label{mc_lemma_1}
		$ \{\zeta_i\} \notin \text{PP} \implies \{m_{\mathrm{c}},\{\alpha_i\}\} \notin \text{WSP}.$
	\end{lemma}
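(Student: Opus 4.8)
I would prove the contrapositive: assuming $\{m_{\mathrm{c}},\{\alpha_i\}\}\in\text{WSP}$, I construct a feasible schedule for the PP instance $\{\zeta_i\}$, which gives $\{\zeta_i\}\in\text{PP}$.

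Take a feasible WSP schedule $\mathbf{C}=C(1),C(2),\ldots$ with $C(t)=(c_1(t),\ldots,c_{m_{\mathrm{c}}}(t))$; we may assume the entries of each tuple are distinct, which is in particular the case in the $m_{\mathrm{c}}$-channel setting of Theorem~\ref{thm:equivalencemc} to which the lemma is applied. The first step is a \emph{normalisation}: for each $t$, permute the entries of $C(t)$ so that every symbol $i\in\{1,\ldots,m_{\mathrm{c}}\}$ occurring in $C(t)$ is placed in slot $i$. This is always possible, because the symbols not exceeding $m_{\mathrm{c}}$ that occur in $C(t)$ form a set $S\subseteq\{1,\ldots,m_{\mathrm{c}}\}$, the map $s\mapsto s$ on $S$ is injective into $\{1,\ldots,m_{\mathrm{c}}\}$, and since $|C(t)|=m_{\mathrm{c}}$ it extends to a bijection of $C(t)$ onto $\{1,\ldots,m_{\mathrm{c}}\}$, the remaining symbols (all $>m_{\mathrm{c}}$) being assigned to the leftover slots arbitrarily. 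Permuting within a tuple does not change which symbols it contains, so the normalised schedule is still WSP-feasible.

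The second step is to \emph{flatten} the normalised schedule into the single sequence $\mathbf{C}_{\mathrm{P}}$ obtained by concatenating the tuples, so that the symbol in slot $p$ of tuple $t$ occupies position $m_{\mathrm{c}}(t-1)+p$ of $\mathbf{C}_{\mathrm{P}}$. I then check the PP window requirement for each symbol by bounding the position of its first occurrence and the gap between consecutive occurrences. For $i\le m_{\mathrm{c}}$, symbol $i$ always sits in slot $i$: WSP-feasibility gives that consecutive occurrences lie in tuples $t_1<t_2$ with $t_2-t_1\le\alpha_i$, so they are $m_{\mathrm{c}}(t_2-t_1)\le m_{\mathrm{c}}\alpha_i=\zeta_i$ positions apart, and the first occurrence, lying in one of the first $\alpha_i$ tuples, is at position $\le m_{\mathrm{c}}(\alpha_i-1)+i\le m_{\mathrm{c}}\alpha_i=\zeta_i$. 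For $i>m_{\mathrm{c}}$, consecutive occurrences at $(t_1,p_1)$ and $(t_2,p_2)$ are $m_{\mathrm{c}}(t_2-t_1)+(p_2-p_1)\le m_{\mathrm{c}}\alpha_i+(m_{\mathrm{c}}-1)=\zeta_i$ positions apart, and the first occurrence is at position $\le m_{\mathrm{c}}(\alpha_i-1)+m_{\mathrm{c}}=m_{\mathrm{c}}\alpha_i\le\zeta_i$. Since bounding every consecutive-occurrence gap of $i$ by $\zeta_i$ together with a first occurrence within the first $\zeta_i$ positions is exactly the PP condition that every $\zeta_i$ consecutive symbols of $\mathbf{C}_{\mathrm{P}}$ contain $i$, the sequence $\mathbf{C}_{\mathrm{P}}$ is feasible for $\{\zeta_i\}$, and the proof is complete.

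The load-bearing idea, and the step I expect to be the crux, is the normalisation: dedicating slot $i$ to each of the $m_{\mathrm{c}}$ symbols that the lemma treats as ``cheap'' (under the assumed ordering $\alpha_1\le\cdots\le\alpha_q$ these are the symbols $1,\ldots,m_{\mathrm{c}}$) removes the $(m_{\mathrm{c}}-1)$ inter-tuple misalignment penalty for precisely those symbols, which is why $\zeta_i=m_{\mathrm{c}}\alpha_i$, rather than $m_{\mathrm{c}}\alpha_i+(m_{\mathrm{c}}-1)$, suffices when $i\le m_{\mathrm{c}}$. The remaining pieces --- the index bookkeeping under flattening and the standard equivalence between gap bounds and window coverage --- are routine.
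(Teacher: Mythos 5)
Your proof is correct and follows essentially the same route as the paper's: the same slot-dedication normalisation placing each symbol $i\in\{1,\ldots,m_{\mathrm{c}}\}$ in slot $i$, followed by flattening the tuples into a single sequence, with the same case split on $i\leq m_{\mathrm{c}}$ versus $i>m_{\mathrm{c}}$ that explains the two values of $\zeta_i$. The only difference is presentational --- you argue the contrapositive and verify PP-feasibility of the flattened sequence directly via gap bounds, whereas the paper argues by contradiction from a length-$\zeta_i$ window missing symbol $i$ --- which is an equivalent formulation of the same argument.
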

	\begin{proof}
		We proceed by contradiction. Assume $\{m_{\mathrm{c}},\{\alpha_i\}\} \in \text{WSP}$ with a corresponding feasible schedule $\mathbf{C}$, while $\{\zeta_i\}~\notin~\text{PP}$.
		Without loss of generality, assume that the labels $i \in \{1,\ldots,m_{\mathrm{c}}\}$ are arranged in $C(t)$ so as to satisfy
		\begin{equation}
		\label{eq:orderC}
		i \in C(t) \implies c_i(t)=i,
		\end{equation}
		while labels $i \in \{m_{\mathrm{c}},\ldots,q \}$ are arranged in an arbitrary order.
		Using the ordered $C(t)$, construct a schedule $\mathbf{C}_{\mathrm{P}}$ as
		\begin{equation}
		\label{eq:CP1}
		\mathbf{C}_{\mathrm{P}}=\left(c_1(1),\ldots,c_{m_{\mathrm{c}}}(1),\ldots,c_1(t),\ldots,c_{m_{\mathrm{c}}}(t),\ldots \right).
		\end{equation}
		If $\{\zeta_i\} \notin \text{PP}$, then there exists a $t_0>0$ and an $i \in  \{1,\ldots,q\}$ such that the sequence $\left( {c_{\mathrm{P}}}(t_0),\ldots,{c_{\mathrm{P}}}(t_0+\zeta_i-1)\right)$ does not contain label $i$, where $c_{\mathrm{P}}(t_0)$ is the entry $c_k(t)$ of $\mathbf{C}$. A pair of integers $(t,k)$ can be found such that $t \geq 0$, $k \in \{1,\ldots,m_{\mathrm{c}}\}$, and 
		\begin{equation}
		t_0=m_{\mathrm{c}}(t-1)+k.
		\end{equation}
		Consider the case $i \in \{1,\ldots,m_{\mathrm{c}}\}$. By \eqref{converse_alpha} we have $\zeta_i=m_{\mathrm{c}}\alpha_i$, such that the sequence $\left( {c_{\mathrm{P}}}(t_0),\ldots,{c_{\mathrm{P}}}(t_0+\zeta_i-1)\right)$ contains exactly $\alpha_i$ vectors $C(t),\ldots,C(t+\alpha_i-1)$ if $k=1$, or spans $\alpha_i+1$ vectors $C(t),\ldots,C(t+\alpha_i)$ if $k>1$. Hence, if $k\leq i$, by \eqref{eq:orderC} one can conclude $i \notin \left(C(t),\ldots,C(t+\alpha_i-1) \right)$, while if $k>i$, by \eqref{eq:orderC}, $i \notin \left(C(t+1),\ldots,C(t+\alpha_i) \right)$. In both cases $\mathbf{C}$ is not feasible, which contradicts our assumption. 
		\\
		Consider the case $i \in \{m_{\mathrm{c}},\ldots,q\}$. By \eqref{converse_alpha} we have $\zeta_i=m_{\mathrm{c}}(\alpha_i+1)-1$, such that the sequence $\left( {c_{\mathrm{P}}}(t_0),\ldots,{c_{\mathrm{P}}}(t_0+\zeta_i-1)\right)$ contains $\alpha_i$ subsequent vectors of the schedule $\mathbf{C}$ that does not contain label $i$.  This implies $\mathbf{C}$ is not a feasible schedule, which contradicts again our assumption. 
	\end{proof}
	
	According to Theorem~\ref{malpha} one can find a schedule for an instance of WSP using a schedule for an instance of PP. A common approach proposed in the literature consists in restricting the search to perfect schedules. We prove next that our heuristic 
	returns a feasible schedule if a perfect schedule exists; in addition, it can also return non-perfect schedules. As we will prove, cases exist when the WSP does not admit a perfect schedule while it does admit a non-perfect one. We will provide such example and show that our heuristic is able to solve it.
	
	The following lemma provides a sufficient condition to exclude existence of a perfect schedule.  An immediate corollary of this lemma and of Theorem~\ref{malpha} is that the heuristic in Theorem~\ref{malpha} can schedule all WSP instances that admit a perfect schedule.
	\begin{lemma}
		\label{mc_lemma_2}
		$ \{\tilde{\alpha}_i\} \notin \text{PP}  \implies \{m_{\mathrm{c}},\{\alpha_i\}\} \notin \text{WSP-perfect}$.
	\end{lemma}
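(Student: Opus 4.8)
The plan is to prove the contrapositive: assuming $\{m_{\mathrm{c}},\{\alpha_i\}\} \in \text{WSP-perfect}$, I will exhibit a feasible schedule for the PP instance $\{\tilde\alpha_i\}$, where $\tilde\alpha_i = m_{\mathrm{c}}\alpha_i$. So let $\mathbf{C}$ be a perfect windows schedule with $m_{\mathrm{c}}$ channels satisfying all the $\alpha_i$ deadlines and, crucially, property~\eqref{eq:perfect_schedule_condition}: every symbol $i$ lives on one fixed channel throughout the schedule. The natural candidate PP schedule is the ``row-major flattening'' used already in the proof of Lemma~\ref{mc_lemma_1}, namely
\begin{equation}
\mathbf{C}_{\mathrm{P}} := \left( c_1(1),\ldots,c_{m_{\mathrm{c}}}(1), c_1(2),\ldots,c_{m_{\mathrm{c}}}(2),\ldots \right),
\end{equation}
which interleaves the $m_{\mathrm{c}}$ channels time-step by time-step. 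I would then argue that because symbol $i$ sits on a single channel, say channel $\kappa_i$, its occurrences in $\mathbf{C}$ are at positions $c_{\kappa_i}(t_1), c_{\kappa_i}(t_2),\ldots$ with consecutive $t$-gaps at most $\alpha_i$; after flattening, the gap between two such occurrences in $\mathbf{C}_{\mathrm{P}}$ is exactly $m_{\mathrm{c}}$ times the gap in $t$, hence at most $m_{\mathrm{c}}\alpha_i = \tilde\alpha_i$. Therefore every window of $\tilde\alpha_i$ consecutive entries of $\mathbf{C}_{\mathrm{P}}$ contains symbol $i$, so $\{\tilde\alpha_i\} \in \text{PP}$, which is the contrapositive of the claim.

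The key steps, in order: (1) take the contrapositive and fix a perfect schedule $\mathbf{C}$; (2) for each symbol $i$ invoke perfectness to get a unique host channel $\kappa_i$ and conclude that the WSP deadline for $i$ translates into: consecutive occurrences of $i$ on channel $\kappa_i$ are at most $\alpha_i$ time-steps apart; (3) define $\mathbf{C}_{\mathrm{P}}$ by the row-major flattening; (4) carry out the index bookkeeping showing that a gap of $d \le \alpha_i$ time-steps in $\mathbf{C}$ becomes a gap of exactly $m_{\mathrm{c}} d \le m_{\mathrm{c}}\alpha_i$ positions in $\mathbf{C}_{\mathrm{P}}$ (here one should also treat the boundary: between the start of the schedule and the first occurrence of $i$, and account for the position of $\kappa_i$ within a block); (5) conclude that $\mathbf{C}_{\mathrm{P}}$ witnesses $\{\tilde\alpha_i\}\in\text{PP}$, and hence by contraposition the lemma holds. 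It is worth noting the contrast with Lemma~\ref{mc_lemma_1}: there the symbols could migrate, forcing the weaker bound $\zeta_i = m_{\mathrm{c}}\alpha_i + (m_{\mathrm{c}}-1)$ for $i>m_{\mathrm{c}}$; here perfectness pins each symbol to one channel, so no slack term is needed and the clean bound $\tilde\alpha_i = m_{\mathrm{c}}\alpha_i$ suffices.

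The main obstacle I anticipate is the off-by-one / boundary bookkeeping in step~(4): one must be careful that ``$i$ appears at least once within any $\alpha_i$ consecutive tuples'' in the WSP sense really does force occurrences of $i$ on its host channel to be spaced by $\le \alpha_i$ \emph{including} the windows straddling the first tuple, and that after flattening no window of length $\tilde\alpha_i$ can slip between two consecutive host-channel occurrences regardless of where the window starts within a block of $m_{\mathrm{c}}$. This is exactly the type of argument already rehearsed in the proof of Lemma~\ref{mc_lemma_1} (the case split on whether the window start index $k$ is $\le$ or $>$ the channel index), so I would mirror that reasoning; the only genuine simplification is that perfectness removes the need to pad by $m_{\mathrm{c}}-1$. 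Everything else — the definition of $\mathbf{C}_{\mathrm{P}}$, the translation of deadlines, the final conclusion — is routine once the flattening is set up correctly.
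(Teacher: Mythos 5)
Your proposal is correct and follows essentially the same route as the paper: prove the contrapositive by taking a perfect WSP schedule, flattening it row-major into $\mathbf{C}_{\mathrm{P}}$ as in \eqref{eq:CP1}, and using the fact that perfectness pins each symbol to a fixed channel so that a gap of $|t-t'|\le\alpha_k$ tuples becomes a gap of exactly $m_{\mathrm{c}}|t-t'|\le m_{\mathrm{c}}\alpha_k=\tilde\alpha_k$ positions. Your attention to the boundary bookkeeping is in fact slightly more careful than the paper's own write-up, which states the position computation $t_1=m_{\mathrm{c}}(t-1)+i$, $t_2=m_{\mathrm{c}}(t'-1)+i$ and concludes directly.
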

	\begin{proof}
		Assume $\mathbf{C}$ is a perfect schedule for WSP. Then, ${c}_i(t)={c}_j(t^\prime)=k$ implies $i=j$ for all $i,j \in \{1,\ldots,m_{\mathrm{c}}\}$ and $k \in  \{1,\ldots,q\}$. Consider the sequence $\mathbf{C}_{\mathrm{P}}$ as in \eqref{eq:CP1} where $t \geq 1$. Since $\mathbf{C}$ is a perfect schedule, ${c}_i(t)={c}_i(t^\prime)=k$ for every $k \in \{1,\ldots,q\}$. Furthermore, $|t-t^\prime|\leq \alpha_k$ implies ${c_{\mathrm{P}}}(t_1)={c_{\mathrm{P}}}(t_2)=k$ with $t_1=m_{\mathrm{c}}(t-1)+i,~t_2=m_{\mathrm{c}}(t^\prime-1)+i$. Hence, $|t_1-t_2|=m_{\mathrm{c}}|t-t^\prime|\leq m_{\mathrm{c}} \alpha_k$. Consequently, $\mathbf{C}_{\mathrm{P}}$ is a feasible schedule for PP.
	\end{proof}
	
	The following example shows that the converse of Theorem~\ref{malpha} does not hold in general, i.e., $\exists~ \{ m_{\mathrm{c}},\{\alpha_i\} \} \in \text{WSP}$ while $\{\tilde{\alpha}_i\} \notin \text{PP}$. This also indicates the importance of non-perfect schedules. 
	
	\begin{example}[Converse of Theorem~\ref{malpha}] \label{exp:conv2}
		Consider problem instance
		\begin{equation}
		\label{eq:Ex3_instance}
		\{m_{\mathrm{c}},\{\alpha_i\}\}= \{2,\{2,3,3,4,5,5,10\}\}.
		\end{equation}
		While $\{\tilde{\alpha}_i\} \notin \text{PP}$, a schedule with the cyclic part
		\begin{align}
		\mathbf{C}_{\mathrm{r}}=&C_1,C_2,C_3,C_4,C_5,C_6,C_1,C_7,C_8,C_9,\nonumber\\
		&C_5,C_9,C_3,C_{10},C_1,C_6,C_5,C_{11},C_8,C_2~,
		\end{align}
		is feasible for WSP where
		\begin{align}
		C_1&=(1,2),&C_2=(3,4),~&C_3=(1,6),&C_4=(2,5) \nonumber \\
		C_5&=(1,3),&C_6=(4,7),~&C_7=(3,6),&C_8=(1,5)\nonumber \\
		C_9&=(2,4),&C_{10}=(3,5),~&C_{11}=(2,6).&
		\end{align}
	\end{example}
	\begin{remark}
		Since $\{\tilde{\alpha}_i\} \notin \text{PP}$ in Example~\ref{exp:conv2}, $\{m_{\mathrm{c}},\{\alpha_i\}\} \notin \text{WSP-perfect}$ by Lemma ~\ref{mc_lemma_2}. However, $\{m_{\mathrm{c}},\{\alpha_i\}\} \in \text{WSP}$ which provides a negative answer to an open problem in the scheduling community, i.e., whether all feasible instances of WSP admit perfect schedules too, see \cite{bar2007windows}.
	\end{remark}
	
	The next example provides a case in which $ \{\tilde{\alpha}_i\} \in \text{PP} $ while $\{m_{\mathrm{c}},\{\alpha_i\}\} \notin \text{WSP-perfect}$. This implies that the proposed heuristic for WSP can return feasible schedules for instances in which there is no perfect schedule. 
	\begin{example}[] \label{exp:conv3}
		Consider the problem instance
		\begin{equation}
		\label{eq:Ex4_instance}
		\{m_{\mathrm{c}},\{\alpha_i\}\}= \{2,\{2,3,4,5,5,5,7,14\}\}.
		\end{equation}
		In order to find a perfect schedule, one can first compute all possible allocations of agents to one channel or the other, see Table~\ref{Table:bins}. One can verify that $I_1 \notin \text{PP}$ for all allocations in Table~\ref{Table:bins} where $I_1$ is the instance allocated to the first channel. Consequently, $\{m_{\mathrm{c}},\{\alpha_i\}\} \notin \text{WSP-perfect}$.
		\begin{table}[H]
			\caption{Channel allocation of the problem instance~\eqref{eq:Ex4_instance}}
			\label{Table:bins}
			\begin{center}
				\begin{tabular}{ c | c  c } 
					
					& Channel 1 & Channel 2\\
					\hline
					\multicolumn{1}{ c|  } {Allocation 1 }&  $\{2,3,7\}$  & $\{4,5,5,5,14\}$   \\
					\hline
					\multicolumn{1}{ c|  } {Allocation 2} & $\{2,3,14\}$  &  $\{4,5,5,5,7\}$   \\
					\hline
					\multicolumn{1}{ c | } {Allocation 3} & $\{3,5,5,7,14\}$     & $\{2,4,5\}$   \\
					\hline
					\multicolumn{1}{ c | } {Allocation 4} &  $\{2,4,7,14\}$  & $\{3,5,5,5\}$  \\
					\hline
					\multicolumn{1}{ c | } {Allocation 5} &  $\{3,4,5,7\}$ & $\{2,5,5,14\}$ \\
					\hline
					\multicolumn{1}{ c | } {Allocation 6} &  $\{3,4,5,7,14\}$ & $\{2,5,5\}$ \\
					\hline
					\multicolumn{1}{ c | } {Allocation 7} &  $\{3,4,5,5\}$ & $\{2,5,7,14\}$  \\
					\hline
				\end{tabular}
			\end{center}
		\end{table}
		However, a schedule with the following cyclic part is feasible for instance $ \{\tilde{\alpha}_i\}$ of PP
		\begin{align}
		{\mathbf{C}_{\mathrm{P}}}_{\mathrm{r}}=&2,3,4,1,7,6,2,1,5,3,2,1,4,3, \nonumber\\
		& 6,1,2,5,7,1,4,3,2,1,6,8,5,1.
		\end{align}
		This schedule can be transformed into a feasible schedule for WSP with the cyclic part
		\begin{align}
		\mathbf{C}_{\mathrm{r}}=&(2,3),(4,1),(7,6),(2,1),(5,3),(2,1),(4,3), \nonumber \\
		&(6,1),(2,5),(7,1),(4,3),(2,1),(6,8),(5,1).
		\end{align}
	\end{example}
	
	We propose the following algorithm to compute (possibly non-perfect) schedules for WSP.
	\begin{algorithm}
		\begin{algorithmic}[1]
			\footnotesize
			\State Define $\tilde{\alpha}_i$ as in \eqref{new_alpha}
			\If {$\{\tilde{\alpha}_i\} \in \text{PP}$}
			\State find the feasible schedule $\mathbf{C}_{\mathrm{P}}$ for instance $\{\tilde{\alpha}_i\}$ of PP
			\State define $C(t):=\left(c_{\mathrm{P}}(m_{\mathrm{c}}(t-1)+1),\ldots,c_{\mathrm{P}}(m_{\mathrm{c}}t)\right)$
			\State \Return {$\mathbf{C}:= \big{(}C(1),~C(2),\ldots\big{)}$ }
			\Else
			\State \Return{no schedule was found}
			\EndIf
		\end{algorithmic}
		\caption{A heuristic scheduling for WSP (\textbf{input}: $\{\{\alpha_i\}, m_{\mathrm{c}}\}$, \textbf{output}: $\boldsymbol{C}$) }
		\label{alg:WSP}
	\end{algorithm}

	Algorithm~\ref{alg:WSP} checks whether $\{\tilde{\alpha}_i\}$ is accepted by PP or not. Since 
	\begin{itemize}
		\item $\{\tilde{\alpha}_i\} \in \text{PP} \implies \{m_{\mathrm{c}},\{\alpha_i\}\} \in \text{WSP}$,
		\item $ \{\tilde{\alpha}_i\} \notin \text{PP}  \implies \{m_{\mathrm{c}},\{\alpha_i\}\} \notin \text{WSP-perfect}$,
		\item $ \exists~\{m_{\mathrm{c}},\{\alpha_i\}\}:~ \{\tilde{\alpha}_i\} \in \text{PP},~\{m_{\mathrm{c}},\{\alpha_i\}\} \notin \text{WSP-perfect}$,
	\end{itemize}
	Algorithm~\ref{alg:WSP} outperforms the current heuristics in the literature in the sense that it accepts more instances of WSP.

	While $\rho(I) \leq 0.5 m_{\mathrm{c}}$ is a sufficient condition for schedulability of WSP~\cite{bar2003windows}, we provide alternative, less restrictive sufficient conditions in the following theorem.
	\begin{theorem}
		\label{thm:schedulability_thresholdmc2}
		Given an instance $I=\{m_{\mathrm{c}},\{\alpha_i\}\}$ of WSP,
		\begin{enumerate}
			\item if $\rho(I)\leq 0.75m_{\mathrm{c}}$ then $I \in \text{WSP}$,
			\item if $\rho(I)\leq \frac{5}{6} m_{\mathrm{c}}$ and $I$ has only three symbols then $I \in \text{WSP}$.
		\end{enumerate}
	\end{theorem}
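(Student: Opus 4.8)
The plan is to reduce both claims to the schedulability conditions for PP (Theorem~\ref{thm:schedulability_threshold}) via the heuristic reduction already established in Theorem~\ref{malpha}. Recall that Theorem~\ref{malpha} gives $\{\tilde{\alpha}_i\} \in \text{PP} \implies \{m_{\mathrm{c}},\{\alpha_i\}\} \in \text{WSP}$, with $\tilde{\alpha}_i := m_{\mathrm{c}}\alpha_i$. So it suffices to show that, under each density hypothesis, the PP instance $\{\tilde{\alpha}_i\}$ is accepted.

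First I would compute the density of the reduced PP instance:
\[
\rho(\{\tilde{\alpha}_i\}) = \sum_i \frac{1}{\tilde{\alpha}_i} = \sum_i \frac{1}{m_{\mathrm{c}}\alpha_i} = \frac{1}{m_{\mathrm{c}}}\,\rho(I).
\]
For part~1), the hypothesis $\rho(I) \le 0.75\,m_{\mathrm{c}}$ yields $\rho(\{\tilde{\alpha}_i\}) \le 0.75$, so condition~2 of Theorem~\ref{thm:schedulability_threshold} gives $\{\tilde{\alpha}_i\} \in \text{PP}$, and Theorem~\ref{malpha} then gives $I \in \text{WSP}$. For part~2), I would first note that the map $\alpha_i \mapsto m_{\mathrm{c}}\alpha_i$ does not change the number of distinct symbols, so if $I$ has only three symbols then so does $\{\tilde{\alpha}_i\}$; the hypothesis $\rho(I) \le \tfrac{5}{6}m_{\mathrm{c}}$ gives $\rho(\{\tilde{\alpha}_i\}) \le \tfrac{5}{6}$, so condition~4 of Theorem~\ref{thm:schedulability_threshold} applies and $\{\tilde{\alpha}_i\} \in \text{PP}$, whence $I \in \text{WSP}$ again by Theorem~\ref{malpha}.

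There is no substantive obstacle here: the argument is a one-line density rescaling composed with the reduction of Theorem~\ref{malpha}. The only point needing a moment's care is that the property ``has only three symbols'' is preserved by the reduction, which is immediate since $\tilde{\alpha}_i = m_{\mathrm{c}}\alpha_i$ is a monotone relabeling. I would also remark that conditions~3 and~5 of Theorem~\ref{thm:schedulability_threshold} could in principle be transported the same way, but the hypothesis ``$\exists\, i:\alpha_i=2$'' would become ``$\exists\, i:\alpha_i = 2/m_{\mathrm{c}}$'', which is vacuous for $m_{\mathrm{c}}>2$, so only the two stated conditions yield clean WSP statements.
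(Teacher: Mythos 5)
Your proof is correct and follows exactly the paper's route: the paper's own proof is the one-line observation that the result is a direct consequence of Theorems~\ref{thm:schedulability_threshold} and~\ref{malpha}, and your density rescaling $\rho(\{\tilde{\alpha}_i\})=\rho(I)/m_{\mathrm{c}}$ together with conditions~2 and~4 of Theorem~\ref{thm:schedulability_threshold} is precisely the detail that one-liner leaves implicit. Nothing is missing.
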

	\begin{proof}
		This is a direct consequence of Theorems~\ref{thm:schedulability_threshold} and \ref{malpha}.
	\end{proof}
	
	\section{Main Results: online scheduling}
	\label{section:results_online}
	
	The scheduling techniques, proposed in Section~\ref{section:results_offline} are computed offline solely based on the information available \textit{a priori} and without any online adaptation. The main drawback of this setup is the conservativeness stemming from the fact that the robust invariance condition \eqref{eq:safe_time} must hold for all admissible initial conditions and disturbances. Moreover, packet losses are not explicitly accounted for. This issue has been partially addressed in \cite{ECC}, where an online adaptation of the schedule has been proposed for the single channel case. 
	
	Here we exploit the fact that, differently from offline scheduling, information about the state is available through current or past measurements and can be used to compute less conservative reachable sets in a similar fashion to \eqref{eq:safe_time}. In Section~\ref{Sec_online_subsec_online}, we show that the online scheduling significantly reduces conservativeness. Then, in Section~\ref{subsec:PacketLoss} we extend the results in \cite{ECC} and also provide necessary and sufficient conditions for the existence of a feasible schedule in case of lossy communication link.
	\subsection{Online Scheduling without Packet Losses}
	\label{Sec_online_subsec_online}
	In this subsection we show, under the assumption of no packet losses, how the schedule can be optimized online, based on the available information.
	
	Our strategy is to start with a feasible offline schedule, which we call the \textit{baseline schedule}. Such schedule is then shifted based on estimates of the safe time intervals, which are built upon the current state. In fact, while in equation \eqref{eq:safe_time} the safe time interval is defined as the solution of a reachability problem with $\mathcal{S}_{i,\infty}$ as the initial set, the scheduler may have a better set-valued estimate of the current state of each agent than the whole $\mathcal{S}_{i,\infty}$. This estimate, which we call $\mathcal{O}$, can in general be any set with the following properties, for all $t \geq 0$:
	\begin{subequations}
		\begin{align}
		\label{eq:SchedulerCoDomainDefProperties}
		&\left( x(\tau),\hat{x}(\tau),u(\tau) \right) \in \mathcal{O}(\tau), &\forall \tau \in  \{0,\ldots,t\}, \\
		& \mathcal{O}(t) \subseteq \mathcal{S}_{\infty}, & \text{if}~\delta=1,\\
		&\mathcal{O}(t) \subseteq \Reach{\hat{F}}{1}{\mathcal{O}(t-1)}.
		\end{align}
	\end{subequations}
	
		\begin{example}
			Consider a case in which several automated vehicles are to cross an intersection and the crossing order is communicated to them from the infrastructure, equipped with cameras to measure the states of the vehicles. This corresponds to an SC network, as described by Equation~\eqref{eq:sc_network} in Example~\ref{Exp:case i}, with a scheduler that can measure the state of all agents at all time, but the state measurements have additive noise, i.e., $x_i^s(t)=x_i(t)+w_i(t)$ where $\omega_i \in \mathcal{W}_i$ and $0 \in \mathcal{W}_i$. 
			Then, 
			\begin{equation}
			\mathcal{O}_i(t):=\begin{bmatrix}
			x_i^s(t) \oplus (-\mathcal{W}_i)\\
			(A_i+B_iK_i)^{(t-\tau_i^C(t-1))}x_i^s(\tau_i^C(t-1))\\
			K_i (A_i+B_iK_i)^{(t-\tau_i^C(t-1))}x_i^s(\tau_i^C(t-1))
			\end{bmatrix},
			\end{equation}
			where subscript $i$ refers to agent $i$, and $\tau_i^C$ is the last time when agent $i$ was connected.
			
		\end{example}
	
	Based on set $\mathcal{O}_i(t)$ available at time $t$, we can compute a better estimate of the safe time interval. Let us define this estimate, function of $t$, as follows:
	\begin{equation}
	\gamma^x_i(t):=\max \left\{ t^\prime: \Reach{\hat{F}_i}{t^\prime}{\mathcal{O}_i(t)} \subseteq \MR{i} \right\}.
	\label{eq:GammaOfXDef}
	\end{equation}
	Equations \eqref{eq:safe_time} and \eqref{eq:GammaOfXDef} imply that, for any feasible schedule $\mathbf{C}$,
	\begin{equation}
	\label{eq:alpha_less_than_gamma}
	\gamma_i^x(t)\geq \alpha_i-(t-\tau_i^{\boldsymbol{\mathbf{C}}}(t)),  \qquad \forall i \in  \{1,\ldots,q\},~ \forall \, t>0.
	\end{equation}
	Let us now introduce, for any arbitrarily defined schedule $\mathbf{C}_{\mathrm{o}}$, the quantity
	\begin{equation}
	\gamma^{\mathbf{C}}_i(t):=\min \{t^\prime \ge t:~ i \in C(t^\prime)\}-t,
	\label{eq:GammaDef}
	\end{equation}
	which measures how long agent $i$ will have to wait, at time $t$, before being connected. Using \eqref{eq:GammaDef} and \eqref{eq:GammaOfXDef}, we can formulate a condition for the schedule $\mathbf{C}_{\mathrm{o}}$ to be feasible.
	\begin{definition}[Online feasible schedule]
		\label{pr:gamma_scheduling}
		A schedule~$\mathbf{C}_{\mathrm{o}}$ is online feasible if the \emph{safety residuals} $\mathbf{r}(\mathbf{C}_{\mathrm{o}},t)$ defined as 
		\begin{equation}\label{eq:residuals}
		r_i(\mathbf{C}_{\mathrm{o}},t):=\gamma^x_i(t)-\gamma^{\mathbf{C}_{\mathrm{o}}}_i(t), ~ \forall i \in  \{1,\ldots,q\},
		\end{equation}
		are non-negative, for all $t$, with $\gamma^x(t)$  defined in \eqref{eq:GammaOfXDef} and $\gamma^{\mathbf{C}_{\mathrm{o}}}(t)$ defined in \eqref{eq:GammaDef}.
	\end{definition}
	
	In the job scheduling literature (e.g., \cite{Pinedo08}), the quantities $\gamma^{\mathbf{C}}_i$ correspond to the \textit{completion times} of job $i$, the quantities $\gamma^x_i$ are the \textit{deadlines}, and the quantity $\gamma^{\mathbf{C}}_i-\gamma^x_i = -r_i$ is the \textit{job lateness}.  A schedule for $q$ jobs with deadlines is feasible provided that the maximum lateness is non-positive, that is, all safety residuals are non-negative.
	
	In the following, we formulate an optimization problem to find a recursively feasible online schedule using safety residuals and shifts of the baseline schedule. 
	Given a cycle~$\mathbf{C}_{\mathrm{r}}:=\left( C(1),\ldots,C(T_{\mathrm{r}})\right)$ of the baseline schedule, let
	\begin{equation}
	\label{eq:rotated_schedule}
	R(\mathbf{C}_\mathrm{r},j):=\left( C(j),\ldots,C(T_{\mathrm{r}}),C(1),\ldots,C(j-1) \right)
	\end{equation}
	be a rotation of the sequence $\mathbf{C}_{\mathrm{r}}$ with $j \in \{1,\ldots,T_{\mathrm{r}}\}$. Then, one can compute the shift of the baseline schedule $\mathbf{C}$ which maximizes the minimum safety residual by solving the following optimization problem
	%
	\begin{equation}
	j_t^*:= \arg \max_{j} ~ \min_i \ r_i(R(\mathbf{C}_\mathrm{r},j),t). \label{eq:ResOpt}
	\end{equation}
	The online schedule maximizing the safety residual is then
	\begin{align}
	\label{eq:Optimal_Online_Schedule}
	C^*(t) := C(j_t^*).
	\end{align}
	
	\begin{proposition}
		\label{prop1}
		Assume that the baseline schedule $\mathbf{C}$ is feasible for P\ref{pr:schedulability}. Then, the online schedule $\mathbf{C}^*$ is feasible for P\ref{pr:schedulability}.
	\end{proposition}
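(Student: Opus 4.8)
The plan is to show that the online schedule $\mathbf{C}^*$ is online feasible in the sense of Definition~\ref{pr:gamma_scheduling}, i.e., that all safety residuals are non-negative, and then to conclude feasibility for P\ref{pr:schedulability}: if $r_i(\mathbf{C}^*,t)\ge 0$ for all $i$ and $t$ then $\gamma^x_i(t)\ge\gamma^{\mathbf{C}^*}_i(t)\ge 0$, which by \eqref{eq:GammaOfXDef} forces $\mathcal{O}_i(t)\subseteq\MR{i}$ and hence, by the defining properties~\eqref{eq:SchedulerCoDomainDefProperties} of $\mathcal{O}_i$, $\big(x_i(t),\hat{x}_i(t),u_i(t)\big)\in\mathcal{O}_i(t)\subseteq\MR{i}$ for all $t$. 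So it suffices to prove that the max--min in \eqref{eq:ResOpt} is non-negative at every time step, equivalently that the chosen rotation always yields $\min_i r_i\big(R(\mathbf{C}_{\mathrm{r}},j^*_t),t\big)\ge 0$. I would prove this by induction on $t$, carrying the joint invariant: (i) $\big(x_i(s),\hat{x}_i(s),u_i(s)\big)\in\MR{i}$ for all $i$ and all $s\le t$, and (ii) $\min_i r_i\big(R(\mathbf{C}_{\mathrm{r}},j^*_t),t\big)\ge 0$.

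For the base case, $\big(x_i(0),\hat{x}_i(0),u_i(0)\big)\in\MR{i}$ is assumed, and feasibility of the baseline $\mathbf{C}$ gives, through Lemma~\ref{schedule_Cbar}, that symbol $i$ occurs in every $\alpha_i$ consecutive positions of the cycle $\mathbf{C}_{\mathrm{r}}$; combining this with \eqref{eq:alpha_less_than_gamma} shows that the unrotated baseline schedule $R(\mathbf{C}_{\mathrm{r}},1)$ already has non-negative residuals at $t=0$, so (ii) holds initially.

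The crux is the inductive step, and the key idea is that, although $j^*_t$ may change at every step, the \emph{continuation} rotation $j':=(j^*_t\bmod T_{\mathrm{r}})+1$ is always an admissible fallback at time $t+1$. I would split into two cases. If $i\notin C(j^*_t)$ (agent $i$ not connected at $t$), then by construction $\gamma^{R(\mathbf{C}_{\mathrm{r}},j')}_i(t+1)=\gamma^{R(\mathbf{C}_{\mathrm{r}},j^*_t)}_i(t)-1$, while the properties~\eqref{eq:SchedulerCoDomainDefProperties} of $\mathcal{O}_i$ give $\gamma^x_i(t+1)\ge\gamma^x_i(t)-1$; hence $r_i\big(R(\mathbf{C}_{\mathrm{r}},j'),t+1\big)\ge r_i\big(R(\mathbf{C}_{\mathrm{r}},j^*_t),t\big)\ge 0$ by invariant (ii). If $i\in C(j^*_t)$ (agent $i$ connected at $t$), then feasibility of the baseline together with Lemma~\ref{schedule_Cbar} gives $\gamma^{R(\mathbf{C}_{\mathrm{r}},j')}_i(t+1)\le\alpha_i-1$, while \eqref{eq:alpha_less_than_gamma} applied along the trajectory generated by $\mathbf{C}^*$ — which by invariant (i) has kept agent $i$ in $\MR{i}$ up to $t$ — yields $\gamma^x_i(t+1)\ge\alpha_i-1$; hence $r_i\big(R(\mathbf{C}_{\mathrm{r}},j'),t+1\big)\ge 0$ again. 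In both cases $\max_j\min_i r_i\big(R(\mathbf{C}_{\mathrm{r}},j),t+1\big)\ge\min_i r_i\big(R(\mathbf{C}_{\mathrm{r}},j'),t+1\big)\ge 0$, which gives invariant (ii) at $t+1$; invariant (i) at $t+1$ then follows exactly as in the first paragraph.

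I expect the main obstacle to be precisely the one the continuation-rotation argument handles: because $\mathbf{C}^*$ is re-optimized online, feasibility of a single fixed schedule cannot simply be carried over, and one must argue that a rotation with non-negative residuals never disappears. The remaining ingredients are routine consequences of the definitions: that the waiting time $\gamma^{R(\mathbf{C}_{\mathrm{r}},\cdot)}_i$ drops by exactly one when passing from $(j,t)$ to $(j+1,t+1)$ provided agent $i$ is not scheduled at $t$; that $\gamma^x_i$ drops by at most one per disconnected step and is reset to at least $\alpha_i-1$ one step after a connection; and the index/timing bookkeeping. The only slightly delicate point is that \eqref{eq:alpha_less_than_gamma} is invoked along the not-yet-certified schedule $\mathbf{C}^*$, which is legitimate because its instance at $t+1$ depends only on the portion of $\mathbf{C}^*$ up to the last connection $\le t$, already certified by the inductive hypothesis.
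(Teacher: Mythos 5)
Your proof is correct and follows essentially the same route as the paper: both arguments rest on the observation that the continuation of the currently selected rotation remains an admissible fallback at the next time step, so the max--min in \eqref{eq:ResOpt} never becomes negative, and then recurse. Your version merely makes explicit (via the two-case analysis of whether agent $i$ was connected and the invariant on $\MR{i}$) the steps the paper compresses into ``this argument can be used recursively.''
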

	
	\begin{proof}
		At time $t=1$, the baseline schedule is a feasible schedule which implies  $\min_i r_i(R(\mathbf{C}_\mathrm{r},1),1) \geq 0$. As a result, $\min_i r_i(R(\mathbf{C}_\mathrm{r},j_1^*),1) \geq 0$ by construction and schedule $\tilde{\mathbf{C}}$, defined as
		\begin{equation}
		\tilde{\mathbf{C}}:=C(j_1^*),\ldots,C(T_{\mathrm{r}}),\mathbf{C}_{\mathrm{r}},\mathbf{C}_{\mathrm{r}},\ldots~,
		\end{equation}
		is a feasible schedule. Since $C^*(1)=C(j_1^*)$,  
		$$\min_i r_i(R(\mathbf{C}_\mathrm{r},j_1^*+1),2) \geq 0,$$
		which implies 
		$$\min_i r_i(R(\mathbf{C}_\mathrm{r},j_2^*),2) \geq 0.$$
		Consequently,
		\begin{equation}
		\tilde{\mathbf{C}}:=C(j_1^*),C(j_2^*),C(j_2^*+1),\ldots,C(T_{\mathrm{r}}),\mathbf{C}_{\mathrm{r}},\ldots~,
		\end{equation}
		is a feasible schedule. This argument can be used recursively which implies $\mathbf{C}^*$ is a feasible schedule.
	\end{proof}
	\begin{remark}
			The schedule~\eqref{eq:Optimal_Online_Schedule} maximizes the minimum residual, as shown in~\eqref{eq:ResOpt}. That is, the communication is scheduled for the system which is closest to exit~$\mathcal{S}_{\infty}$. Clearly, any function of the residuals could be used. For example, the residuals could be weighted, thus reflecting the priority given to the constraints to be satisfied. \end{remark}
	
	\subsection{Robustness Against Packet Loss}
	\label{subsec:PacketLoss}
	In this subsection, we drop the assumption of no packet losses in the communication link and we consider a communication protocol which has packet delivery acknowledgment. We provide a reconnection strategy to overcome packet losses when the baseline schedule satisfies a necessary condition. Furthermore, we provide the necessary and sufficient conditions for the existence of robust schedules in the presence of packet losses. Then, using these and the results in Section~\ref{Sec_online_subsec_online}, we provide an algorithm to compute an online schedule that is robust to packet losses.. 
	
	Let us consider the binary variable $\nu(t) \in \{0,1\}$, with $\nu(t)=1$ indicating that the packet sent at time $t$ was lost. This binary variable is known to the scheduler if, as we assume, an acknowledgment-based protocol is used for communication.  
	Let us also assume that the maximum number of packets that can be lost in a given amount of time is bounded.
	\begin{assumption} No more than $n_{l,i}$ packets are lost in $\alpha_i$ consecutive steps, i.e.,
		\label{ass:packet_losses}
		\begin{align}
		\sum_{j=t}^{t+\alpha_i-1}\nu(j) \leq n_\mathrm{l,i}, && \forall\ i \in \{1,\ldots,q\},\ \forall \, t \geq 0.
		\label{eq:PacketLoss}
		\end{align}
	\end{assumption}
	Note that \eqref{eq:PacketLoss} defines $q$ different inequalities which must be satisfied at the same time. Additionally, we assume that when a packet is lost, the whole information exchanged at time $t$ is discarded.

	\begin{problem}[P\ref{pr:schedulability_losses}]
		\label{pr:schedulability_losses}
		Given the set of $q$ agents, each described by \eqref{eq:general_system}, an admissible set $\m{A}:= \m{A}_1\times\ldots\times\m{A}_q$, and the set $\boldsymbol{\mathcal{C}}$ of connection patterns \eqref{eq:binary_vectors}, determine if there exists an infinite sequence over the elements of $\boldsymbol{\mathcal{C}}$ such that, 
		\begin{equation}
		z_i(t) \in \MR{i},~\forall z_i(0) \in \MR{i}, \ v_i(t)\in\m{V}_i,~i\in\{1,\ldots,q\},
		\end{equation}
		for $t>0$, provided that Assumption~\ref{ass:packet_losses} holds. 
	\end{problem}
	A schedule solving P\ref{pr:schedulability_losses} is any sequence of $C_j$ such that every agent $i$ is connected at least once every $\alpha_i$ steps in the presence of packet losses satisfying Assumption~\ref{ass:packet_losses}. Instance $I:=\{\boldsymbol{\mathcal{C}},\{\alpha_i\},\{n_\mathrm{l,i} \} \}$ is accepted, i.e.,
	\begin{equation}
	\label{eq:instanceP2}
	I \in \text{P\ref{pr:schedulability_losses}},
	\end{equation}
	if and only if a schedule $\mathbf{C}$ exists that satisfies the scheduling requirements. 
	
	Given a feasible baseline schedule $\mathbf{C}$, we define a \textit{shifted schedule} $\bar{\mathbf{C}}$ as
	\begin{equation}
	\label{eq:schedule_with_losses}
	\bar{C}(t) :=C \left(t-\sum_{j=0}^{t-1} \nu(j)\right),
	\end{equation}
	to compensate the effects of packet losses. We define the maximum time between two successive connections of agent $i$, based on schedule $\mathbf{C}$ as
	\begin{equation}
	T_i := \left( 1+\max_t \ t-\tau_i^{\mathbf{C}}(t)\right),
	\end{equation}
	where the latest connection time $\tau_i^{\mathbf{C}}(t)$ is defined in \eqref{eq:TauLastMeasurement}. Feasibility of the baseline schedule implies $T_i\leq \alpha_i$, for all $i$. 
	
	We prove next that Assumption~\ref{ass:packet_losses} can be used to provide a sufficient condition for the shifted schedule~$\bar{\mathbf{C}}$ to be feasible under packet losses.
	\begin{theorem}[Schedulability under packet losses]
		\label{thm:PacketLoss}
		Let Assumption~\ref{ass:packet_losses} be verified.
		Schedule~$\bar{\mathbf{C}}$ defined in~\eqref{eq:schedule_with_losses} is feasible for $\mathrm{P}\ref{pr:schedulability_losses}$ if and only if
		\begin{equation}
		\alpha_i-T_i\geq n_{l,i},  \qquad  \forall i \in  \{1,\ldots,q\}.
		\end{equation}
	\end{theorem}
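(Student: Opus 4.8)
The plan is to reduce feasibility of $\bar{\mathbf{C}}$ to a counting statement about a ``frozen clock'' and then prove the two implications separately. First I would record, exactly as in Lemma~\ref{schedule_Cbar}, that $\bar{\mathbf{C}}$ is feasible for $\mathrm{P}\ref{pr:schedulability_losses}$ if and only if, for every loss sequence $\nu$ satisfying Assumption~\ref{ass:packet_losses}, every agent $i$ is \emph{successfully} connected (connected by a non-lost packet) at least once in every window of $\alpha_i$ consecutive time steps. The key structural remark is that, writing $s(t):=t-\sum_{j=0}^{t-1}\nu(j)$, one has $\bar{C}(t)=C(s(t))$ and $s(t+1)=s(t)+1-\nu(t)$; hence the baseline index $s$ advances by exactly one on each successful transmission and freezes on each lost one. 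Therefore, on a time window $W$ carrying $N$ successful transmissions, the patterns actually realized in $W$ are $C(k_0),C(k_0+1),\dots,C(k_0+N-1)$ for some $k_0$, i.e.\ a block of $N$ consecutive baseline patterns. I would also note, from $T_i=1+\max_t(t-\tau_i^{\mathbf C}(t))$ and maximality of the maximizer, two facts about the feasible baseline: every block of $T_i$ consecutive patterns of $\mathbf{C}$ contains agent $i$, and there is a block of $T_i$ consecutive patterns starting with an occurrence of $i$ and containing no further occurrence of $i$ (a gap of length exactly $T_i$).

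For sufficiency ($\alpha_i-T_i\ge n_{l,i}\ \forall i\Rightarrow\bar{\mathbf{C}}$ feasible): fix $i$ and a window $W$ of $\alpha_i$ consecutive steps. By Assumption~\ref{ass:packet_losses} at most $n_{l,i}$ of these carry a lost packet, so $W$ contains $N\ge\alpha_i-n_{l,i}\ge T_i$ successful transmissions. By the structural remark these realize $N\ge T_i$ consecutive baseline patterns, which by the first baseline fact contain an occurrence of agent $i$; the corresponding transmission is successful and lies in $W$, so $i$ is successfully connected within $W$. Since $i$ and $W$ were arbitrary, $\bar{\mathbf{C}}$ is feasible for $\mathrm{P}\ref{pr:schedulability_losses}$.

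For necessity I would argue by contraposition: assume $\alpha_{i^\star}-T_{i^\star}<n_{l,i^\star}$, i.e.\ $\ell:=\alpha_{i^\star}-T_{i^\star}+1\le n_{l,i^\star}$, for some $i^\star$, and construct a loss sequence defeating $\bar{\mathbf{C}}$. Take a baseline index $k$ beginning a gap of length exactly $T_{i^\star}$ for $i^\star$ (second baseline fact), let no packet be lost before the real time $\theta$ at which $C(k)$ is realized, drop the next $\ell$ packets at times $\theta+1,\dots,\theta+\ell$, and keep all later packets. The frozen clock then stays at $k+1$ for $\ell$ steps and afterwards realizes $C(k+1),C(k+2),\dots$ at rate one per step, so the next occurrence of $i^\star$, at index $k+T_{i^\star}$, is realized at time $\theta+\ell+T_{i^\star}=\theta+\alpha_{i^\star}+1$. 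Hence the $\alpha_{i^\star}$ steps $\{\theta+1,\dots,\theta+\alpha_{i^\star}\}$ form a window of length $\alpha_{i^\star}$ with no successful connection of $i^\star$, so $\bar{\mathbf{C}}$ is infeasible. It remains to check that $\nu$ satisfies Assumption~\ref{ass:packet_losses}: it has at most $\ell\le n_{l,i^\star}$ losses, all inside one window of length $\alpha_{i^\star}$, which respects $i^\star$'s bound; for the other agents the losses can be spread inside that window.

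The main obstacle will be this last step of the necessity argument: making the adversarial loss sequence respect \emph{all} $q$ inequalities of Assumption~\ref{ass:packet_losses} at once, since a burst tuned to $i^\star$'s slack could overshoot a tighter bound $n_{l,i'}$ of another agent with a shorter horizon $\alpha_{i'}$; some care in spreading the $\ell$ losses within the critical window (or a mild consistency hypothesis relating the $n_{l,i}$) is needed here. A secondary, purely clerical nuisance is the bookkeeping of off-by-one effects at window boundaries and the convention $\tau_i^{\mathbf C}=0$ on the empty set, which is harmless once the baseline is taken cyclic.
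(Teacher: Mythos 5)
Your sufficiency argument is correct and is essentially the paper's own proof made precise: the paper argues in three sentences that a worst-case baseline gap of $T_i$ can be stretched by at most $n_{l,i}$ retransmissions within any $\alpha_i$-step window, hence $\alpha_i-T_i\geq n_{l,i}$ suffices. Your ``frozen clock'' formalization ($s(t+1)=s(t)+1-\nu(t)$, so the successful transmissions in any window realize a contiguous block of baseline patterns) is a cleaner way to say the same thing, and the counting $N\geq\alpha_i-n_{l,i}\geq T_i$ is exactly the inequality the paper invokes.

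The interesting part is the necessity direction, which the paper's proof silently omits: despite the ``if and only if'' in the statement, the printed argument only establishes the ``if''. Your adversarial construction is the natural attempt, and the obstacle you flag at the end is not a clerical worry but a genuine one --- in fact it shows the ``only if'' direction is \emph{false} as stated. Take $q=2$ with $\alpha_1=2$, $n_{l,1}=0$: Assumption~\ref{ass:packet_losses} then forces $\nu\equiv 0$, so $\bar{\mathbf{C}}=\mathbf{C}$ is feasible whenever the baseline is, regardless of whether $\alpha_2-T_2\geq n_{l,2}$ holds for the second agent. More generally, the number of losses an adversary can legally pack between an occurrence of $C(k)$ and the realization of $C(k+T_{i^\star})$ is limited by \emph{every} constraint $\sum_{j=t}^{t+\alpha_{i'}-1}\nu(j)\leq n_{l,i'}$, and when the tightest of these caps the total below $\alpha_{i^\star}-T_{i^\star}+1$ your defeating sequence cannot be built. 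So your proof is complete for the direction the paper actually proves, and your construction settles necessity only under an additional compatibility hypothesis on the pairs $(\alpha_i,n_{l,i})$ (e.g.\ a single agent, or bounds all induced by one common per-step loss budget); without such a hypothesis no proof of the converse can exist, because the converse fails.
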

	\begin{proof}
		In the error-free schedule $\mathbf{C}$, two consecutive appearances of a symbol $i$ are at most $T_i$ steps apart. In the schedule $\bar{\mathbf{C}}$, during $\alpha_i$ steps at most $n_{l,i}$ retransmissions take place.  Hence, if $\alpha_i-T_i\geq n_{l,i}$, two consecutive occurrences of symbol $i$ are never spaced more than $\alpha_i$ steps, ensuring feasibility of the schedule.
	\end{proof}
	In the sequel, we provide necessary and sufficient conditions for the existence of a baseline schedule which is robust to packet losses. To that end, we define a new set of safe time intervals as
	\begin{equation}
	\big{\{}\beta_i: \beta_i=\alpha_i-n_{l,i},\ \forall i \in  \{1,\ldots,q\} \big{\}}.
	\label{eq:thcon}
	\end{equation}
	\begin{theorem}
		Assume that the communication channel satisfies Assumption~\ref{ass:packet_losses}. Then, 
		\begin{align*}
		\{\boldsymbol{\mathcal{C}},\{\alpha_i\},\{n_\mathrm{l,i} \} \}\in \mathrm{P}\ref{pr:schedulability_losses}&&\Leftrightarrow&&\{\boldsymbol{\mathcal{C}},\{\beta_i\} \}\in \mathrm{P}\ref{pr:schedulability},
		\end{align*}
		with $\beta_i$ defined in \eqref{eq:thcon}.
		\label{thm:PacketLoss2}
	\end{theorem}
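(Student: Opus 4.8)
The plan is to obtain both implications as short corollaries of Theorem~\ref{thm:PacketLoss} and Lemma~\ref{schedule_Cbar}, after recording one elementary fact. \emph{Claim:} for any sequence $\mathbf{C}$ of patterns in $\boldsymbol{\mathcal{C}}$, $\mathbf{C}$ is feasible for $\mathrm{P}\ref{pr:schedulability}$ with the safe intervals $\{\beta_i\}$ if and only if $T_i\le\beta_i$ for all $i$. This follows from Lemma~\ref{schedule_Cbar} applied to the instance $\{\boldsymbol{\mathcal{C}},\{\beta_i\}\}$: feasibility is equivalent to $0\le d_i(t)\le\beta_i-1$ for all $i$ and $t>0$, and since $d_i(t)=t-\tau_i^{\mathbf{C}}(t)\ge 0$ always and $T_i=1+\max_t d_i(t)$, this amounts precisely to $T_i\le\beta_i$ for every $i$.

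First I would prove the implication $\{\boldsymbol{\mathcal{C}},\{\beta_i\}\}\in\mathrm{P}\ref{pr:schedulability}\Rightarrow\{\boldsymbol{\mathcal{C}},\{\alpha_i\},\{n_{l,i}\}\}\in\mathrm{P}\ref{pr:schedulability_losses}$. Take a schedule $\mathbf{C}$ feasible for $\mathrm{P}\ref{pr:schedulability}$ with the intervals $\{\beta_i\}$ and use it as a baseline; the claim gives $T_i\le\beta_i=\alpha_i-n_{l,i}$, i.e.\ $\alpha_i-T_i\ge n_{l,i}$ for all $i$. By Theorem~\ref{thm:PacketLoss}, the shifted schedule $\bar{\mathbf{C}}$ of~\eqref{eq:schedule_with_losses} is then feasible for $\mathrm{P}\ref{pr:schedulability_losses}$ under Assumption~\ref{ass:packet_losses}, so the instance is accepted.

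For the converse, I would take a schedule solving $\mathrm{P}\ref{pr:schedulability_losses}$; following the reconnection strategy of Section~\ref{subsec:PacketLoss} it is the shifted schedule $\bar{\mathbf{C}}$ built from some baseline $\mathbf{C}$ over $\boldsymbol{\mathcal{C}}$. Since $\bar{\mathbf{C}}$ is feasible, the ``only if'' part of Theorem~\ref{thm:PacketLoss} forces $\alpha_i-T_i\ge n_{l,i}$, hence $T_i\le\beta_i$, for every $i$. By the claim, the baseline $\mathbf{C}$ is itself feasible for $\mathrm{P}\ref{pr:schedulability}$ with the intervals $\{\beta_i\}$, so $\{\boldsymbol{\mathcal{C}},\{\beta_i\}\}\in\mathrm{P}\ref{pr:schedulability}$.

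The step that deserves care — and, I expect, the only genuine obstacle — is the one in the converse that identifies an arbitrary solution of $\mathrm{P}\ref{pr:schedulability_losses}$ with the shifted version $\bar{\mathbf{C}}$ of a loss-free baseline, which is what makes Theorem~\ref{thm:PacketLoss} applicable. One way to settle it is to observe that retransmission-type schedules are without loss of generality: a schedule that does not react to acknowledgements must connect each agent at least as often as the retransmission schedule does in order to survive worst-case drops, so it is never less demanding. Alternatively one can give a direct adversarial argument: immediately after an effective connection of agent $i$, dropping the packets in the next $n_{l,i}$ slots in which $i$ is transmitted is admissible under Assumption~\ref{ass:packet_losses} and postpones $i$'s next effective connection by $n_{l,i}$ steps, so any loss-robust schedule, read along its loss-free run, must reconnect every agent $i$ within $\alpha_i-n_{l,i}=\beta_i$ steps, which is exactly feasibility for $\mathrm{P}\ref{pr:schedulability}$ with $\{\beta_i\}$. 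Everything else is bookkeeping with $\tau_i^{\mathbf{C}}$, $d_i$ and $T_i$.
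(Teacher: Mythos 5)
Your proposal is correct and, once you commit to the adversarial argument in your final paragraph, it is essentially the paper's own proof: the paper establishes the forward implication exactly by dropping $n_{l,i}$ consecutive packets immediately after a putative gap of length $\beta_i+1$ in the loss-free run of the schedule, forcing a gap of $\alpha_i+1$, and establishes the reverse implication by observing that each loss delays the shifted schedule $\bar{\mathbf{C}}$ by one step, so gaps grow from at most $\beta_i$ to at most $\beta_i+n_{l,i}=\alpha_i$ (your route through Lemma~\ref{schedule_Cbar}, $T_i\le\beta_i$ and Theorem~\ref{thm:PacketLoss} is the same computation). Do discard your first resolution of the converse --- an arbitrary solution of P\ref{pr:schedulability_losses} need not be the shifted version of a loss-free baseline, so Theorem~\ref{thm:PacketLoss} cannot be invoked directly there --- and keep the adversarial argument, which is precisely what the paper does.
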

	\begin{proof}
		We first prove 
		$$
		\{\boldsymbol{\mathcal{C}},\{\alpha_i\},\{n_\mathrm{l,i} \} \}\in \mathrm{P}\ref{pr:schedulability_losses} \implies \{\boldsymbol{\mathcal{C}},\{\beta_i\} \}\in \mathrm{P}\ref{pr:schedulability}.
		$$ 
		Assume that there exists a feasible schedule for instance $\{\boldsymbol{\mathcal{C}},\{\alpha_i\},\{n_\mathrm{l,i} \} \}$ of $\mathrm{P}\ref{pr:schedulability_losses}$ while it is not feasible for instance $\{\boldsymbol{\mathcal{C}},\{\beta_i\} \}$ of $\mathrm{P}\ref{pr:schedulability}$. This implies
		\begin{equation}
		\exists \, i, t>0: t-\tau_i^{\mathbf{C}}(t) \geq \beta_i+1=\alpha_i-n_{l,i}+1,
		\label{eq:ConExamPro}
		\end{equation}
		where the latest connection time \(\tau_i^{\mathbf{C}}(t)\) is defined in \eqref{eq:TauLastMeasurement}. Assume \(n_{l,i}\) consecutive packets are lost starting from time \(t+1\), such that $\tau_i^{\mathbf{C}}(t+n_{l,i})=\tau_i^{\mathbf{C}}(t)$. This implies 
		\begin{equation}
		\left( t+n_{l,i} \right)-\tau_i^{\mathbf{C}}(t+n_{l,i}) \geq \alpha_i+1,
		\end{equation}
		such that agent $i$ did not receive any packet for $\alpha_i$ consecutive steps, i.e., $\{\boldsymbol{\mathcal{C}},\{\alpha_i\},\{n_\mathrm{l,i} \} \}\notin \mathrm{P}\ref{pr:schedulability_losses}$.
		
		In order to prove
		$$
		\{\boldsymbol{\mathcal{C}},\{\beta_i\} \}\in \mathrm{P}\ref{pr:schedulability} \implies \{\boldsymbol{\mathcal{C}},\{\alpha_i\},\{n_\mathrm{l,i} \} \}\in \mathrm{P}\ref{pr:schedulability_losses},
		$$ 
		consider feasible schedule $\mathbf{C}$ for instance $\{\boldsymbol{\mathcal{C}},\{\beta_i\} \}$ of $\mathrm{P}\ref{pr:schedulability}$. Each packet loss causes one time step delay in receiving the measurement for agent $i$, see \eqref{eq:schedule_with_losses}, and since these packet losses can at most cause \(n_{l,i}\) time step delays between two connection times, agent $i$ would be connected at least once during each \(\beta_i+n_{l,i} = \alpha_i\) time steps. This implies $\bar{\mathbf{C}}$ defined in \eqref{eq:schedule_with_losses} is a feasible schedule for instance $\{\boldsymbol{\mathcal{C}},\{\alpha_i\},\{n_\mathrm{l,i} \} \}$ of $\mathrm{P}\ref{pr:schedulability_losses}$.
	\end{proof}
	
	Theorem~\ref{thm:PacketLoss2} implies that $\mathrm{P}\ref{pr:schedulability_losses}$ can be cast in the framework of $\mathrm{P}\ref{pr:schedulability}$ by using equation~\eqref{eq:thcon} to define $\{\beta_i\}$ based on $\{\alpha_i\}$ and $\{n_\mathrm{l,i}\}$.

	Since the shifted schedule $\bar{\mathbf{C}}$ provides a feasible robust schedule against packet losses, one can use the online scheduling method proposed in the previous subsection to improve safety of this robust schedule. To that end, we define the number of packet losses that can occur before agent $i$ receives a measurement from time $t$ as
	\begin{equation}
	n_i^{\mathbf{C}}(t):=\{\min_{t^\prime,n} n: t^\prime \geq t,~i \in C(t^\prime),~\sum_{j=t}^{t^\prime}\nu(j)\leq n  \}.
	\label{eq:packet_loss_remained}
	\end{equation}
	\begin{definition}[Robust online feasible schedule]
		\label{pr:gamma_scheduling2}
		A schedule~${\mathbf{C}}$ is robust online feasible if the \emph{robust safety residuals} $\bar{\mathbf{r}}(\mathbf{C},t)$ defined as 
		\begin{equation}\label{eq:residuals2}
		\bar{r}_i(\mathbf{C},t):=\gamma^x_i(t)-\gamma^{\mathbf{C}}_i(t)-n_i^{\mathbf{C}}(t),~\forall i \in \{1,\ldots,q\},
		\end{equation}
		are non-negative for all $t$, with $\gamma^x(t)$  defined in \eqref{eq:GammaOfXDef}, $\gamma^{\mathbf{C}}(t)$ defined in \eqref{eq:GammaDef}, and $n_i^{\mathbf{C}}(t)$ defined in \eqref{eq:packet_loss_remained}.
	\end{definition}
	Similarly to the case without packet losses, the following optimization problem maximizes the minimum robust safety residuals for a lossy channel.
	\begin{equation}
	\bar{j}_t^*:= \arg \max_{j} ~ \min_i \ \bar{r}_i(R(\mathbf{C}_\mathrm{r},j),t).
	\label{eq:ResOpt2}
	\end{equation}
	The online schedule is then given by
	\begin{align}
	\label{eq:Optimal_Online_Schedule_losses}
	\bar{C}^*(t) := C(\bar{j}_t^*).
	\end{align}
	\begin{proposition} \label{prop2}
		Assume that the baseline schedule $\bar{\mathbf{C}}$ is feasible for P\ref{pr:schedulability_losses}. Then, the online schedule $\bar{C}^*(t)$ is feasible for all $t$.
	\end{proposition}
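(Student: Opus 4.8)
The plan is to reproduce the recursive argument of Proposition~\ref{prop1}, carried out for the robust safety residuals $\bar r_i$ of \eqref{eq:residuals2} in place of the nominal residuals $r_i$, and for the loss-shifted schedule $\bar{\mathbf C}$ of \eqref{eq:schedule_with_losses} in place of $\mathbf C$. The invariant I would propagate by induction on $t$ is $\min_i \bar r_i(R(\mathbf C_{\mathrm r},\bar j_t^{*}),t)\ge 0$, with $\bar j_t^{*}$ the rotation returned by \eqref{eq:ResOpt2}; by Definition~\ref{pr:gamma_scheduling2} this is exactly robust online feasibility of $\bar{\mathbf C}^{*}$. For the base case $t=1$, feasibility of the baseline $\bar{\mathbf C}$ for $\mathrm{P}\ref{pr:schedulability_losses}$ --- which, by the way $\mathrm{P}\ref{pr:schedulability_losses}$ is posed, already presupposes Assumption~\ref{ass:packet_losses} --- gives $\min_i \bar r_i(R(\mathbf C_{\mathrm r},1),1)\ge 0$ (the online estimate $\mathcal{O}_i(1)\subseteq\MR{i}$ only tightens the residual relative to the worst case), and since $\bar j_1^{*}$ maximizes $\min_i \bar r_i(R(\mathbf C_{\mathrm r},j),1)$ over $j$, the value attained at $\bar j_1^{*}$ is no smaller; so the invariant holds at $t=1$ and $\bar C^{*}(1)=C(\bar j_1^{*})$.

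For the inductive step I would assume the invariant holds at time $t$, that $\bar C^{*}(t)=C(\bar j_t^{*})$ has been applied, and argue --- exactly as in Proposition~\ref{prop1} --- that the finite prefix $\bar C^{*}(1),\dots,\bar C^{*}(t)$ followed by the cyclic continuation of $R(\mathbf C_{\mathrm r},\bar j_t^{*})$ is a schedule along which all robust residuals stay non-negative up to time $t$; it therefore exhibits at time $t+1$ a rotation with non-negative minimum robust residual, so the maximizer $\bar j_{t+1}^{*}$ inherits this and the invariant propagates. The one genuinely new point relative to Proposition~\ref{prop1} is that the update from $t$ to $t+1$ must be split into the regimes $\nu(t)=0$ and $\nu(t)=1$. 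If $\nu(t)=0$ and $\bar j_t^{*}$ serves agent $i$, then $i$'s deadline is reset from a fresh measurement and $\bar r_i$ is trivially non-negative; if $\nu(t)=0$ and $i$ is not served, advancing the rotation by one lowers both $\gamma^x_i$ and $\gamma^{\mathbf C}_i$ by one while $n_i^{\mathbf C}$ does not increase (the window over which Assumption~\ref{ass:packet_losses} is invoked shrinks by one step), so $\bar r_i$ does not decrease; and if $\nu(t)=1$, the shifted schedule \eqref{eq:schedule_with_losses} ``freezes'', so $\gamma^{\mathbf C}_i$ is unchanged, $\gamma^x_i$ drops by one, and $n_i^{\mathbf C}$ --- which by \eqref{eq:packet_loss_remained} counts precisely the worst-case number of retransmissions still available before $i$ is served --- drops by one, so $\bar r_i$ again does not decrease. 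In every case the minimum robust residual stays non-negative, which closes the induction.

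To conclude I would observe that the propagated invariant is $\gamma^x_i(t)\ge \gamma^{\mathbf C}_i(t)+n_i^{\mathbf C}(t)$ for all $i$ and $t$, which by the definition \eqref{eq:GammaOfXDef} of $\gamma^x_i$ and the bound \eqref{eq:alpha_less_than_gamma} means every agent is reconnected before its reachable set can leave $\MR{i}$ under any loss pattern admitted by Assumption~\ref{ass:packet_losses} --- exactly feasibility of $\bar{\mathbf C}^{*}$ for $\mathrm{P}\ref{pr:schedulability_losses}$. The hard part will be the $\nu(t)=1$ case of the inductive step: I need to verify that freezing the schedule rather than the agent's deadline cannot open a negative residual, and to reconcile the nominal cycle index used inside $R(\mathbf C_{\mathrm r},\cdot)$ with the loss-shifted index actually produced by \eqref{eq:schedule_with_losses}. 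The identity $\beta_i=\alpha_i-n_{l,i}$ from Theorem~\ref{thm:PacketLoss2} and the sufficient condition of Theorem~\ref{thm:PacketLoss} are the tools I would lean on to make this watertight.
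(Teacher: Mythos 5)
Your proposal is correct and follows essentially the same route as the paper, whose proof of this proposition is literally a one-line deferral to Proposition~\ref{prop1} and Theorem~\ref{thm:PacketLoss}: you carry out the same residual-invariant induction of Proposition~\ref{prop1} and handle the lossy case with exactly the packet-loss accounting ($\nu(t)=0$ versus $\nu(t)=1$, budget $n_i^{\mathbf{C}}$ decreasing in lockstep with $\gamma^x_i$ under a loss) that underlies Theorem~\ref{thm:PacketLoss}. Your worked-out version is in fact considerably more explicit than what the paper provides.
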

	\begin{proof}
		This can be proved similarly to Proposition~\ref{prop1} and Theorem~\ref{thm:PacketLoss}.
	\end{proof}
	The following algorithm, based on Theorem~\ref{thm:PacketLoss2}, returns an online robust feasible schedule for P\ref{pr:schedulability_losses}.
	\begin{algorithm}
		\begin{algorithmic}[1]
			\footnotesize
			\State Define $\beta_i$ using \eqref{eq:thcon}
			\If {$\{\beta_i\} \in \text{PP}$}
			\State find a schedule $\mathbf{C}_{\mathrm{P}}$ for instance $\{\beta_i\}$ of PP
			\State define $C(t):=C_j$ when $c_{\mathrm{P}}(t)=j,~\forall j$
			\ForAll{t}
			\State find $\bar{C}^*(t)$ as in $\eqref{eq:Optimal_Online_Schedule_losses}$
			\EndFor
			\State \Return $\bar{\mathbf{C}}^*$ 
			\Else
			\State \Return {no schedule was found}
			\EndIf
		\end{algorithmic}
		\caption{Robust online scheduling for P\ref{pr:schedulability} (\textbf{input}: ($\{\alpha_i\},\{(n_{l,i},T_i)\}, \ldots $) , \textbf{output}: $\bar{\mathbf{C}}^*$)}
		\label{alg:robust_online}
	\end{algorithm}
	\section{Numerical results}
	\label{section:numerics}
	We now discuss some numerical examples in order to illustrate and evaluate the effectiveness of the proposed methods. First, we evaluate Algorithms~\ref{alg:General_P1}~and~\ref{alg:WSP}. Then, we provide a trajectory tracking scenario for remotely controlled vehicles with limited number of lossy communication channels. 
	\subsection{Evaluation of Algorithm~\ref{alg:General_P1}}
	We have considered 1000 networks with a random number of agents (2 to 4), random safe time intervals (2 to 8), and random connection patterns (1 to 4). These random instances are used for evaluating the three following implementations:
	\begin{itemize} 
		\item $M_1$: solve optimization problem~\eqref{eq:qnary_formulation_exact_solution};
		\item $M_2^{\mathrm{A}1}$: use Algorithm~\ref{alg:General_P1} in combination with optimization problem~\eqref{eq:qnary_formulation_exact_solution} to solve PP;
		\item $M_3^{\mathrm{A}1}$:  use Algorithm~\ref{alg:General_P1} in combination with the double-integer method proposed in \cite{Chan92a} to solve PP.
	\end{itemize}
	
	We have used Gurobi to solve the integer problems and provided a summary of the results in Table~\ref{table:comparison}. $M_1$ is exact and returns false positives nor false negatives. Although $M_2^{\mathrm{A}1}$ and $M_3^{\mathrm{A}1}$ do not return any false positive, they might return false negatives. Furthermore, a false negative answer in $M_2^{\mathrm{A}1}$ implies the same for $M_3^{\mathrm{A}1}$ since the latter uses a heuristic to solve PP while the former finds a schedule for PP whenever it exists. Note also that $M_2^{\mathrm{A}1}$ and $M_3^{\mathrm{A}1}$ do not necessarily return a solution with the minimum period length. 
	
	\begin{table}[H]
		\caption{Comparison of $M_1$, $M_2^{\mathrm{A}1}$,~ and $M_3^{\mathrm{A}1}$}
		\label{table:comparison}
		\begin{center}
			\begin{tabular}{ c| c  c  c } 
				& $M_1$ & $M_2^{\mathrm{A}1}$ & $M_3^{\mathrm{A}1}$\\
				\hline
				\multicolumn{1}{ c|  } {True Positive }&  960 & 958 & 958  \\
				\hline
				\multicolumn{1}{ c|  } {False Positive} & 0 &  0 & 0 \\
				\hline
				\multicolumn{1}{ c | } {True Negative} & 40   & 40  & 40 \\
				\hline
				\multicolumn{1}{ c | } {False Negative} &  0 & 2 & 2 \\
				\hline
				
			\end{tabular}
		\end{center}
	\end{table}
	
	In Table~\ref{table:comparison3} and Fig.~\ref{fig:AlgComparisons1}, we tested $M_1$, $M_2^{\mathrm{A}1}$ and $M_3^{\mathrm{A}1}$ on larger randomly generated networks (2 to 11 agents, 2 to 21 safe time intervals, 1 to 11 random connection patterns). To limit the computation-time, we had to halt the execution of $M_1$ and $M_2^{\mathrm{A}1}$ when no schedule of period $ \leq 70$ was found.  We labeled \textit{undecided} the instances for which these two methods were halted.
	\begin{table}[H]
		\caption{Comparison of $M_1$, $M_2^{\mathrm{A}1}$,~ and $M_3^{\mathrm{A}1}$}
		\label{table:comparison3}
		\begin{center}
			\begin{tabular}{c |c c c} 
				
				& $M_1$ & $M_2^{\mathrm{A}1}$ & $M_3^{\mathrm{A}1}$\\
				\hline
				\multicolumn{1}{ c|  } {accepted instances} & 887  & 872  & 853    \\
				
				\multicolumn{1}{ c|  } {average time (sec)} & 1.7915 & 1.3119  & 0.5143 \\
				\hline
				\multicolumn{1}{ c|  } {undecided instances} &  102 &   32 &  0 \\
				
				\multicolumn{1}{ c|  } {average time (sec)} & 61.0570 & 48.9156 & 0 \\
				\hline
				\multicolumn{1}{ c|  } {rejected instances} & 11   &  96 &  147 \\
				
				\multicolumn{1}{ c|  } {average time (sec)} & 53.7730 &  1.5442 & 0.5333 \\
				\hline
			\end{tabular}
		\end{center}
	\end{table}

	\begin{figure}
		\centering
		\begin{subfigure}[b]{0.24\textwidth}
			\centering
			\includegraphics[width=\textwidth]{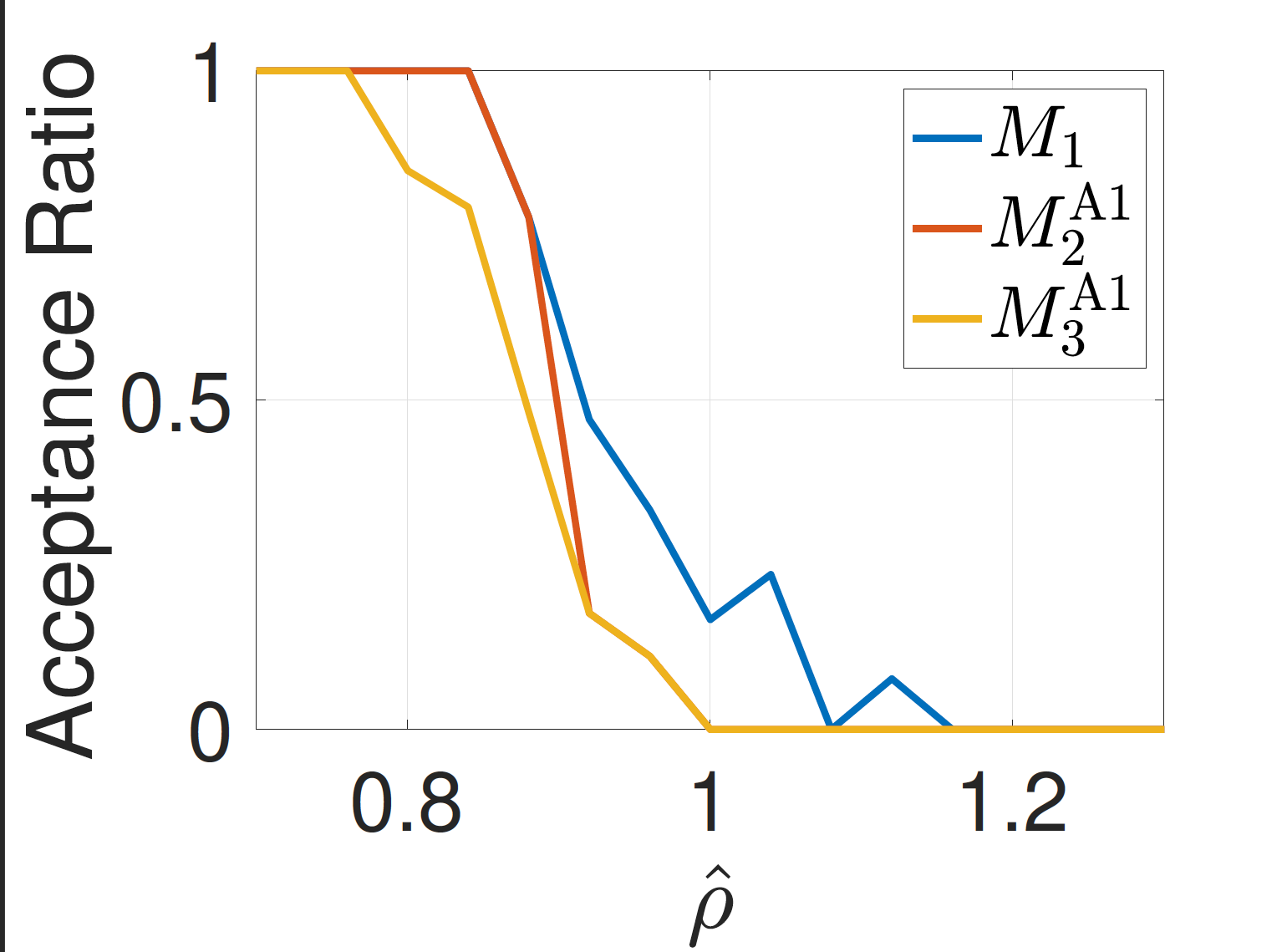}
			\caption{}
			\label{fig:M4M5Mrho}
		\end{subfigure}
		\hfill
		\begin{subfigure}[b]{0.24\textwidth}
			\centering
			\includegraphics[width=\textwidth]{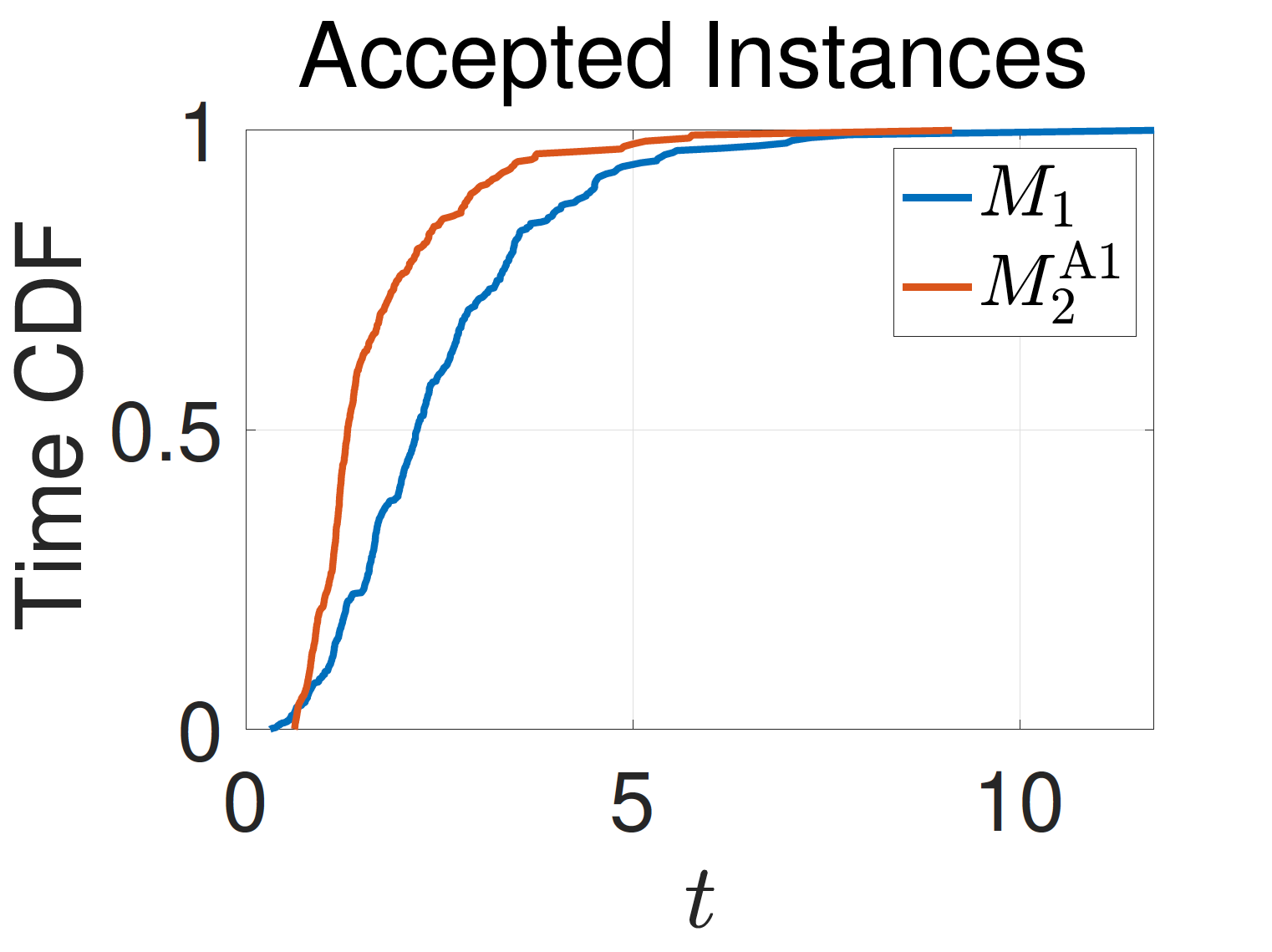}
			\caption{}
			\label{fig:M1M2_times_accepted}
		\end{subfigure}
		\begin{subfigure}[b]{0.24\textwidth}
			\centering
			\includegraphics[width=\textwidth]{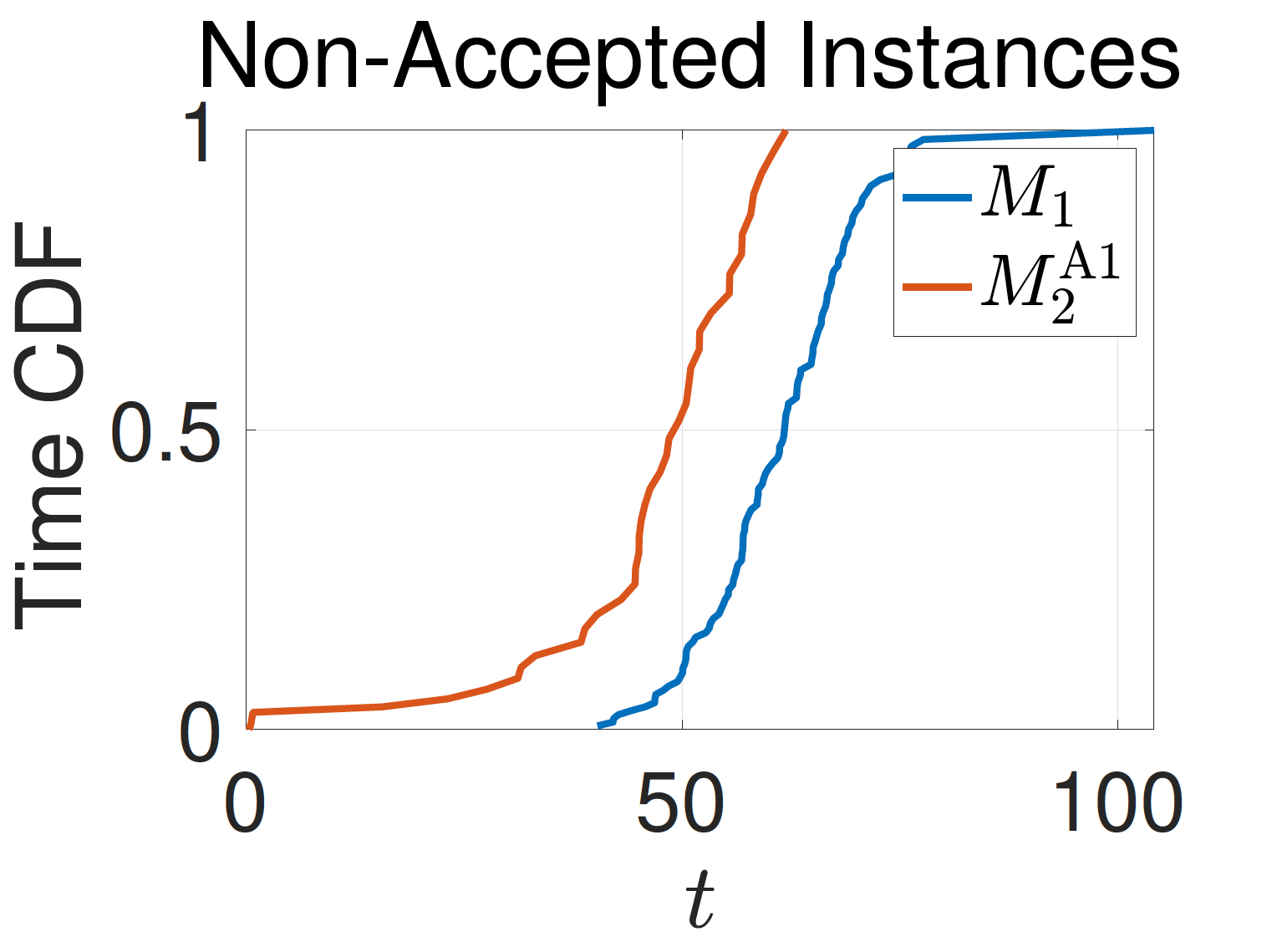}
			\caption{}
			\label{fig:M1M2_times_rejected}
		\end{subfigure}
		\hfill
		\begin{subfigure}[b]{0.24\textwidth}
			\centering
			\includegraphics[width=\textwidth]{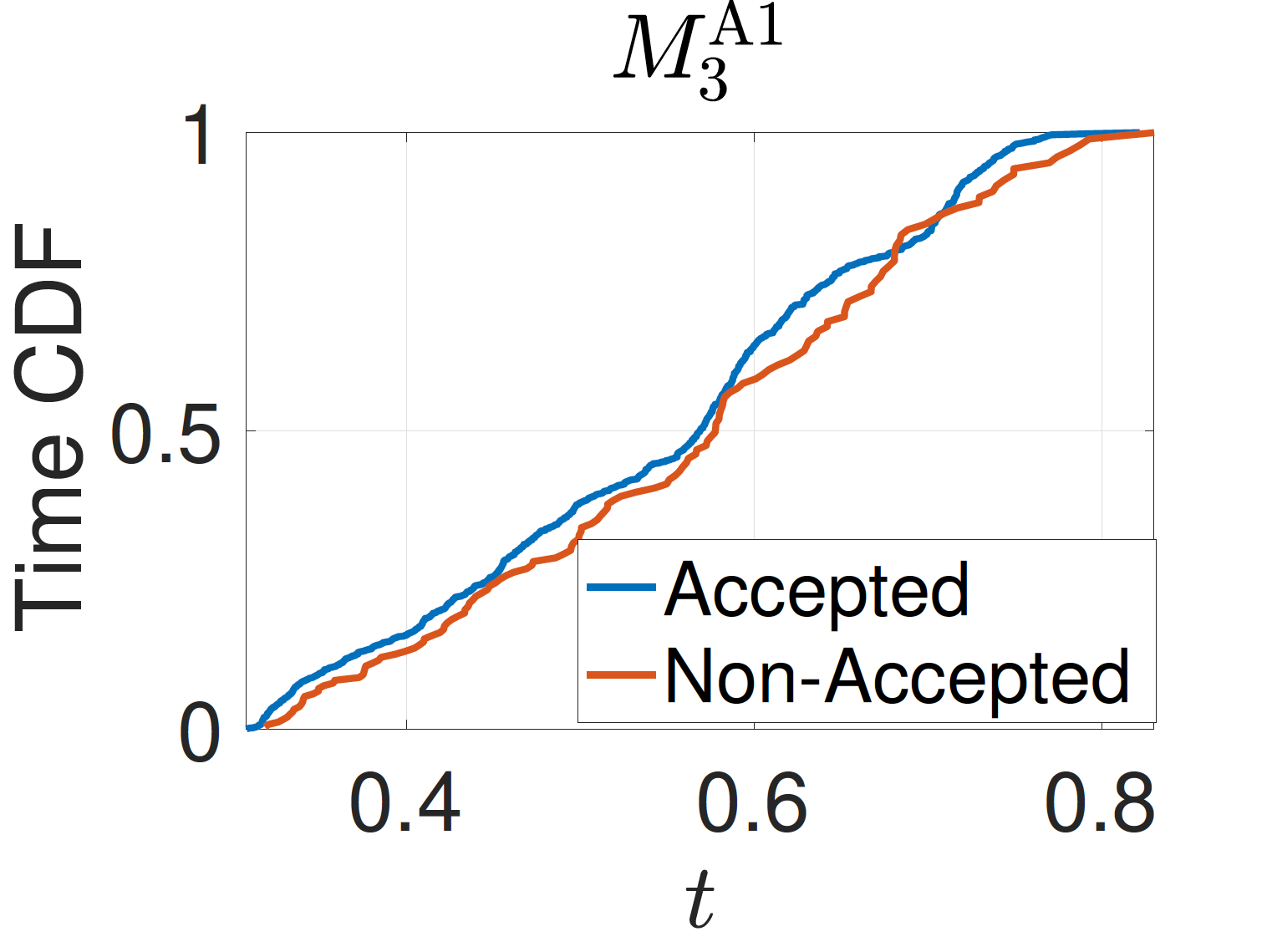}
			\caption{}
			\label{fig:M3_times_accepted_rejected}
		\end{subfigure}
		\caption
		{Comparison of acceptance ratio, with respect to the assigned density function $\hat{\rho}$, and the Cumulative Distribution Function (CDF) of the computation time for $M_1$, $M_2^{\mathrm{A}1}$, and $M_3^{\mathrm{A}1}$ methods}
		\label{fig:AlgComparisons1}
	\end{figure}
	Although $M_3^{\mathrm{A}1}$ might result in a few false negatives, Table~\ref{table:comparison3} indicates that its average computation time is significantly lower than the corresponding average computation times of $M_1$ and $M_2^{\mathrm{A}1}$. More importantly, Fig.~\ref{fig:M1M2_times_accepted}, \ref{fig:M1M2_times_rejected}, and \ref{fig:M3_times_accepted_rejected} demonstrate that $M_3^{\mathrm{A}1}$ has a lower computation time for almost all considered instances. Note that $M_1$ accepts some instances with an assigned density $\hat{\rho}$ greater than one, as shown in Fig.~\ref{fig:M4M5Mrho}, which implies that converse of Theorem~\ref{thm:schedulability_thresholdmcS} does not hold.
	\subsection{Evaluation of Algorithm~\ref{alg:WSP}}
	In this subsection we evaluate the proposed heuristic in Algorithm~\ref{alg:WSP} to find a feasible schedule for P\ref{pr:schedulability} when $\boldsymbol{\mathcal{C}}$ is defined as in \eqref{eq:mc_channels}. We have generated 1000 networks with five agents, random safe time intervals (from 2 to 7), minimum number of channels required for schedulability, i.e., $m_{\mathrm{c}}=\lceil \sum_i \frac{1}{\alpha_i} \rceil $. These random instances are used for evaluating the three following implementations:
	\begin{itemize} 
		\item $M_1$: solve optimization problem~\eqref{eq:qnary_formulation_exact_solution};
		\item $M_2^{\mathrm{A}2}$: use Algorithm~\ref{alg:WSP} in combination with optimization problem~\eqref{eq:qnary_formulation_exact_solution} to solve PP;
		\item $M_3^{\mathrm{A}2}$:  use Algorithm~\ref{alg:WSP} in combination with optimization problem~\eqref{eq:qnary_formulation_exact_solution} to solve PP;
	\end{itemize}
	The simulation results are provided in Table~\ref{table:comparison4}.
	\begin{table}[H]
		\caption{Comparison of $M_1$, $M_2^{\mathrm{A}2}$,~ and $M_3^{\mathrm{A}2}$}
		\label{table:comparison4}
		\begin{center}
			\begin{tabular}{ c | c  c  c } 
				
				& $M_1$ & $M_2^{\mathrm{A}2}$ & $M_3^{\mathrm{A}2}$\\
				\hline
				\multicolumn{1}{ c|  } {True Positive }&  994 & 994 & 987  \\
				\hline
				\multicolumn{1}{ c|  } {False Positive} & 0 &  0 & 0 \\
				\hline
				\multicolumn{1}{ c | } {True Negative} & 6   & 6  & 6\\
				\hline
				\multicolumn{1}{ c | } {False Negative} &  0 & 0 & 7 \\
				\hline
			\end{tabular}
		\end{center}
	\end{table}
	
	Once again, the three implementations were tested on larger instances by halting $M_1$ and $M_2^{\mathrm{A}2}$ if no schedule of length $\leq 70$ was found.  The results are  reported in Table~\ref{table:comparison5} and Fig.~\ref{fig:AlgComparisons2}. Define a \textit{normalized density function} as
	\begin{equation}
	\tilde{\rho}:=\frac{1}{m_{\mathrm{c}}} \left(\sum_i \frac{1}{\alpha_i}\right),
	\end{equation}
	for sake of a meaningful comparison.
	
	\begin{table}[H]
		\caption{Comparison of $M_1$, $M_2^{\mathrm{A}2}$,~ and $M_3^{\mathrm{A}2}$}
		\label{table:comparison5}
		\begin{center}
			\begin{tabular}{c |c c c} 
				
				& $M_1$ & $M_2^{\mathrm{A}2}$ & $M_3^{\mathrm{A}2}$\\
				\hline
				\multicolumn{1}{ c|  } {accepted instances} & 984  & 980  & 909    \\
				
				\multicolumn{1}{ c|  } {average time (sec)} & 5.0353 & 5.5204  & 1.3043e-04 \\
				\hline
				\multicolumn{1}{ c|  } {undecided instances} &  16 & 20   &  0 \\
				
				\multicolumn{1}{ c|  } {average time (sec)} &  267.1501  & 139.8464  & 0 \\
				\hline
				\multicolumn{1}{ c|  } {rejected instances} & 0   & 0  &  91 \\
				
				\multicolumn{1}{ c|  } {average time (sec)} & 0 & 0  & 1.0384e-04 \\
				\hline
			\end{tabular}
		\end{center}
	\end{table}

	\begin{figure}
	\centering
	\begin{subfigure}[b]{0.24\textwidth}
		\centering
		\includegraphics[width=\textwidth]{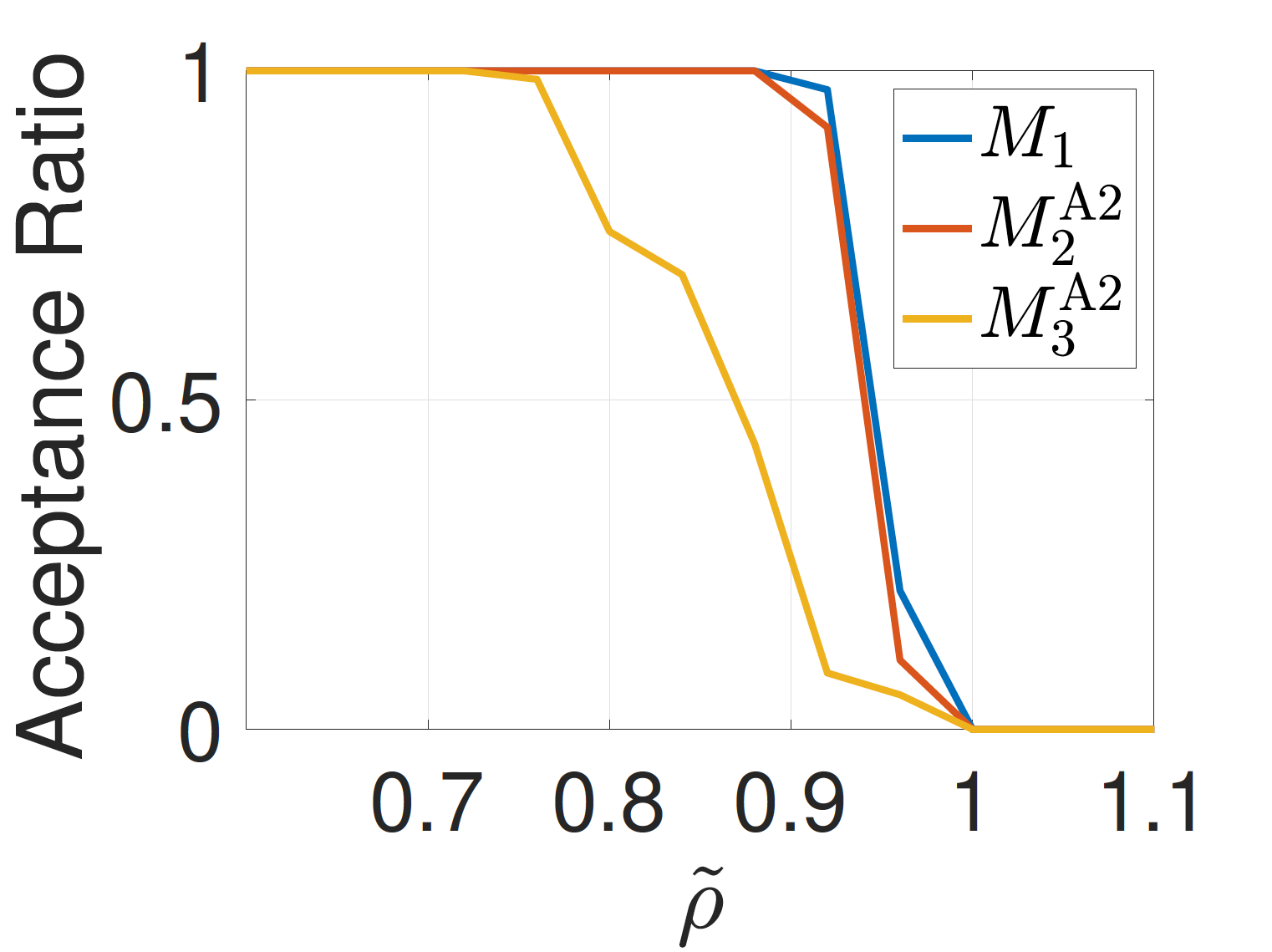}
		\caption{}
			\label{fig:M4M5M6rho}
	\end{subfigure}
	\hfill
	\begin{subfigure}[b]{0.24\textwidth}
		\centering
		\includegraphics[width=\textwidth]{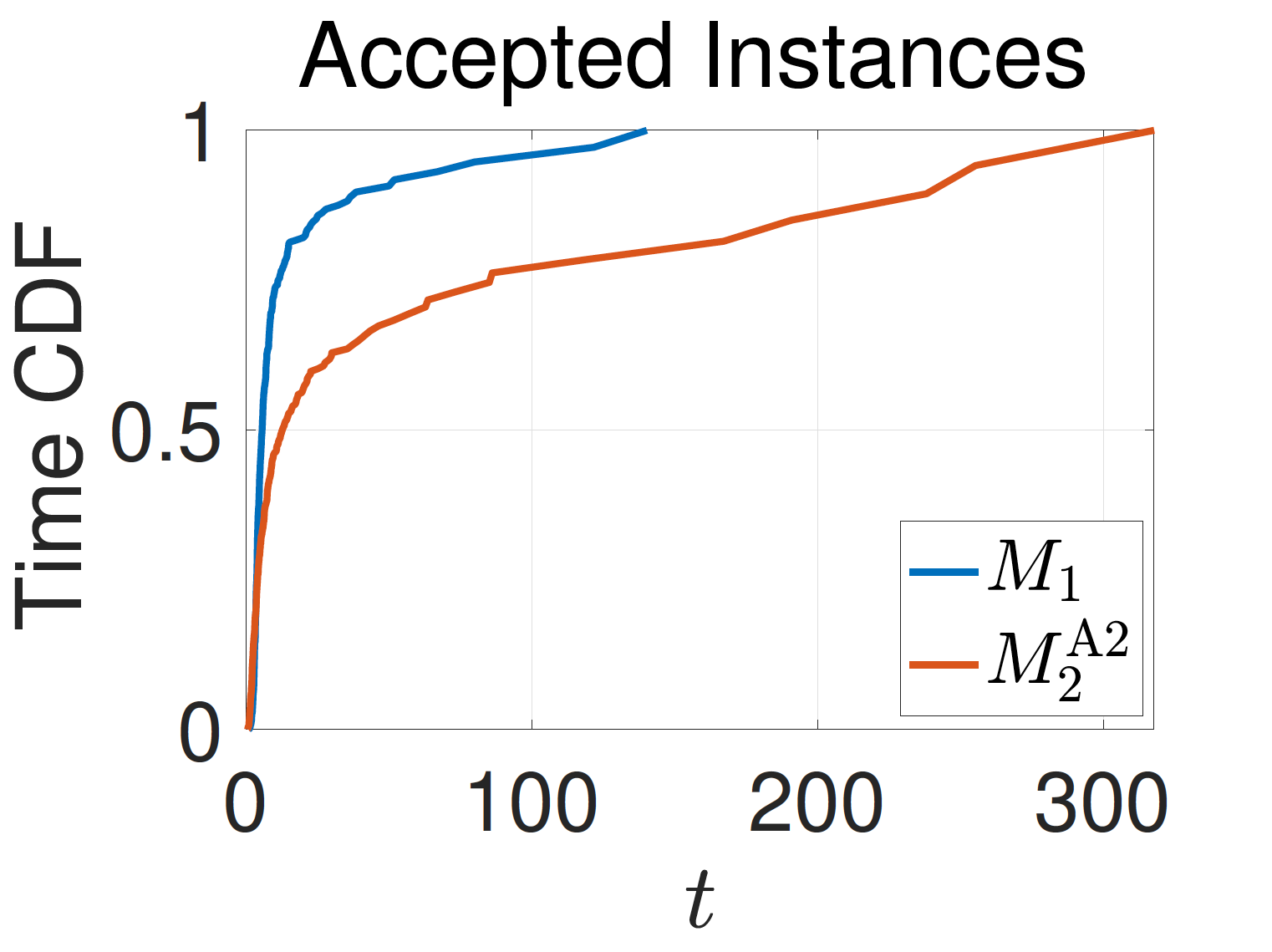}
		\caption{}
			\label{fig:M4M5_times_accepted}
	\end{subfigure}
	\begin{subfigure}[b]{0.24\textwidth}
		\centering
		\includegraphics[width=\textwidth]{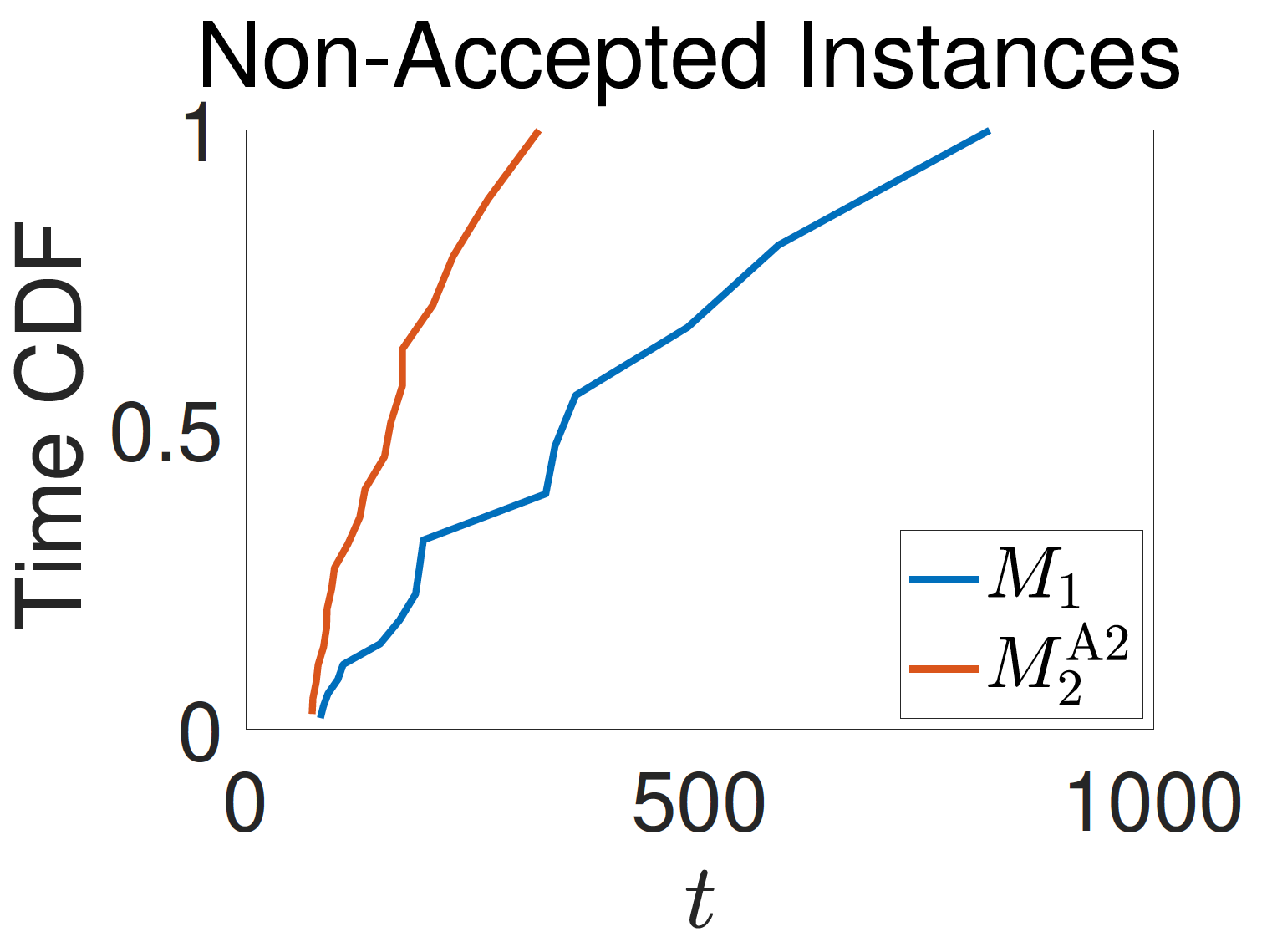}
		\caption{}
			\label{fig:M4M5_times_rejected}
	\end{subfigure}
	\hfill
	\begin{subfigure}[b]{0.24\textwidth}
		\centering
		\includegraphics[width=\textwidth]{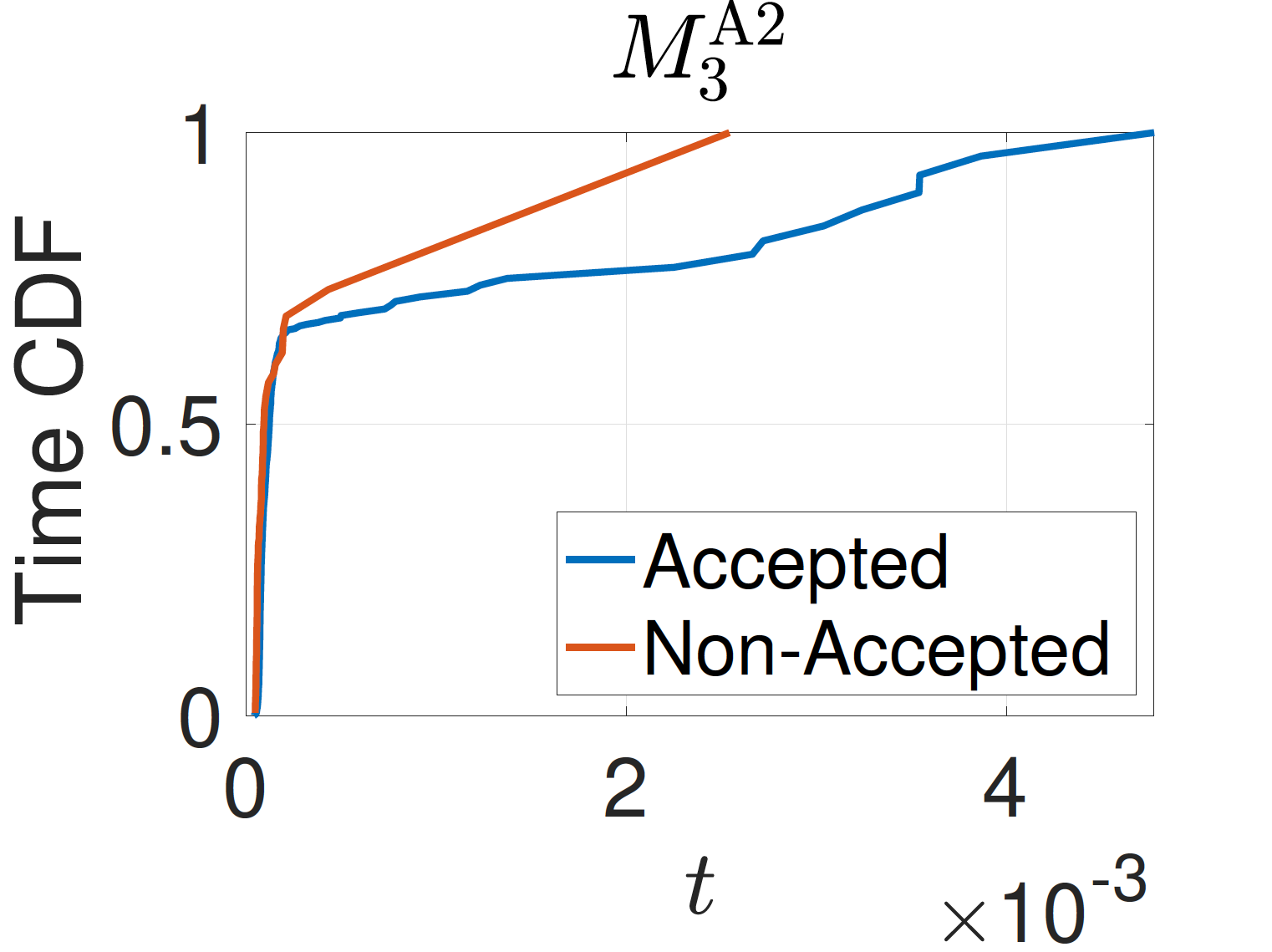}
		\caption{}
			\label{fig:M6_times_accepted_rejected}
	\end{subfigure}
	\caption
	{
		Comparison of acceptance ratio, with respect to the normalized density function $\tilde{\rho}$, and the Cumulative Distribution Function (CDF) of the computation time for $M_1$, $M_2^{\mathrm{A}2}$, and $M_3^{\mathrm{A}2}$ methods
	}
	\label{fig:AlgComparisons2}
\end{figure}
	Although $M_3^{\mathrm{A}2}$ might result in a few false negatives, Table~\ref{table:comparison5} indicates that its average computation time is drastically lower than the corresponding average computation times of $M_1$ and $M_2^{\mathrm{A}2}$. More importantly, Fig.~\ref{fig:M4M5_times_accepted}, \ref{fig:M4M5_times_rejected}, and \ref{fig:M6_times_accepted_rejected} demonstrate that $M_3^{\mathrm{A}1}$ has a lower computation time for almost all considered instances. Note that the simulations confirm the result of Theorem~\ref{thm:schedulability_thresholdmc2}: as displayed in Fig.~\ref{fig:M4M5M6rho} any instance with $\tilde{\rho}=\frac{\rho}{m_{\mathrm{c}}} \leq 0.75$ is schedulable.
	\subsection{Remotely Controlled Vehicles}
	In this subsection, two numerical examples are given to illustrate the introduced concepts and algorithms. First we consider a tracking problem for vehicles with performance/safety constraints on the errors; this problem can be translated to P\ref{pr:schedulability} (see \cite{colombo2018invariance}). Then, we consider a tracking problem where the communication network is subject to packet losses.
	
	\begin{example}[Networked control vehicles without packet loss]\label{ex:ex_tracking}
		
		Consider a case of eight remotely controlled vehicles, described by the models
		\begin{align}\label{eq:vehicles}
		x_i(t+1) &= A_ix_i(t)+B_iu_i(t)+E_iv_i(t),\Forall i
		\end{align}
		where~
		\begin{equation}
		A_i=\left[
		\begin{array}{ccc}
		1  & h  & 0  \\
		0  & 1  & h  \\
		0  &  0 & 1-\frac{h}{\tau_i}  
		\end{array}
		\right],~B_i=\left[
		\begin{array}{c}
		0\\
		0\\
		\frac{h}{\tau_i}
		\end{array}
		\right],~E_i=\left[
		\begin{array}{c}
		0\\
		0\\
		1
		\end{array}
		\right],
		\end{equation}
		and~$\tau_1=0.1$,~$\tau_2=0.2$,~$\tau_3=0.3$,~$\tau_4=0.4$,~$\tau_5=\ldots=\tau_8=0.5$, and~$h=0.2$. 
		
		The longitudinal motion of these vehicles must track their reference trajectories within prescribed error bounds, to realize a specified traffic scenario.  Such situations occur, for instance, when setting up full-scale test scenarios for driver-assist systems. The reference state trajectories are generated by
		\begin{equation}
		x_i^d(t+1)=A_i x_i^d(t)+B_iu_i^d(t),~\forall i,
		\end{equation}
		while the tracking inputs are defined as
		\begin{equation}
		u_i(t):=u_i^d(t)+\tilde{u}_i(t).
		\end{equation}
		The error dynamics for each vehicle is
		\begin{align}\label{eq:vehicles2}
		e_i(t+1) &= A_ie_i(t)+B_i \tilde{u}_i(t)+E_iv_i(t),\Forall i
		\end{align}
		where $e_i(t):=x_i(t)-x_i^d(t)$ is the difference between the state and the desired state and $\tilde{u}_i(t)=u_i(t)-u_i^d(t)$ is the difference between the system input and the input's feed-forward term. 
		We assume that the controller is always connected with the actuator (SC network), i.e., $\delta_u=1$ in Fig.~\ref{fig:system_1}, while the sensor is connected to the controller through a network, i.e., $\delta_s(t)=C(t)$ in Fig.~\ref{fig:system_1}. 
		We consider the feedback terms as $\tilde{u}_i(t)=-K_i\hat{e}_i(t)$ where $\hat{e}_i(t)$ is the tracking error estimation and it is specified by
		\begin{equation}
		\hat{e}_i(t)=\begin{cases}
		e_i(t), & \text{if } i \in C(t)\\
		A_i \hat{e}_i(t-1)+B_i \tilde{u}(t-1), & \text{if } i \notin C(t)
		\end{cases}.
		\end{equation}
		Feedback gains $K_i$ are calculated by solving LQR problems with cost gains \(Q=\text{diag}([10,1,0.1]),\ R=0.1\). Furthermore, $\mathcal{U}_i=\{-10\le \tilde{u}_i \le 10\},$ and $\mathcal{V}_i=\{ |v_i| \le \tilde{v}_i\}$ with \(\tilde{v}_1=3.4,\ \tilde{v}_2=2.1,\ \tilde{v}_3=1.1,\ \tilde{v}_4=0.6,\ \tilde{v}_5=\ldots=\tilde{v}_8=0.4\) are the set of admissible control inputs and the bound on the disturbances, respectively.
		
		For each system, the admissible tracking errors belong to the set
		\begin{equation}
		\mathcal{E}_i=\left\{ e_i \in \mathbb{R}^3:  \left[ 
		\begin{array}{c}
		-1\\
		-5\\
		-10
		\end{array}
		\right] \le 
		e_i \le \left[
		\begin{array}{c}
		1\\
		5\\
		10
		\end{array}
		\right]\right\},~\Forall i.
		\end{equation}
		In this example, safe time intervals are \(\alpha_1=2,~\alpha_2=3,~\alpha_3=4,~\alpha_4=5,~\alpha_5=\ldots=\alpha_8=6\) using  \eqref{eq:safe_time}. Assuming there are two communication channels, i.e., \(m_{\mathrm{c}}=2\), finding a feasible schedule is not straightforward. Nevertheless, one can use Algorithm~\ref{alg:WSP} to find a feasible schedule; the cyclic part of such a schedule is as follows
		\begin{equation}
		\mathbf{C}_{\mathrm{r}}=C_1,C_2,C_3,C_4,C_5,C_6,C_1,C_7,C_3,C_8,C_5,C_9~,
		\end{equation}
		where
		\begin{align}
		C_1&=(1,2),&C_2=(3,5),~&C_3=(1,6),&C_4=(4,2) \nonumber \\
		C_5&=(1,7),&C_6=(3,8),~&C_7=(4,5),&C_8=(3,2) \nonumber  \\
		C_9&=(4,8).
		\end{align}
		The tracking errors for the above schedule, along with the corresponding feedback control actions, are reported in Fig.~\ref{fig:vehicle_errors}, which shows them in their admissible sets. Note that in this example, the scheduler is designed offline.

	\begin{figure}
	\centering
	\begin{subfigure}[b]{0.24\textwidth}
		\centering
		\includegraphics[width=\textwidth]{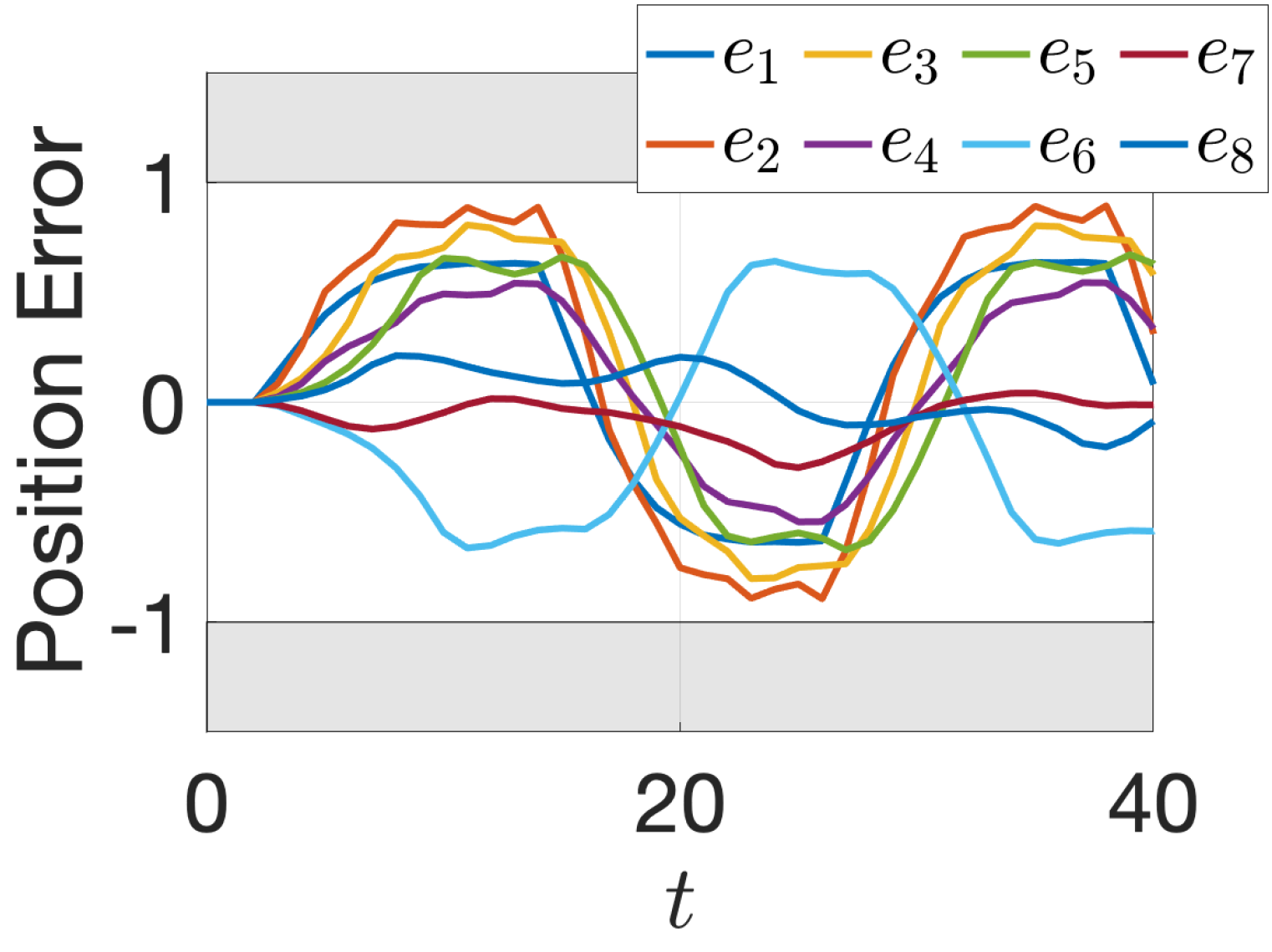}
		\caption{}
	\end{subfigure}
	\hfill
	\begin{subfigure}[b]{0.24\textwidth}
		\centering
		\includegraphics[width=\textwidth]{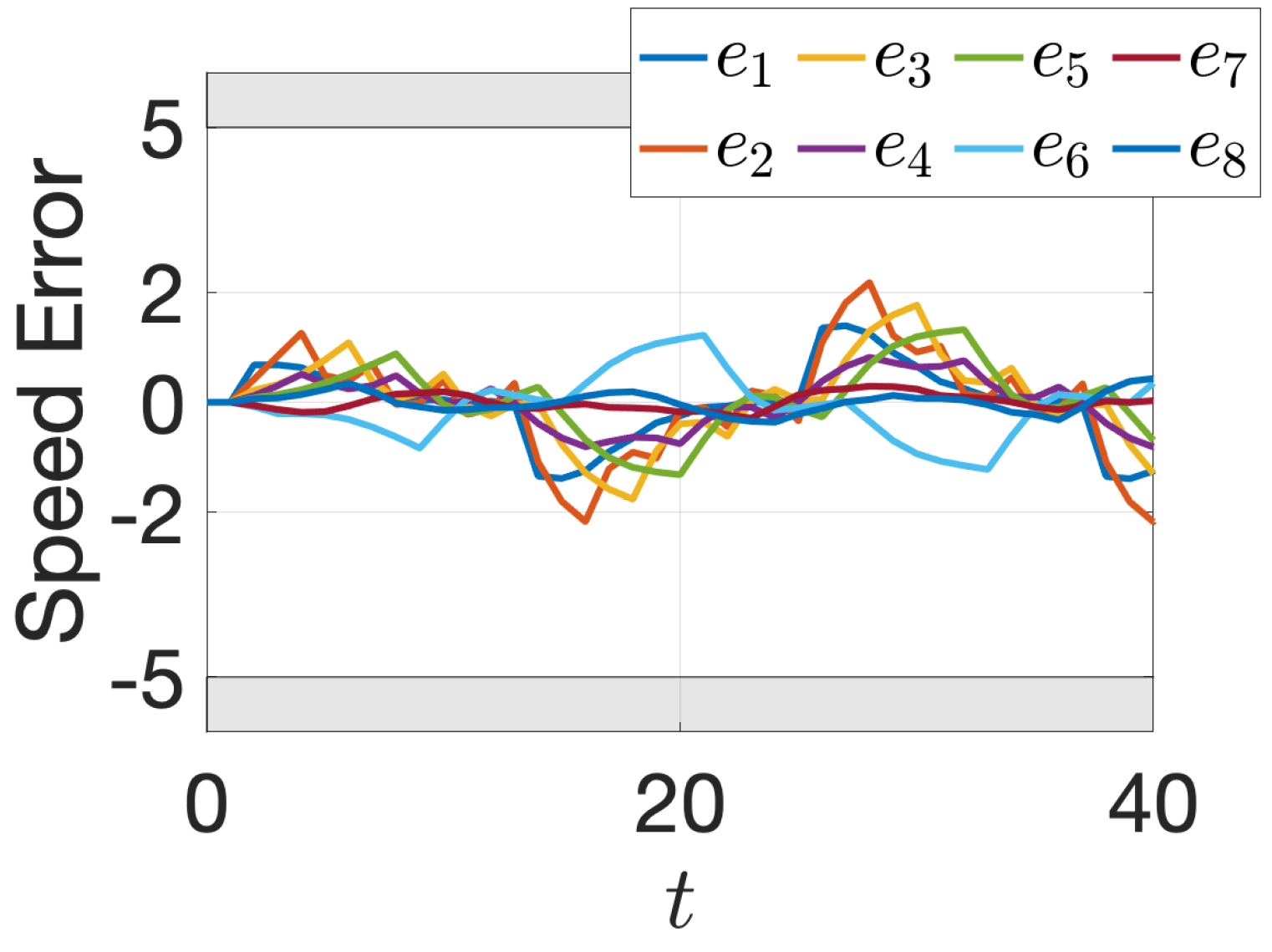}
		\caption{}
	\end{subfigure}
	\begin{subfigure}[b]{0.24\textwidth}
		\centering
		\includegraphics[width=\textwidth]{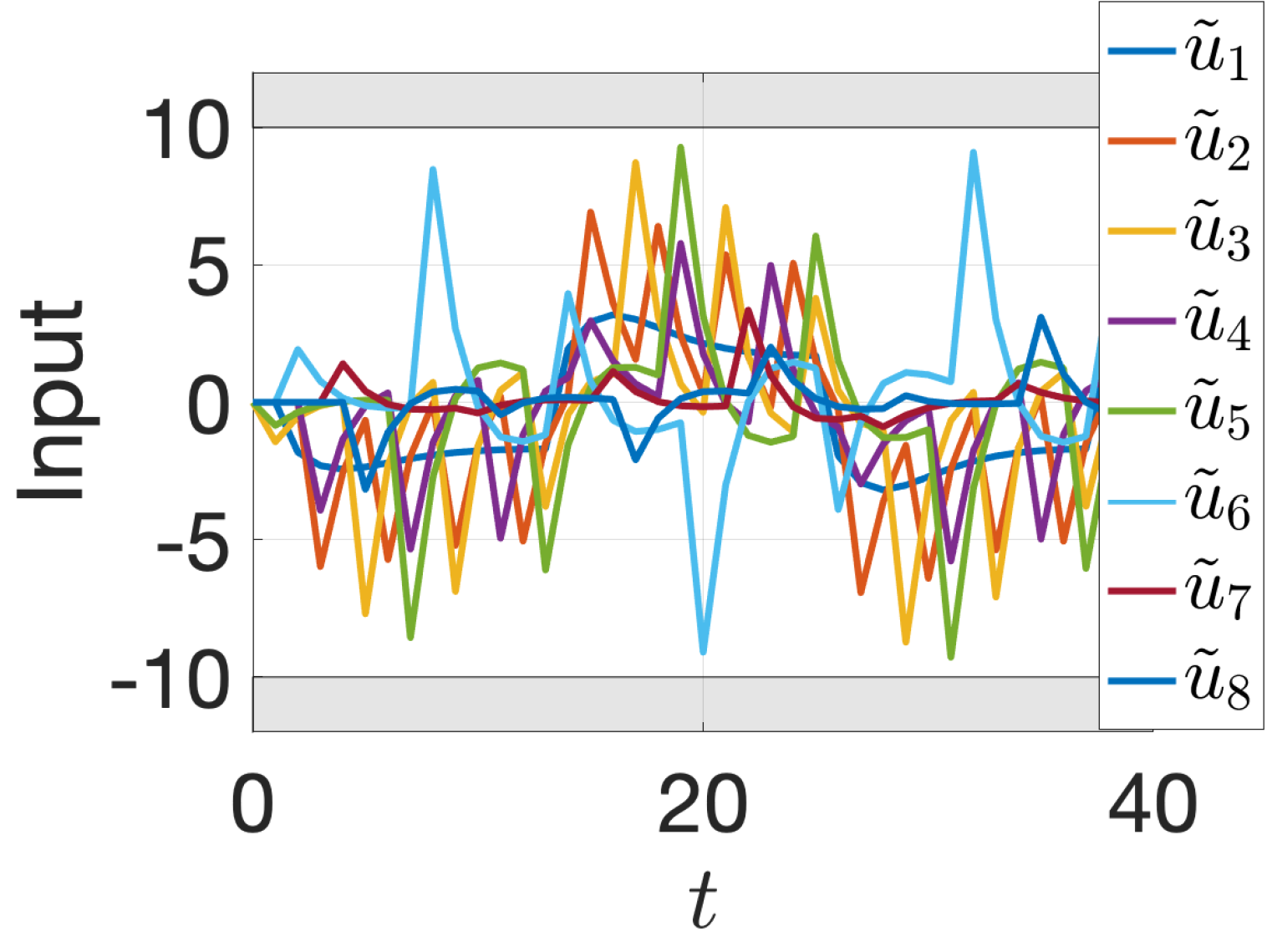}
		\caption{}
	\end{subfigure}
	\hfill
	\begin{subfigure}[b]{0.24\textwidth}
		\centering
		\includegraphics[width=\textwidth]{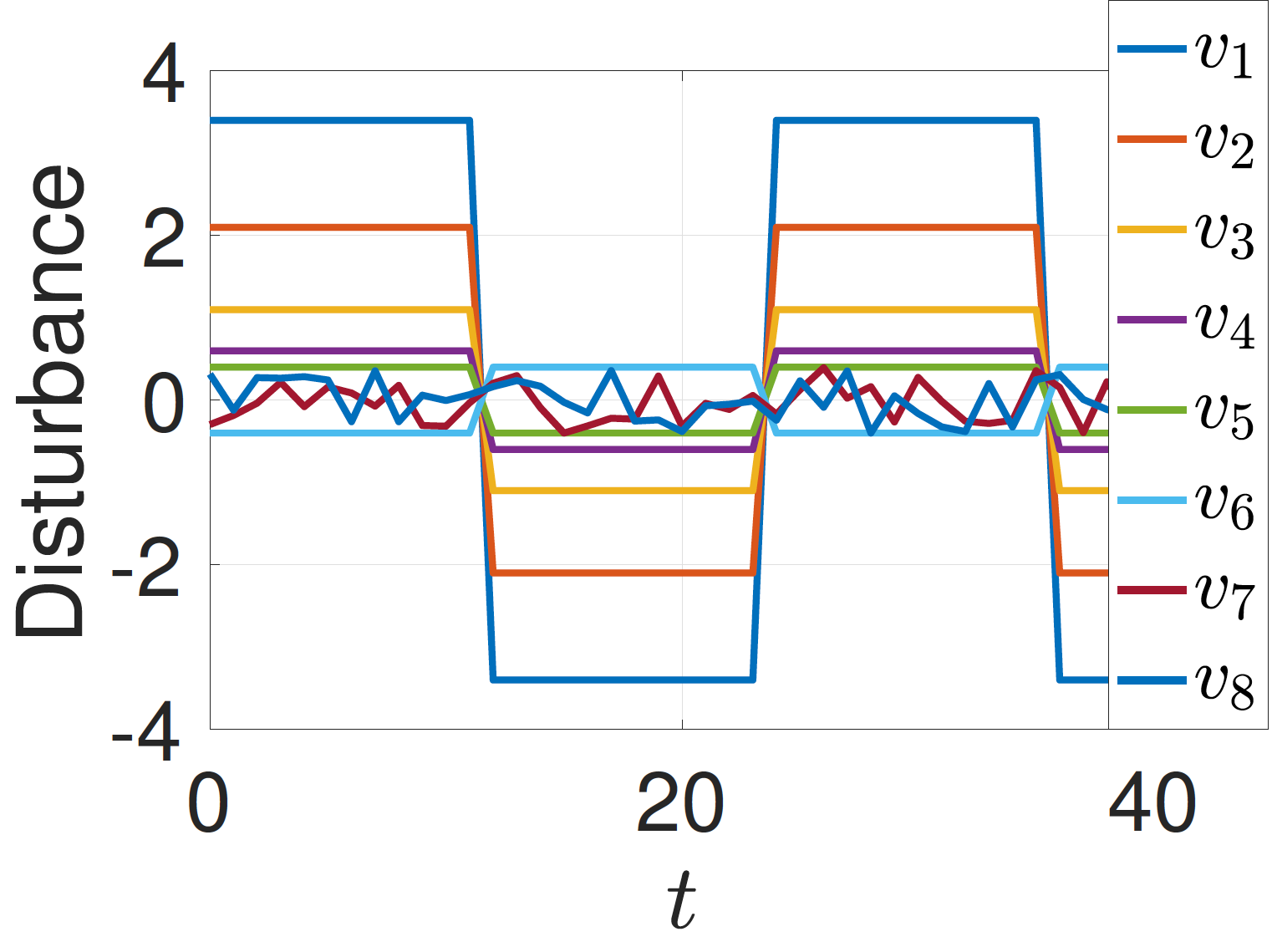}
		\caption{}
	\end{subfigure}
	\caption
	{ Example \ref{ex:ex_tracking}: 
		The shaded bands identify values out of admissible sets
	}
	\label{fig:vehicle_errors}
\end{figure}		
	\end{example}
	
	\begin{example}[Networked control vehicles with packet loss]\label{ex:ex_tracking2}
		Consider the first five vehicles in Example \ref{ex:ex_tracking} in which \(\tilde{v}_1=2,\ \tilde{v}_2=1,\ \tilde{v}_3=0.45,\ \tilde{v}_4=0.25,\ \tilde{v}_5=0.15\). Using \eqref{eq:safe_time}, one can compute safe time intervals as $\alpha_1=4$, $\alpha_2=6$, $\alpha_3=8$, $\alpha_4=10$, and $\alpha_5=12$.
		
		Assuming only two packets can be lost every four successive packets, one can use \eqref{eq:thcon} to calculate new safe time intervals as  $\beta_1=\beta_2=2$, $\beta_3=\beta_4=4$, and $\beta_5=6$.
		
		One can verify that the following $\mathbf{C}_{\mathrm{r}}$ is the cyclic part of a robust feasible schedule.
		\begin{align}
		\mathbf{C}_{\mathrm{r}}=&(1,2),(3,4),(1,2),(1,3),(2,4), \nonumber  \\
		&(1,5),(2,3),(1,4),(2,5)
		\end{align}
		By considering $\mathbf{C}_{\mathrm{r}}$ as the cyclic part of the baseline schedule, one can obtain a robust schedule using either \eqref{eq:schedule_with_losses} or \eqref{eq:Optimal_Online_Schedule_losses}. Note that in this example, we assume that the scheduler has access to all states and there is no measurement noise. The two strategies are compared in Fig.~\ref{fig:vehicle_residual_deadlines}. In this figure, we note:
		\begin{itemize}
			\item The robust safety residuals are non-negative, i.e., $\bar{r}_i(t) \geq 0$, and the update deadlines are positive, i.e., $\gamma_i^x(t) > 0$, both of which imply no constraint is violated;
			\item For the online schedule $\bar{r}_i \geq 3$ most of the times and $\min_i \bar{r}_i =1$ in four time instants; however, in the shifted schedule $\bar{r}_i =1$ at many time instants and  $\min_i \bar{r}_i =0$;
			\item In the online schedule $\max_i \bar{r}_i \leq 6$ at most of the times and $\max_i \bar{r}_i =9$ at just one time instant; however, in the shifted schedule $\max_i \bar{r}_i \leq 8$ most of the times and $\max_i \bar{r}_i =12$ at two time instants. These observations imply that the online schedule increases the safety of the least safe system at the expense of decreasing the safety of safer systems;
			\item One can also see the compromise of the online schedule by comparing the measurement update deadlines. For instance $\min_t \gamma^x_4(t)=2$ and $\min_t \gamma^x_5(t)=4$ for the online schedule, however, $\min_t \gamma^x_4(t)=6$ and $\min_t \gamma^x_5(t)=10$ for the shifted schedule.
		\end{itemize}
		
			\begin{figure}
			\centering
			\begin{subfigure}[b]{0.24\textwidth}
				\centering
				\includegraphics[width=\textwidth]{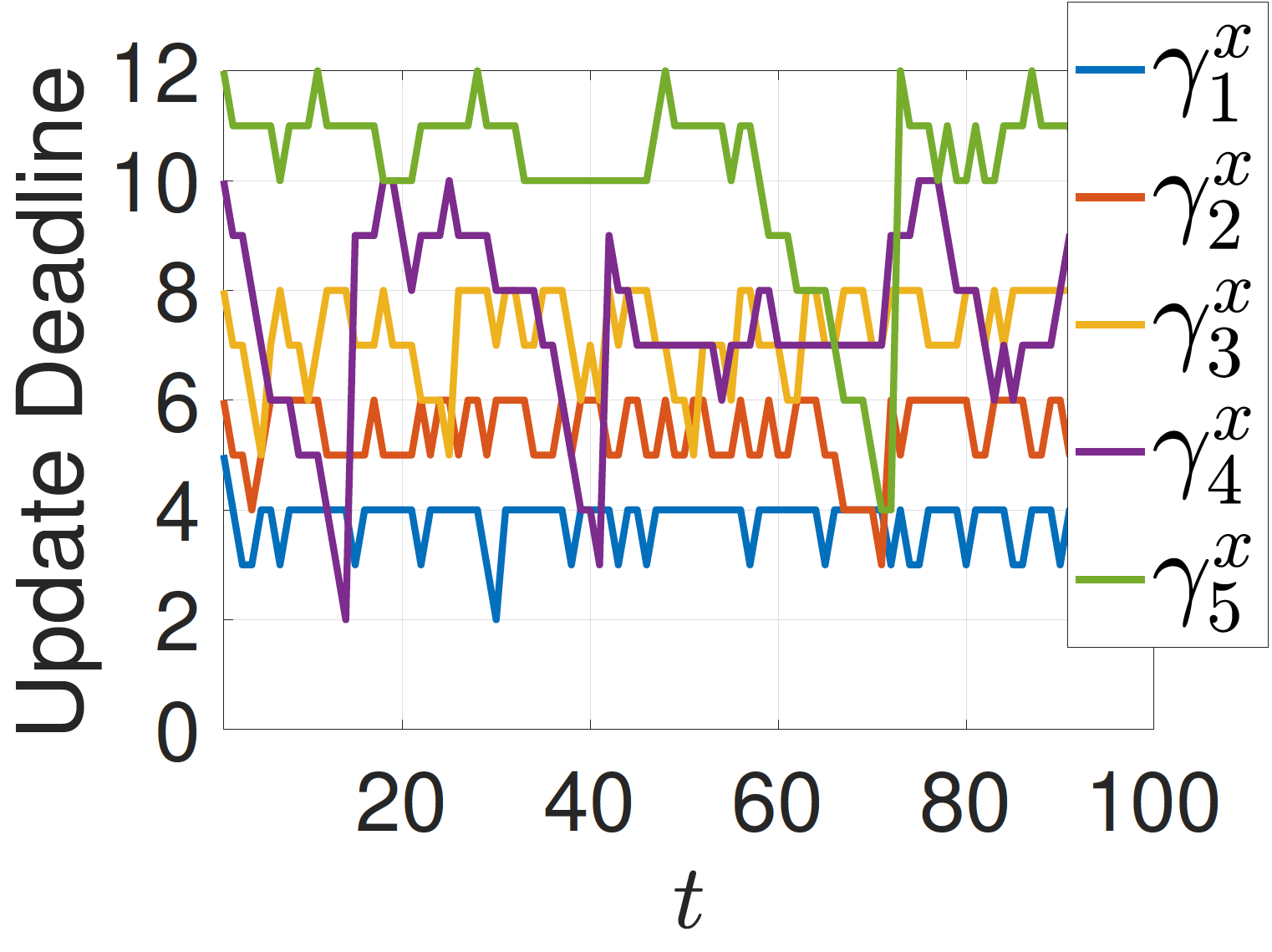}
				\caption{}
			\end{subfigure}
			\hfill
			\begin{subfigure}[b]{0.24\textwidth}
				\centering
				\includegraphics[width=\textwidth]{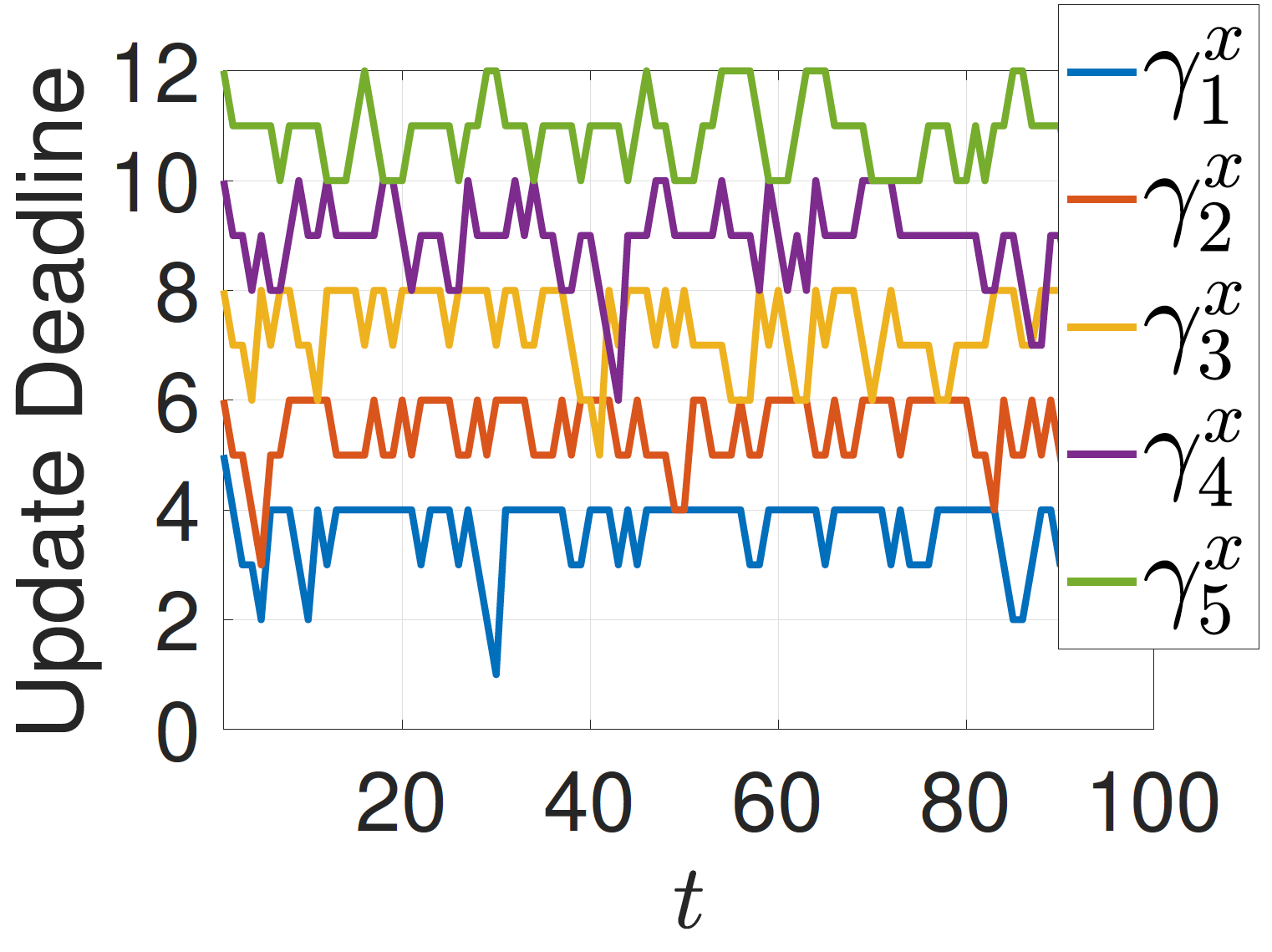}
				\caption{}
			\end{subfigure}
			\begin{subfigure}[b]{0.24\textwidth}
				\centering
				\includegraphics[width=\textwidth]{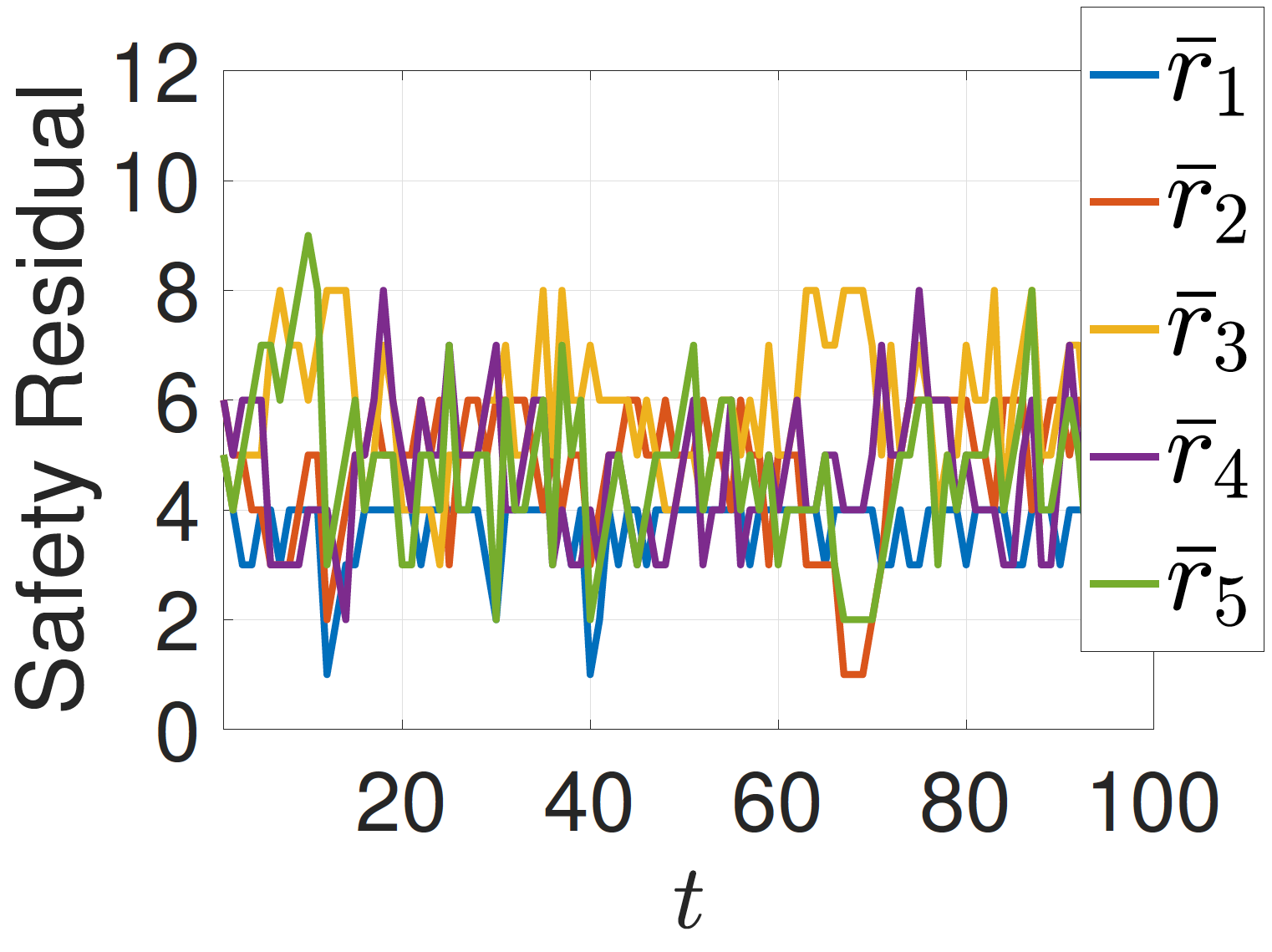}
				\caption{}
			\end{subfigure}
			\hfill
			\begin{subfigure}[b]{0.24\textwidth}
				\centering
				\includegraphics[width=\textwidth]{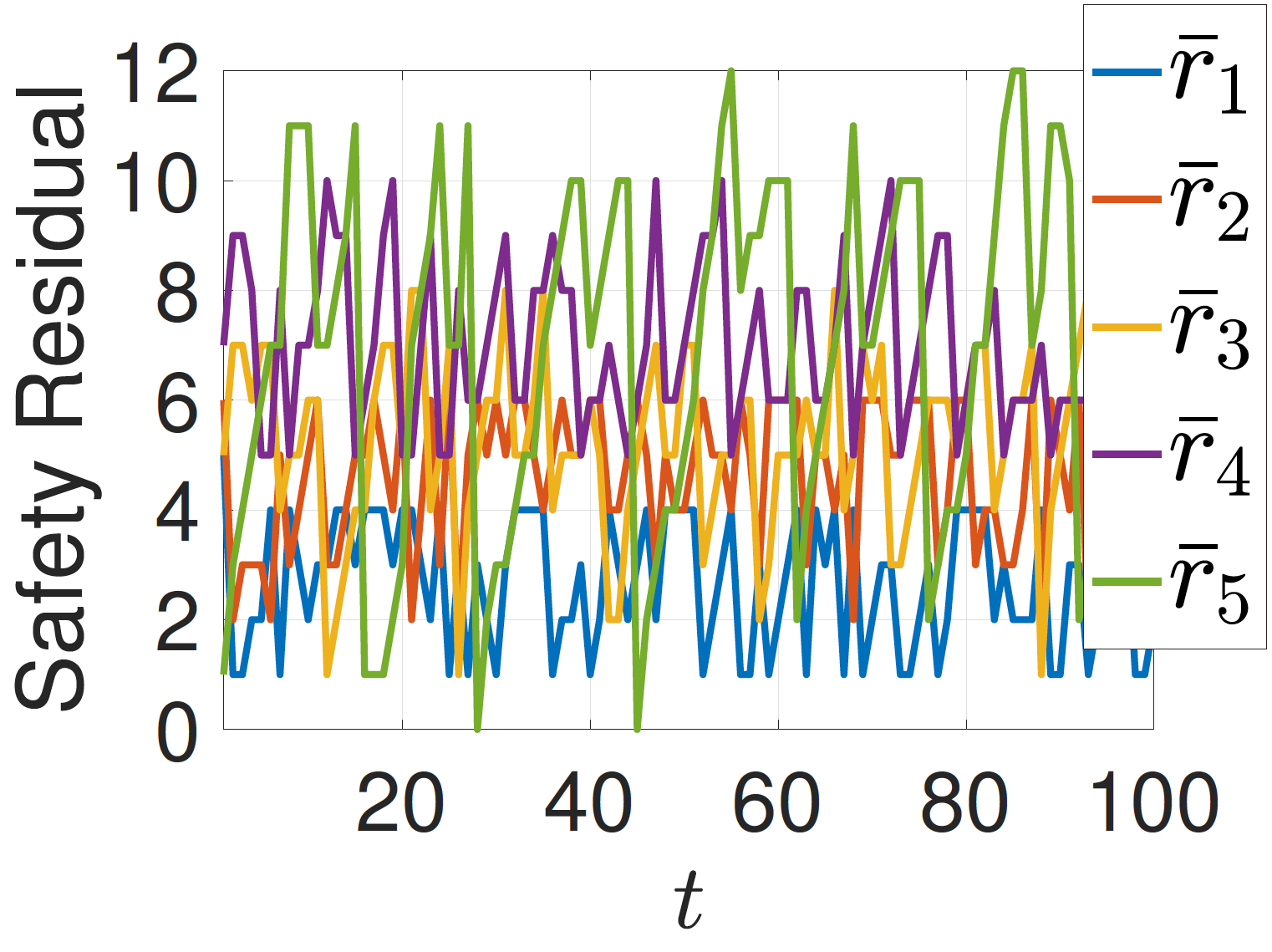}
				\caption{}
			\end{subfigure}
			\caption
			{ Example \ref{ex:ex_tracking2}: 
				Comparison between robust online schedule $\bar{\mathbf{C}}^*$, defined in \eqref{eq:Optimal_Online_Schedule_losses}, and the shifted schedule $\bar{\mathbf{C}}$, defined in \eqref{eq:schedule_with_losses}, that are given in the left and the right columns, respectively. The first row is the measurement update deadlines and the second one is the robust safety residuals defined in \eqref{eq:residuals2}.
			}
			\label{fig:vehicle_residual_deadlines}
		\end{figure}		
		
	\end{example}
	\section{Conclusions}
	\label{section:conclusion}
	In this paper we proposed strategies to guarantee that networked control systems are kept withing an assigned admissible set. We provide such guarantees by translating the control problem into a scheduling problem. To that end, we introduced PP and WSP, reviewed the state-of-the-art knowledge and refined some results on their schedulability. This allowed us to design offline schedules, i.e., schedules which can be applied to NCS regardless of the actual noise realization. In order to reduce conservatism, we proposed an online scheduling strategy which is based on a suitable shift of a pre-computed offline schedule. This allowed us to reduce conservatism while preserving robust positive invariance.
	
	
	Future research directions include designing control laws that maximize the safe time intervals; adopting a probabilistic packet losses model instead of the deterministic one; and considering systems with coupled dynamics or admissible sets.
	
	\bibliographystyle{IEEEtran}
	
	\bibliography{bibfile}

	\begin{IEEEbiography}[{\includegraphics[width=1in,height=1.25in,clip,keepaspectratio]{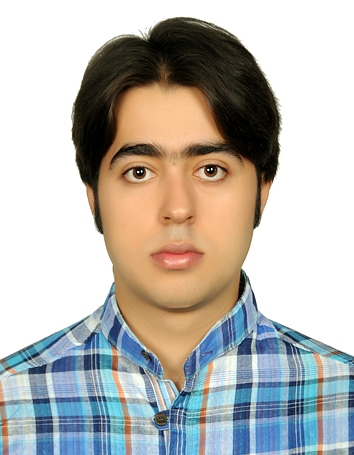}}]{Masoud Bahraini} received his B.Sc. in 2013 form Iran University of Science and Technology, Iran, and his M.Sc. in 2016 form University of Tehran, Iran. He is a Ph.D. student in Chalmers University of Technology since 2017. His research interests are networked control systems, constrained optimal control applied to autonomous and semi- autonomous mobile systems, and control design for nonlinear systems. 
	\end{IEEEbiography}
	\begin{IEEEbiography}[{\includegraphics[width=1in,height=1.25in,clip,keepaspectratio]{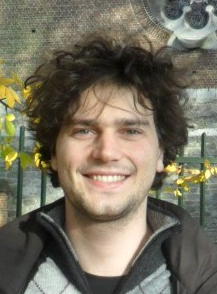}}]{Mario Zanon}
		received the Master's degree in Mechatronics from the University of Trento, and the Dipl\^{o}me D'Ing\'{e}nieur from the Ecole Centrale Paris, in 2010. After research stays at the KU Leuven, University of Bayreuth, Chalmers University, and the University of Freiburg he received the Ph.D. degree in Electrical Engineering from the KU Leuven in November 2015. He held a Post-Doc researcher position at Chalmers University until the end of 2017 and is now Assistant Professor at the IMT School for Advanced Studies Lucca. His research interests include numerical methods for optimization, economic MPC, reinforcement learning, and the optimal control and estimation of nonlinear dynamic systems, in particular for aerospace and automotive applications.
	\end{IEEEbiography}
	\begin{IEEEbiography}[{\includegraphics[width=1in,height=1.25in,clip,keepaspectratio]{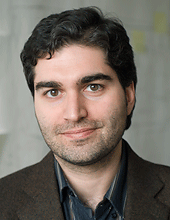}}]{Paolo Falcone}
		received his M.Sc. (Laurea degree) in 2003 from the University of Naples Federico II and his Ph.D. degree in Information Technology in 2007 from the University of Sannio, in Benevento, Italy. He is Professor at the Department of Electrical Engineering of the Chalmers University of Technology, Sweden and Associate Professor at the Engineering Department of Universit\`a di Modena e Reggio Emilia. His research focuses on constrained optimal control applied to autonomous and semi- autonomous mobile systems, cooperative driving and intelligent vehicles, in cooperation with the Swedish automotive industry, with a focus on autonomous
		driving, cooperative driving and vehicle dynamics control.
	\end{IEEEbiography}
	\begin{IEEEbiography}[{\includegraphics[width=1in,height=1.25in,clip,keepaspectratio]{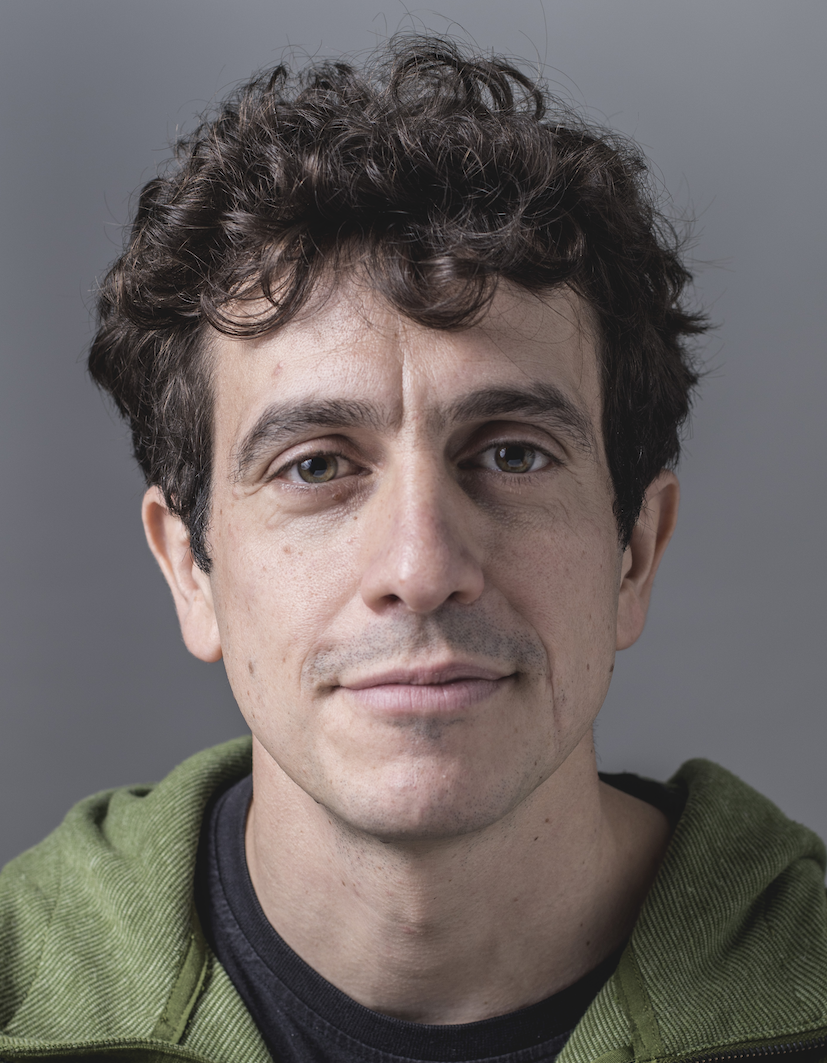}}]{Alessandro Colombo} received the Dipl\^ome D'Ing\'enieur from ENSTA in Paris in 2005, and the Ph.D. from Politecnico di Milano in 2009. He was Postdoctoral Associate at the Massachusetts Institute of Technology in 2010-2012, and is currently Associate Professor in the Department of Electronics, Information and Bioengineering at Politecnico di Milano. His research interests are in the analysis and control of discontinuous, hybrid systems, and networked systems.
	\end{IEEEbiography}
	
\end{document}